\newtheorem{assumption}{Assumption}
\newtheorem{definition}{Definition}
\newtheorem{lemma}{Lemma}
\newtheorem{proposition}{Proposition}
\newtheorem{theorem}{Theorem}
\newtheorem{algorithm}{Algorithm}
\newtheorem{remark}{Remark}
\newtheorem{corollary}{Corollary}
\def\ud{\mathrm d}
\def\e{\mathrm e}
\title{A model of strategic sustainable investment}
\author[De Angelis]{Tiziano De Angelis}
\address{School of Management and Economics, ESOMAS, University of Turin\\ and Collegio Carlo Alberto, Turin, Italy.} \email{tiziano.deangelis@unito.it}
\author[Graciani Rodrigues]{Caio C\'esar Graciani Rodrigues}
\address{Department of Modeling and Engineering Risk and Complexity,\\ Scuola Superiore Meridionale, Naples, Italy.} 
\email{c.graciani@ssmeridionale.it}%
\author[Tankov]{Peter Tankov}%
\address{CREST, ENSAE, Institute Polytechinique de Paris, Paris, France.}
\email{peter.tankov@ensae.fr}%
\begin{document}


\begin{abstract}

We study a problem of optimal irreversible investment and emission reduction formulated as a nonzero-sum dynamic game between an investor with environmental preferences and a firm. The game is set in continuous time on an infinite-time horizon. The firm generates profits with a stochastic dynamics and may spend part of its revenues towards emission reduction (e.g., renovating the infrastructure). The firm's objective is to maximize the discounted expectation of a function of its profits. The investor participates in the profits, may decide to invest to support the firm's production capacity and uses a profit function which accounts for both financial and environmental factors. Nash equilibria of the game are obtained via a system of variational inequalities. We formulate a general verification theorem for this system in a diffusive setup and construct an explicit solution in the zero-noise limit. Our explicit results and numerical approximations show that both the investor's and the firm's optimal actions are triggered by moving boundaries that increase with the total amount of emission abatement.

\medskip

\noindent {\bf Keywords:} Climate finance; impact investing; stochastic games; Nash equilibria; HJB equations.

\noindent {\bf JEL Classification:} C73, Q52.

\noindent{\bf MSC 2020 Classification:} 93E20, 91A15, 49N90, 65K15.

\end{abstract}

\maketitle

\section{INTRODUCTION}

As acknowledged in the 2015 Paris Agreement, making finance flows consistent with a pathway towards low greenhouse gas emissions is a key step towards decarbonizing the economy with the aim of limiting the global warming well below 2$^\circ$ C compared to pre-industrial levels. Sustainable investors, who care not only about the financial performance but also about the environmental performance of their assets, play a key role, as they finance the green companies and create incentives for brown companies to reduce their emissions. However, the interests of sustainable investors may be misaligned with those of the company management, especially if executive compensation schemes do not account for environmental aspects. The process of investment can thus be seen as a nonzero-sum game, where the investors provide capital to companies, aiming to maximize both environmental and financial performance, and company managers determine mitigation strategies, aiming to maximize financial performance, but taking into account the capital provision by the sustainable investors. 

In this paper, we develop a model for green investment and emission reduction, framing it as a dynamic game between a representative investor and a representative privately owned company. The investor continuously provides capital directly to the company, and the company determines its emission abatement strategy. The optimization criterion of the investor is concave increasing in both financial and environmental performance of the company and decreasing in the cost of capital provided to the company. Instead, the company maximizes the discounted expectation of its future financial performance. We formalize market equilibrium as the Nash equilibrium of this stochastic game, and characterize the value functions of the company and of the investor as the solution of a system of variational inequalities, via a verification theorem. 

We formulate a general verification theorem in a diffusive set-up. When the diffusion coefficient is set to zero the game becomes deterministic and we are able to produce an explicit solution of the variational problem and the unique equilibrium for the game, which is also explicit. In the general stochastic game we design a numerical algorithm to solve the problem using a variant of policy iteration. Remarkably, the structure of the equilibrium we obtain from the algorithm is qualitatively the same as the one for the equilibrium we obtain explicitly in the deterministic case. That shows that our explicit solution may be used as a proxy for the solution to the general problem.

Our results show that both the investor's and the firm's optimal actions are triggered by moving boundaries that increase with the total amount of emission abatement performed by the firm. More precisely, denoting by $X_t$ the firm's {production capacity}  at time $t\ge 0$ and by $R_t$ the total abatement performed up to time $t\ge 0$ we determine two functions $r\mapsto a(r)$ and $r\mapsto b(r)$, with $a(r)<b(r)$, that fully characterize the equilibrium strategies of the investor and of the firm, respectively. In particular, when the {production capacity} is high and thus the financial performance of the firm is very good, relatively to the current abatement level $R_t$ (i.e., for $X_t>b(R_t)$) neither the firm nor the investor act. However, when the current {production capacity} is more modest, the firm finds it convenient to face abatement costs in order to attract future investments.  This translates into the fact that when $X_t\le b(R_t)$ the firm invests in pollution abatement at the maximum allowed rate, shifting the dynamics of the pair $(X_t,R_t)$ closer to the investor's optimal investment boundary $a(R_t)$. In turn, the investor guarantees to the firm an optimal level of investment to keep the cashflow dynamics above the increasing threshold $t\mapsto a(R_t)$. This structure of the solution is new but it is in line with the qualitative findings of the literature on impact investing, which argues that sustainable investors achieve greatest impact when they allocate capital to small impactful companies with high growth potential rather than large established ones \cite{kolbel2020can}. 

When the firm's production capacity equals the investment threshold, our model shows that, in equilibrium, by abating emissions, the firm can improve its financial performance, because better environmental performance attracts new investment. This provides an alternative explanation for the positive relationship between the financial and environmental performance of firms in the presence of impact investors, discussed very recently by Green and Roth \cite{green2024allocation}, who show that firms can raise their social impact by improving their financial performance.

It is worth emphasizing that the structure of our equilibrium differs from the one typically found in the literature on nonzero-sum games of irreversible investment (see next section). In that literature, the players’ action regions are usually separated by an inaction region. For instance, in \cite{ferrari2019strategic}, the firm intervenes when profits are low, whereas the regulator acts when profits are high; between these two regions, neither player finds it optimal to act.

In contrast, our model is designed to capture the strategic interaction between impact investors and firms and relies on a different incentive structure. In particular, the investor’s objective depends on the environmental performance variable controlled by the firm. This interdependence alters the geometry of equilibrium. For sufficiently high profit levels, neither player intervenes, but for low profit values their action regions overlap, with the firm’s action set strictly containing that of the investor. As a consequence, the economic implications of our model differ from the standard predictions, highlighting mechanisms that are new -- or at least uncommon -- in the literature on nonzero-sum investment games.

Our mathematical contribution in the deterministic economy is also novel. We establish a correspondence between the unique equilibrium in our game and a solution to a system of first-order, fully degenerate, semilinear, partial differential equations in the domain $(0,\infty)^2$ (cf.\ Corollary \ref{cor:strong}). Standard PDE methods do not seem to be directly applicable in our setup to determine existence (and uniqueness) of a solution to the PDE and so we rely on different methods rooted in the theory of stochastic control. Besides, we go beyond existence of a solution by explicitly characterizing the free boundaries for both the investor and the firm (up to numerical root-finding). That enables explicit construction of optimal strategies for both players. 

\subsection{Review of literature} From the applied point of view, our paper contributes to the burgeoning financial literature on sustainable investment.  Sustainable investors are motivated either by the non-pecuniary benefits they get from holding green stocks (warm glow), as in, e.g., \cite{pastor2021sustainable,pedersen2021responsible} or by their concern for the provision of public goods (e.g., climate change mitigation) by the companies they invest in as, e.g., in \cite{oehmke2023theory,chowdhry2019investing}. This concern for public good provision may be driven by the genuine concern of investors for the environment, but also by the purely financial concern that companies with bad environmental performance are exposed to higher risk, as demonstrated, e.g., in \cite{bolton2021investors}. Our paper belongs to the second category and we therefore contribute to the impact investing literature \cite{oehmke2023theory,landier2024socially,broccardo2022exit,green2024allocation}. Indeed, our investor has the profile of a socially responsible, or impact investor, ready to trade-off financial performance against reductions in social costs and able to influence the abatement decisions of the firm \cite{oehmke2023theory}. Similarly to \cite{oehmke2023theory}, we take the corporate finance point of view and study the interaction between a privately-owned firm and its pool of investors. 
 Unlike the above reference and most other papers on the topic, and similarly to \cite{de2023climate}, we consider a dynamic, continuous-time model and frame our investment problem as a dynamic game between the firm and its investors. The dynamic formulation allows to shed light on non-trivial strategies for the firm and the investor, not previously found in the literature, namely the asynchronous sequence of actions between the investor and the firm, based on action thresholds that we identify.
Also, while the greater part of papers in the sustainable investment literature, including \cite{de2023climate}, focus on publicly owned companies we do not set up a financial market, and consider a representative investor who provides capital directly to a representative privately owned company. 

To the extent that our paper aims to determine optimal abatement trajectories, it bears some relation to the integrated assessment modeling literature, starting from the seminal DICE model of Nordhaus \cite{nordhaus1992optimal} to more recent developments \cite{crost2014optimal,jensen2014optimal,hambel2021optimal}. Beyond its microeconomic focus, our framework differs from this literature in that we do not attempt to model climate damages but measure the firm's environmental performance directly in terms of its abatement spending. Optimal abatement dynamics have also been studied in the context of emission trading schemes \cite{carmona2010market,aid2023optimal}, where the interaction typically involves a large number of emitting firms and a regulator determining the optimal supply of emissions allowances in the presence of a carbon market. By contrast, our paper examines a decentralized mechanism: we analyze how an impact investor can incentivize a firm to reduce emissions through purely financial market instruments, without relying on regulatory intervention. 

We also contribute to the literature on irreversible investment. The vast majority of papers in this domain adopt the point of view of one firm (or several firms) which looks for an optimal investment level to maximize uncertain future profits \cite{pindyck1991irreversibility,dixit1995irreversible,kogan2001equilibrium,decamps2006irreversible,guo2005irreversible}. The firm is considered as a single economic entity and it is assumed that the interests of the investors are aligned with those of the managers of the firm. However, in the presence of environmental concerns this may no longer be the case: as evidenced by the literature on shareholder activism \cite{barko2022shareholder,heeb2023investors,hart2017companies}, the investors may care for the environmental impact of the company much more than the managers. We therefore frame our irreversible investment problem as a game between a representative company and a representative investor, who pursue different objectives. 

From the more theoretical viewpoint, we contribute to the literature on nonzero-sum dynamic games with singular controls \cite{de2018stochastic,DeA23,kwon2015game}. Those games arise in, e.g., irreversible investment problems \cite{back2009open} and constitute the class of two-player games that is perhaps the most difficult to analyze, with few contributions going beyond a verification theorem or numerical illustrations \cite{aid2020nonzero,ferrari2019strategic}. The reference \cite{ferrari2019strategic}, which describes a game of pollution control between the government and a representative firm is, perhaps, closest in spirit to our work. However, in \cite{ferrari2019strategic} the emissions are directly linked to the production process, the only decision the firm takes is to expand production, and the state variable is one-dimensional. Instead we consider separately the financial and the environmental performance of the firm, leading to a state process with two components, both of which are controlled by the agents. Moreover, in our model the investor uses singular controls and the firm uses classical controls in order to guarantee well-posedness of the game (cf.\ Remarks \ref{rem:wellposed} and \ref{rem:wellposed2}). 

\subsection{Structure of the paper} 
In Section \ref{sec:model} we formulate our model and introduce the notion of Nash equilibrium in our game. Subsections \ref{sec:main1} and \ref{sec:main2} contain the statements of our main results (Theorems \ref{thm:main0} and \ref{thm:verif}). In Section \ref{sec:verif} we prove the general verification theorem (Theorem \ref{thm:verif}) with some ancillary considerations in Corollary \ref{cor:verif} and Remark \ref{rem:verif}. In Section \ref{sec:deterministic} we construct an explicit equilibrium in our game when there is no diffusive component. That proves Theorem \ref{thm:main0}. The construction relies partially upon the use of Theorem \ref{thm:verif} and partially upon the use of Corollary \ref{cor:verif} and Remark \ref{rem:verif}. In particular, the investor's optimal strategy and equilibrium payoff are obtained explicitly from concavity / first order conditions (without using PDE arguments), while the firm's payoff and optimal strategy are characterized via a Hamilton-Jacobi-Bellman equation. After we obtain both the firm's and the investor's equilibrium payoffs, we are able to show that also the investor's payoff solves an Hamilton-Jacobi-Bellman equation in a slightly relaxed form (cf.\ Section \ref{sec:strong}). In Section \ref{sec:equilibrium} we devise a numerical algorithm for the solution of the variational problem associated to our game as presented in the verification theorem. In Section \ref{sec:numerics} we illustrate our numerical findings for the general stochastic game. A summary of the paper and some conclusions are provided in Section \ref{sec:concl}. In Appendix \ref{appendix.sec} we provide practical
guidance on how each parameter of our model could be mapped to observable quantities.

\section{Model formulation {and main results}\label{sec:model}}

\subsection{{ Model formulation}} 
Let $(\Omega, \mathcal{F}, \mathbb{F}, \mathbb{P})$ with $\mathbb F=(\mathcal{F}_t)_{t\ge 0}$ be a filtered probability space. Let $(B_t)_{t \geq 0}$ be a one-dimensional Brownian motion and let $\mathbb{E}[\cdot]$ denote the expectation under $\mathbb P$.
We use the standard notations $\mathbb{R}\triangleq (-\infty,\infty)$ and $\mathbb{R}_+ \triangleq (0,\infty)$ and set $(z)^+=[z]^+ \triangleq \max(0,z)$. Finally, given a set $A\subseteq[0,\infty)\times\mathbb R_+$ we denote by $A^c$ its complement, by $\overline A$ its closure relative to $[0,\infty)\times\mathbb R_+$ and by $\partial A=\overline A\cap\overline{A^c}$ its boundary.

In our model, we consider a representative firm, which
produces a single good {and has a production capacity} denoted as $(X_t)_{t\ge 0}$. We assume that the firm generates a continuous stream of profits, which is proportional to its production capacity (this is consistent with the widely used AK-model in economics, where the output is equal to the product of capital and a constant productivity factor). Without loss of generality, we assume the proportionality factor to be equal to one, that is, we shall interchangeably refer to $X_t$ both as the production capacity and the instantaneous profit of the firm.

The firm is privately owned by a pool of investors, described in our model as a single representative investor, who can inject funds directly into the company, increasing its production capacity and hence, its profit stream.
The investor's decisions are guided by financial and environmental considerations. {The financial performance of the firm is measured by a profit function applied to the production capacity process $(X_t)_{t\ge 0}$. 
The environmental performance is measured by} the total expenditure $(R_t)_{t\ge 0}$, allocated by the firm towards emission reduction activities. While the actual emission reduction is not a linear function of abatement expenditure and typically has a concave profile determined by the marginal abatement cost (MAC) curve \cite{kesicki2012marginal}, the MAC curves are company-specific, difficult to estimate and have a number of other methodological drawbacks \cite{kesicki2012marginal}. The abatement expenditure, on the other hand, is less directly related to environmental impact of investments, but is easy to define, measure and compare to other financial quantities in the model. 

The problem is formulated as a stochastic differential game of optimal controls. In the absence of any intervention from either the firm or the investor, the dynamics of the {production capacity} process $(X_t)_{t\ge 0}$ is given by 
\begin{equation}
X^0_t=X_0+\int_0^t \mu X^0_s\mathrm{d} s+\int_0^t \sigma X^0_s \ud B_s,\quad X_0=x>0,
\end{equation}
where $\mu\in \mathbb{R}$ and $\sigma\in \mathbb{R}_+$ are model parameters. The investor picks a control from the class
\begin{equation}
\mathcal A_I\triangleq\left\{\nu:
\begin{array}{c}
(\nu_t)_{t\ge 0}\ \text{is c\`adl\`ag, nondecreasing, $\mathbb F$-adapted,}\\
\text{with $\nu_0\ge 0$ and $\mathbb E\Big[\int_{[0,\infty)} \e^{-\rho t} \ud\nu_t\Big]<\infty$}
\end{array}
\right\}.
\end{equation}

For $\nu\in\mathcal A_I$, the random variable $\nu_t$ corresponds to  the total amount of investment that the investor has provided to the firm over the time interval $[0,t]$, where $\nu_0$ corresponds to the initial investment.
We notice that investment may arrive with lump-sum payments, i.e., it may be $\Delta \nu_t\triangleq \nu_t-\nu_{t-}>0$ for $t>0$, and there is no cap on the investment rate (this corresponds to the situation of so-called {\em singular controls}). Any admissible control $\nu\in\mathcal A_I$ admits a decomposition $\nu_t=\nu^c_t+\sum_{s\le t}\Delta\nu_s$ in a continuous part plus a sum of jumps. 

The dynamics for the total expenditure allocated by the firm towards emission reduction
reads:
\begin{equation}\label{eqR}
R^\eta_t=R_0+\int_0^t \eta_s\ud s,\quad R_0=r\ge 0,
\end{equation}
where the process $(\eta_t)_{t\ge 0}$ is the control chosen by the firm and it belongs to the class, for some $\eta_{\mathrm{max}}>0$,
\begin{equation}
\begin{aligned}
\mathcal A_F\triangleq\big\{\eta:&\,(\eta_t)_{t\ge 0}\ \text{is progressively measurable, with $0\le \eta_t\le \eta_{\mathrm{max}}$}\big\}.
\end{aligned}
\end{equation}

Any pair $(\nu_t,\eta_t)_{t\ge 0}$ describes the investment and emission reduction policies of the two agents. For a fixed choice of $(\nu_t,\eta_t)_{t\ge 0}$, the dynamics of the production capacity reads:
\begin{equation}\label{eqX}
X_t^{\nu,\eta} = X_0+\int_0^t \mu X_s^{\nu,\eta} \ud s + \int_0^t \sigma X_s^{\nu,\eta} \ud B_s + \nu_t - \int_0^t\eta_s \ud s, 
\end{equation}
with initial condition $X_0 = x$ before a possible lump-sum investment $\nu_0$ at time zero. Investment increases the production capacity of the company whereas abatement decreases it: to reduce emissions, the company switches to a greener but costlier technology or modifies its energy mix by using more expensive clean energy. Note that for simplicity and without loss of generality, both investment and abatement are measured in units of production capacity.
By an application of It\^o's formula it is readily verified that the dynamics of $X^{\nu,\eta}$ can be written more explicitly as 
\begin{equation}\label{eq:Xexpl}
X_t^{\nu,\eta}=X^0_t\Big(1+\int_{[0,t]} \frac{1}{X^0_s}\big(\ud \nu_s-\eta_s\ud s\big)\Big).
\end{equation}

The firm's optimization criterion is given in terms of expected discounted future { profits}. Mathematically we express it as
\begin{equation}\label{eqvaluew}
\mathcal{J}^F_{r,x}(\eta,\nu) \triangleq \mathbb{E}_{r,x} \left[ \int_0^\infty \e^{- \bar\rho t} \pi(X_t^{\nu,\eta})\ud t \right],
\end{equation}
where $\bar\rho>0$ is the discount rate of the firm, $\mathbb{E}_{r,x}$ stands for the conditional expectation given $(R_0,X_{0})=(r,x)$ and $\pi:[0,\infty)\to [0,\infty)$ is a continuous profit function,  which we interpret as the compensation plan of the company's management. Its specific form will be further detailed below.

The investor's optimization criterion is also expressed in terms of discounted future profits, but these are computed differently. To take into account the investor's preference for green assets, we assume that their profit function depends on both the firm's production capacity and its environmental performance: the profits from a green firm are valued higher than the profits from a brown firm. 
This is in line with the literature on sustainable investment, where the utility of investors also depends both on their wealth and on the environmental performance of the firms they invest in \cite{pastor2021sustainable}.  We write the investor's optimization functional as follows:
\begin{equation}\label{eqvaluev}
\mathcal{J}^I_{r,x}(\eta,\nu) \triangleq \mathbb{E}_{r,x} \left[ \int_0^\infty \e^{- \rho t} \Pi(R^\eta_t,X_t^{\nu,\eta})\ud t- \alpha \int_{[0,\infty)} \e^{- \rho t} \ud \nu_t \right],
\end{equation}
where $\rho>0$ is the investor's discount rate, $\Pi:[0,\infty)^2\to [0,\infty)$ is a continuous function which is further specified below and $\alpha>0$ is a weighting factor, which quantifies the cost for the investor of increasing the firm's production capacity by one unit. The function $\Pi$ measures, in monetary units, the profits of the investor adjusted for the environmental performance of the firm. The parameter $\alpha$ can also quantify the investment frictions such as the fund manager's fees if the investment is managed through a fund.

For a fixed investment strategy $\nu\in\mathcal A_I$ selected by the investor, the firm aims to maximize its profits. Therefore, the firm's problem is formulated as
\begin{equation}
    \bar w(r,x;\nu) \triangleq \sup_{\eta\in\mathcal A_F} \mathcal{J}^F_{r,x}(\eta,\nu).
\end{equation}
Conversely, when the firm selects an emission reduction strategy $\eta\in\mathcal A_F$, the investor's objective is to maximize the inter-temporal optimization functional \eqref{eqvaluev}. Therefore the investor's problem reads
\begin{equation}\label{eq:equilv}
    \bar v(r,x;\eta) \triangleq \sup_{\nu\in\mathcal A_I} \mathcal{J}^I_{r,x}(\eta,\nu).
\end{equation}
We are interested in obtaining a pair of strategies $(\eta^{\ast}, \nu^{\ast})$, that constitutes a Nash equilibrium for the game according to the following definition:
\begin{definition}\label{defNash}
A pair $(\eta^{\ast}, \nu^{\ast})\in\mathcal A_F\times\mathcal A_I$ is a Nash equilibrium for the game starting at $(r,x)$ if
\begin{equation*}
\begin{aligned}
\mathcal{J}^F_{r,x}(\eta^{\ast},\nu^{\ast})\ge \mathcal{J}^F_{r,x}(\eta,\nu^{\ast})\quad\text{and}\quad
\mathcal{J}^I_{r,x}(\eta^{\ast},\nu^{\ast})\ge \mathcal{J}^I_{r,x}(\eta^{\ast},\nu),
\end{aligned}
\end{equation*}
for any other pair $(\eta,\nu)\in \mathcal A_F\times\mathcal A_I$. Then we say that $w(r,x)\triangleq\bar w(r,x;\nu^{\ast})$ and $v(r,x)\triangleq\bar v(r,x;\eta^{\ast})$ are the equilibrium payoffs (or values) for the firm and the investor, respectively.

We say that an equilibrium pair $(\eta^*,\nu^*)$ is unique\footnote{Equivalently we may also say that the equilibrium is unique.} if for any other equilibrium pair $(\nu,\eta)$ we have $\nu_t=\nu^*_t$ for all $t\ge 0$ and $\eta_t=\eta^*_t$ for a.e.\ $t\ge 0$.
\end{definition}

\begin{remark} To simplify the model, we assumed that the firm's manager only cares about the financial performance, while the investor takes both the financial and environmental performance into account. In practice, to align the interests of investors with those of company management, executive compensation packages increasingly involve ESG (environmental, social and governance) performance metrics. In a study based on a large international sample of companies \cite{cohen2023executive}, Cohen et al.~find that the proportion of companies using ESG metrics in executive compensation grew from 1\% to 38\% between 2011 and 2021. However, only a small fraction of firms use carbon emissions as performance criterion: in 2020 this was the case for 13\% of all firms using ESG metrics; the most widely used metrics were related to safety/security and employee satisfaction (both were used by 51\% of all firms using ESG metrics). We therefore consider that emission reduction goals of investors often remain misaligned with those of the management.

On the other hand, not all investors are green and firms can also be financed by brown capital, which does not require them to decarbonize their production processes. The proportion of green investors is difficult to estimate and is a controversial issue in sustainable finance literature: Berk and Van Binsbergen \cite{berk2025impact} estimate this share to 1\% while Cheng et al.~\cite{cheng2024impact} use the estimate of 20\% in their calibration exercise. In any case, it is clear that the presence of brown investors in the market hinders the impact of green investors on the cost of capital and the environmental performance of public companies  \cite{cheng2024impact}. The empirical evidence for investors' impact is strongest for young financially constrained companies in underdeveloped markets, which are usually privately owned \cite{heeb2020}. For this reason, we focus on a privately-owned company in this paper.
\end{remark}

\begin{remark}\label{rem:wellposed}
We notice that the admissible control classes for the investor and the firm are different. The firm's maximum investment rate is capped by $\eta_{\mathrm{max}}>0$ whereas the investor's one is uncapped. There are two reasons for this choice. From a modeling perspective, $(R^\eta_t)_{t\ge 0}$  may be interpreted as a continuous flow of spending towards emission reduction/compensation, for example, by purchasing bio-fuel instead of fossil-fuel, green energy certificates or carbon offsets.  Of course one could also interpret $(R^\eta_t)_{t\ge 0}$ as spending on large emission reduction projects, such as refurbishing a steel-making plant to use the electric arc furnace technology instead of a more carbon-intensive blast furnace. In the latter case, perhaps, discontinuous $(R^\eta_t)_{t\ge 0}$ would make more sense. However, from a mathematical perspective, it is not clear that our stochastic game is well-posed when both the firm and the investor are allowed to use singular controls, even under rather innocuous specifications of functions $\pi$ and $\Pi$. We are going to illustrate this issue with an example during the solution of the zero-noise limit of our game (cf.\ Remark \ref{rem:wellposed2}). 
\end{remark}

\begin{remark}
The definition of equilibrium is formulated on the product space $\mathcal A_F\times\mathcal A_I$ of pairs $(\eta,\nu)$. A note of caution is necessary here, because our game is dynamic and we expect equilibrium controls $(\eta^*,\nu^*)$ in feedback form (i.e., as functionals of the path of the controlled dynamics). It is not obvious that any choice of $(\eta,\nu)$ in feedback form would yield a well-posed dynamics of the pair $(X^{\nu,\eta},R^\eta)$ (i.e., a unique strong solution of the resulting SDE). This is a common feature in stochastic games in continuous time and we tacitly adopt the convention that a pair $(\eta,\nu)$ which does not yield a well-posed dynamics is associated with payoffs $\mathcal{J}^I_{r,x}(\eta,\nu)=\mathcal{J}^F_{r,x}(\eta,\nu)=-\infty$. Then, additionally to the conditions in Definition \ref{defNash}, a pair $(\eta,\nu)$ is an equilibrium if it also yields a well-posed dynamics of the system. For the purpose of this paper we do not need to dig deeper in this direction but we point the interested reader to \cite[Sec.\ 2]{possamai2020zero} for an extended discussion in the context of zero-sum games and to \cite[Sec.\ 3]{DeA23} for rigorous game-theoretic formulations of nonzero-sum stochastic games of singular control (which our setting is a special case of).
\end{remark}

{We proceed to illustrate our main results in the next subsection. We first state the form of an explicit equilibrium in the zero-noise limit and then we state a general verification theorem in the fully diffusive setup. The proofs of both statements will be presented in the subsequent sections.}

\subsection{{Main results, part 1: zero noise limit}}\label{sec:main1}
In the special case of a deterministic dynamics of the firm's profits (i.e., in the zero-noise limit $\sigma=0$), with decreasing profit stream (i.e., {$\mu\le 0$}) we are able to obtain the unique equilibrium in explicit form. The assumption $\mu\le 0$ is in line with standard economic modeling, where the profitability of a firm, in the absence of investment, must decrease over time (e.g., due to ageing of manufacturing machines, etc.). Instead the assumption $\sigma=0$ is harder to justify from an economic perspective but we can think of equilibria in this setting as a zero-order approximation of equilibria in stochastic setups with small noise. Indeed, we will observe numerically in the subsequent sections that equilibria in the stochastic framework maintain the same qualitative structure as the one we obtain here. 

The benchmark example that we have in mind is when firm's profit function is linear (i.e., the firm is risk-neutral) and the investor adopts a Cobb-Douglas profit function. In that case we have 
\begin{equation}\label{fucfu}
\pi(x) = x \quad \mbox{and} \quad \Pi(r,x) = x^\beta r^\gamma,\quad \text{for $(r,x)\in[0,\infty)\times\mathbb{R}_+$},
\end{equation}
with $\beta,\gamma \in (0,1)$ and $\gamma+\beta\geq 1$. The latter condition ensures that item (iii) in the assumption below holds, because in this case the function $a(r)$ specified therein is proportional to $r^{\frac{\gamma}{1-\beta}}$.
However, in the interest of mathematical generality we solve the problem under the following assumption, which is enforced throughout the section.
\begin{assumption}\label{ass:profit}
We have $\sigma=0$, $\mu\le 0$ and the following properties hold:
\begin{itemize}
\item[(i)] The function $\Pi$ is non-decreasing in both variables; $\Pi(r,\cdot)$ is strictly concave, $\Pi_x$ and $\Pi_{xx}$ exist and are continuous on $(0,\infty)^2$,
with derivative $\Pi_x(r,\cdot)$ satisfying the Inada conditions
$$
\lim_{x\to 0}\Pi_x(r,x) = +\infty,\qquad \lim_{x\to +\infty}\Pi_x(r,x)=0. 
$$
Therefore $\Pi_x(r,\cdot)$ admits an inverse $G(r,z)\triangleq(\Pi_x(r,\cdot))^{-1}(z)$. For $\delta\triangleq \rho-\mu\ge 0$ we assume that the mapping $r\mapsto a(r)\triangleq G(r,\alpha\delta)$ is strictly increasing and continuously differentiable on $(0,\infty)$, with 
\[
\int_0^\infty \e^{-\rho t}a(\eta_{\mathrm{max}}t)\ud t<\infty.
\]

\item[(ii)] The function $\pi$ is continuously differentiable and strictly increasing on $(0,\infty)$; the derivative $\dot \pi$ is non-decreasing on $(0,\infty)$; we extend $\pi$ to $(-\infty,0]$ as $\pi(x)=\pi([x]^+)$; finally,
\[
\int_0^\infty \e^{-\bar \rho t} \pi(a(\eta_{\max} t))\ud t<\infty.
\]

\item[(iii)] When $\mu<0$, the mapping $r\mapsto \dot\pi\big(a(r)\big) \dot a(r)$ is non-decreasing.
\end{itemize}
\end{assumption}

\begin{remark}
Assumption \ref{ass:profit}-(ii) implies, in particular, that the company's profit function $\pi$ is convex. This aligns with the interpretation of $\pi$ as the compensation package of the company's management, which may include convex features such as stock options. Research on CEO compensation packages \cite{ross2004compensation,hemmer1999introducing} suggests that convex compensation schemes may mitigate excessive risk-avoiding behavior of firm's managers. Furthermore, the authors of 
\cite{dittmann2010sticks} find that actual CEO contracts tend to be convex and develop a principal-agent model demonstrating that convex compensation schemes are optimal for medium and high performance outcomes. 
\end{remark}

For the statement of the next theorem we introduce some basic notation. For $\eta\in\mathcal A_F$ we let $\tau(\eta)=\inf\{t\ge 0: X^{0,\eta}_t\le a(R^\eta_t)\}$ and $\nu^a\in\mathcal A_I$ be defined as $\nu^a_t=0$ for $t\in[0,\tau(\eta)]$ and 
\begin{equation}\label{eq:nua0}
\nu^a_t\triangleq a(R^\eta_t)+R^\eta_t-a(R^\eta_{\tau(\eta)})+R^\eta_{\tau(\eta)}-\int_{\tau(\eta)}^t \mu a(R^\eta_s) ds ,\quad \text{for $t\ge \tau(\eta)$},
\end{equation}
where $a(r)=G(r,\alpha\delta)$ is from (i) in Assumption \ref{ass:profit}. Notice that $\nu^a$ also depends on the control $\eta$, via $\tau(\eta)$ and the dynamics of $R^\eta$. In the statement of the theorem below, $\nu^a$ is intended as in \eqref{eq:nua0} with $\tau(\eta)$ and $R^\eta$ arising from the control chosen by the firm in equilibrium. The required integrability of the process $\nu^a$ will be verified in the proof of the theorem. 

\begin{theorem}[{\bf Equilibrium in the zero noise limit}]\label{thm:main0} Let Assumption \ref{ass:profit} hold.
\begin{itemize}
\item[(i)] If $\mu=0$, the unique equilibrium pair is given by
\begin{equation}
(\eta^*,\nu^*)=\left\{
\begin{array}{ll}
(0,0),& \text{for $x>b(r)$},\\
(\eta_{\mathrm{max}},\nu^a), & \text{for $a(r)\le x\le b(r)$},
\end{array}
\right.
\end{equation}
where $b\in C([0,\infty))\cap C^1(\mathbb{R}_+)$ can be uniquely determined (cf.\ Lemma \ref{bdef.lm}), it is increasing and such that $b(r)>a(r)$ for all $r\in[0,\infty)$. 

\item[(ii)] If $\mu<0$, the unique equilibrium pair is given by 
\[
(\eta^*,\nu^*)=(\eta_{\mathrm{max}}1_{\{t\ge \tau_b\}},\nu^a)
\]
with $\tau_b=\inf\{t\ge 0:x\e^{\mu t}\le b(r)\}$, where $b\in C([0,\infty))\cap C^1(\mathbb{R}_+)$ can be uniquely determined (cf.\ Proposition \ref{Proposition_funcb}), it is increasing and such that $b(r)>a(r)$ for all $r\in[0,\infty)$.  
\end{itemize}
\end{theorem}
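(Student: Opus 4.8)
\emph{Strategy.} The plan is to verify the two inequalities of Definition \ref{defNash} for the proposed pair, treating the investor's and the firm's sub-problems by different methods and then assembling them (and cross-checking against the variational system) through Theorem \ref{thm:verif}, Corollary \ref{cor:verif} and Remark \ref{rem:verif}. Because $\sigma=0$ the dynamics are deterministic and ``It\^o's formula'' is just the chain rule along the controlled trajectories. The one computation used throughout is the representation, valid for $\sigma=0$,
\[
X^{\nu,\eta}_t=\e^{\mu t}\Big(x+\int_{[0,t]}\e^{-\mu s}\,(\ud\nu_s-\eta_s\,\ud s)\Big),
\]
so that changing $\nu$ by $\ud\xi_s$ at time $s$ shifts $X_t$ by $\e^{\mu(t-s)}\ud\xi_s$ for all $t\ge s$; recall $\delta=\rho-\mu>0$, and that $\nu^a$ is the functional of the firm's control described before the statement, so a deviation $\eta$ of the firm is met by the investor's feedback rule $\nu^a$.

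\emph{Investor's best response.} Fix an arbitrary firm control $\eta\in\mathcal A_F$ (later $\eta=\eta^*$), whence $R^\eta$ is fixed; write $X^\ast:=X^{\nu^a,\eta}$, so by construction $X^\ast_t\ge a(R^\eta_t)$ for every $t$, with equality on $[\tau(\eta),\infty)$, and $\nu^a$ increases only when $X^\ast=a(R^\eta)$. Given any $\nu\in\mathcal A_I$ put $\xi=\nu-\nu^a$. Concavity of $\Pi(r,\cdot)$ gives $\Pi(R^\eta_t,X^\nu_t)-\Pi(R^\eta_t,X^\ast_t)\le\Pi_x(R^\eta_t,X^\ast_t)(X^\nu_t-X^\ast_t)$, and $X^\nu_t-X^\ast_t=\e^{\mu t}\int_{[0,t]}\e^{-\mu s}\ud\xi_s$; inserting this into $\mathcal J^I_{r,x}(\eta,\nu)-\mathcal J^I_{r,x}(\eta,\nu^a)$ and applying Fubini yields
\[
\mathcal J^I_{r,x}(\eta,\nu)-\mathcal J^I_{r,x}(\eta,\nu^a)\le\int_{[0,\infty)}\e^{-\rho s}\big(\Phi(s)-\alpha\big)\ud\xi_s,\qquad
\Phi(s):=\int_0^\infty\e^{-\delta u}\,\Pi_x\big(R^\eta_{s+u},X^\ast_{s+u}\big)\ud u.
\]
The identity $a(r)=G(r,\alpha\delta)$, i.e. $\Pi_x(r,a(r))=\alpha\delta$, is precisely what forces $\Phi(s)=\alpha\delta\int_0^\infty\e^{-\delta u}\ud u=\alpha$ whenever $X^\ast$ is on the boundary at time $s$ (then $X^\ast_{s+u}=a(R^\eta_{s+u})$ for all $u\ge0$), while $\Phi(s)\le\alpha$ in general because $\Pi_x(r,\cdot)$ is decreasing and $X^\ast\ge a(R^\eta)$. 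Since $\ud\xi_s<0$ implies $\ud\nu^a_s>0$, hence $X^\ast_s=a(R^\eta_s)$ and $\Phi(s)=\alpha$, while $\Phi(s)-\alpha\le0$ elsewhere, the right-hand side is $\le0$, so $\nu^a$ is optimal; admissibility $\nu^a\in\mathcal A_I$ follows from $R^\eta_t\le r+\eta_{\max}t$ and the integrability in Assumption \ref{ass:profit}(i) after integrating the $\dot a$-term by parts. This gives the investor's inequality; no PDE is used (this is the role of Corollary \ref{cor:verif} and Remark \ref{rem:verif}).

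\emph{Firm's best response.} On $\{x=a(r)\}$ the investor keeps the state on the curve while the firm moves along it at rate $\eta\le\eta_{\max}$ at no cost to itself; since $\pi\circ a$ is strictly increasing, $\eta\equiv\eta_{\max}$ is optimal and the firm's value there equals $\widehat w(r):=\int_0^\infty\e^{-\bar\rho t}\pi(a(r+\eta_{\max}t))\ud t$, finite by Assumption \ref{ass:profit}(ii). On $\{x>a(r)\}$ the investor is idle and the firm faces $\bar\rho w=\pi(x)+\mu x\,w_x+\eta_{\max}(w_r-w_x)^+$; we conjecture the structure of the statement, $\eta^\ast=\eta_{\max}$ on $\{a(r)\le x\le b(r)\}$ and $\eta^\ast=0$ on $\{x>b(r)\}$. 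On the abatement region we solve $\bar\rho w=\pi(x)+\mu x w_x+\eta_{\max}(w_r-w_x)$ along its characteristics (the $\eta_{\max}$-trajectories), getting $w(r,x)=\mathcal J^F_{r,x}(\eta_{\max},\nu^a)=\int_0^{\tau}\e^{-\bar\rho t}\pi(X^{0,\eta_{\max}}_t)\ud t+\e^{-\bar\rho\tau}\widehat w(R^{\eta_{\max}}_\tau)$ with $\tau=\tau(\eta_{\max})$; on the inaction region $\bar\rho w=\pi(x)+\mu x w_x$ gives $w=\pi(x)/\bar\rho$ if $\mu=0$ and $w(r,x)=\int_0^{\tau_b}\e^{-\bar\rho t}\pi(x\e^{\mu t})\ud t+\e^{-\bar\rho\tau_b}w(r,b(r))$ if $\mu<0$. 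Matching the two pieces along $\{x=b(r)\}$ produces the scalar equation defining $b$, which one shows has a unique solution, increasing, $C^1$, with $b(r)>a(r)$ (Lemma \ref{bdef.lm} for $\mu=0$, Proposition \ref{Proposition_funcb} for $\mu<0$), together with the sign conditions $w_r\ge w_x$ on the abatement region and $w_r\le w_x$ on the inaction region. Finally, for arbitrary $\eta$ played against $\nu^a$, applying the deterministic Dynkin formula to $t\mapsto\e^{-\bar\rho t}w(R^\eta_t,X^{\nu^a,\eta}_t)$ up to $\tau(\eta)$ and using $-\bar\rho w+\eta w_r+(\mu x-\eta)w_x+\pi(x)\le0$ on $\{x>a(r)\}$, the boundary value $w(r,a(r))=\widehat w(r)$, and the identification of $\widehat w$ as the firm's value from $\tau(\eta)$ on, yields $w(r,x)\ge\mathcal J^F_{r,x}(\eta,\nu^a)$, with equality for $\eta=\eta^\ast$ (all inequalities then tight). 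Combining the two steps (and invoking Theorem \ref{thm:verif} to organize the verification and identify the payoffs) shows that $(\eta^\ast,\nu^\ast)$ is a Nash equilibrium with values $w$ and $v=\bar v(\cdot;\eta^\ast)$.

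\emph{Main obstacle.} The hard part is the construction and regularity of $b$: its defining equation is implicit and couples the abatement-region value — itself built through the nontrivial hitting time $\tau(\eta_{\max})$ and the profile $\widehat w$ — with the inaction-region value, and $w$ is in general not $C^1$ across $\{x=b(r)\}$, so the verification must be carried out for a merely piecewise-$C^1$ candidate. Proving existence, uniqueness, monotonicity and $C^1$-regularity of $b$, and checking the sign of $w_r-w_x$ that justifies the bang-bang form of $\eta^\ast$, is precisely where the convexity of $\pi$ and Assumption \ref{ass:profit}(iii) enter, and is the technical core of the argument.
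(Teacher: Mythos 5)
Your investor's best-response step is correct and is a genuinely different route from the paper's. The paper (Theorem \ref{thm:nustar}) integrates by parts to rewrite $\mathcal J^I$ as $\int_0^\infty \e^{-\rho t}\big(\Pi(R_t,Y_t[x+\widehat\nu_t-\widehat\Lambda_t])-\alpha\delta Y_t\widehat\nu_t\big)\ud t$ and maximizes pointwise in $t$, observing that the unconstrained pointwise maximizer $\widehat\nu^*_t=[a(R_t)Y_t^{-1}-x+\widehat\Lambda_t]^+$ is automatically nondecreasing, hence admissible. You instead run a supergradient/first-order-variation argument on $\xi=\nu-\nu^a$, showing the shadow price $\Phi(s)\le\alpha$ with equality on $\{X^*=a(R^\eta)\}$. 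Both are complete; the paper's version is more constructive (it produces $\widehat\nu^*$ rather than verifying a candidate), while yours isolates more transparently why $\Pi_x(r,a(r))=\alpha\delta$ is the right boundary condition.

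The firm's step, however, has a genuine gap: everything that makes the theorem nontrivial is compressed into ``matching the two pieces along $\{x=b(r)\}$ produces the scalar equation defining $b$, which one shows has a unique solution, increasing, $C^1$, with $b(r)>a(r)$, together with the sign conditions $w_r\ge w_x$ / $w_r\le w_x$.'' For $\mu=0$ this is Lemma \ref{bdef.lm} (value matching $w_0=w_1$ does define $b$, and monotonicity of $b$ comes from explicit computation of $\partial_r w_1$ and $\partial_x w_1$ plus convexity of $\pi$). But for $\mu<0$ the boundary is \emph{not} obtained by value matching between two independently constructed pieces: the paper reformulates the choice of abatement start time as a deterministic optimal stopping problem, $w_0-w_1=\sup_\tau\int_0^{\tau\wedge\tau_{r,x}(0)}\e^{-\bar\rho t}h(r,xY_t)\ud t$ with $h=-\eta_{\max}(\partial_r w_1-\partial_x w_1)$, and the boundary is the zero set $h(r,b(r))=0$; the crux is Lemma \ref{lem:h} ($\partial_r h<0$, $\partial_x h>0$, $h(r,a(r))<0$, $h(r,\infty)>0$), whose proof requires the explicit integral representations of $\nabla w_1$, the identities \eqref{eq:dtMdx}--\eqref{eq:tauMdr} for $\nabla\tau_M$, and Assumption \ref{ass:profit}-(iii). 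Without this, neither existence/uniqueness/monotonicity of $b$ nor the sign of $w_r-w_x$ on each region (which justifies the bang-bang form of $\eta^*$ and the inequality $\mathcal G^{\bar\rho}[w,\eta]+\pi\le 0$ used in your Dynkin step) is established, so the verification cannot be closed. A further small inaccuracy: for $\mu<0$ the paper proves that $w$ \emph{is} $C^1$ across $\{x=b(r)\}$ (smooth fit, via $\nabla\varphi(r,b(r))=0$ since $\tau_*^{r,b(r)}=0$); only in the case $\mu=0$ is $w$ merely piecewise $C^1$, and there the verification survives because the state does not move unless $\eta\neq0$.
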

It is possible to determine explicit expressions for the equilibrium payoffs of both players and this will be done in the detailed analysis that leads to the proof of the theorem. The main difference between (i) and (ii) in the theorem above is that when $\mu=0$, if $X_0>b(R_0)$ there is going to be no dynamics and the process $(R_t,X_t)$ remains constant in equilibrium, i.e., no emission abatement and no investment. Instead, when $\mu<0$, there is always going to be a dynamics for the production capacity and, in equilibrium, both investment and emission abatement are performed.
The equilibrium strategies of Theorem \ref{thm:main0}-(ii) are illustrated in Figure \ref{illustration.fig} in the situations when the initial values $(X_0,R_0)$ are in the no action region for both the company and the investor, that is $X_0>b(R_0)$. The starting point is the point A in the left graph, corresponding to $t=0$. In the first period, the company and the investor do nothing, and the production capacity of the firm decreases exponentially until it hits the value $b(R_0)$ (point B in the left graph). At this time ($\tau_b$), the firm starts to invest in pollution abatement, but the investor takes no action. As a result, during the second period, the production capacity declines faster than in the first period, and the abatement investment grows at the maximum rate. The second period continues until the production capacity $X_t$ hits the moving boundary $a(R_t)$ (point C in the left graph). At this time ($\tau_M$) the investor starts to invest to keep the production capacity at the moving boundary.

\begin{figure}[htbp]
\centerline{\includegraphics[width=.5\textwidth]{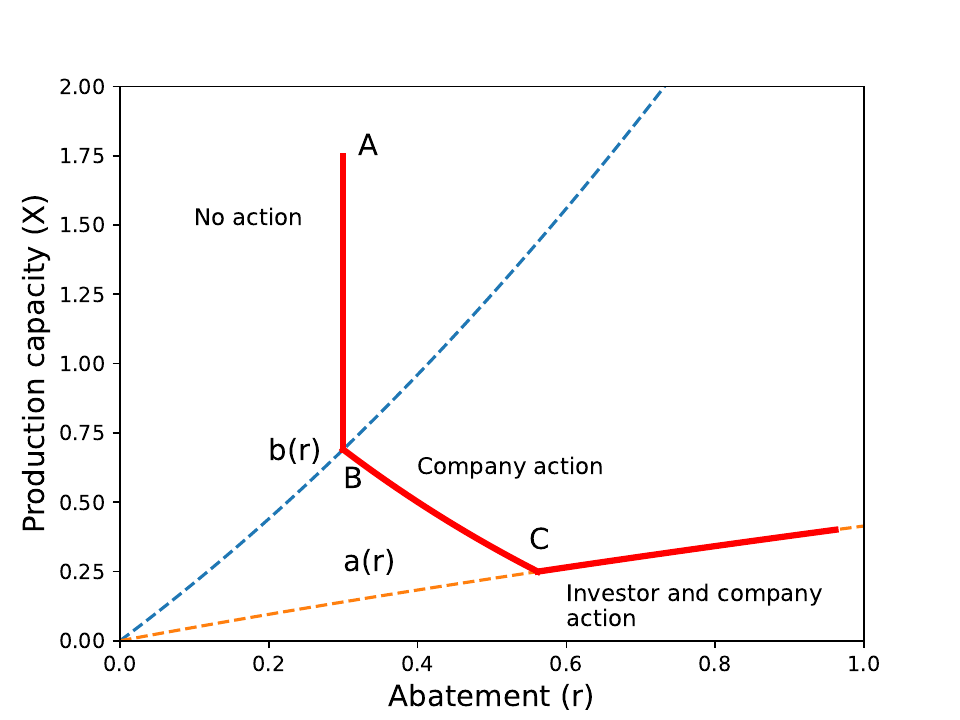}\includegraphics[width=.5\textwidth]{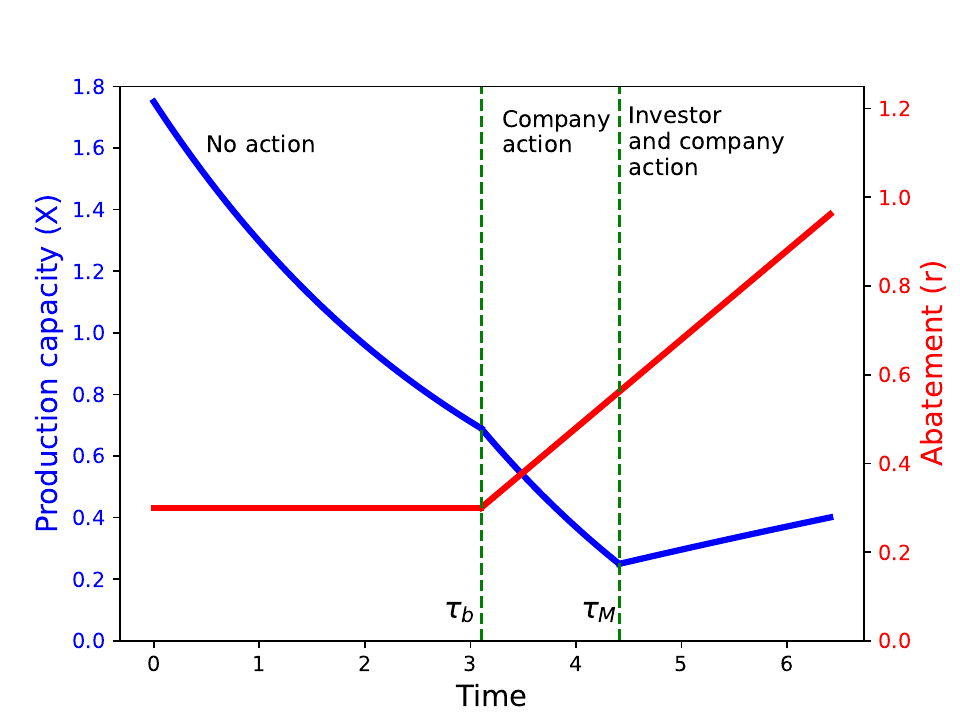}}
\caption{Illustration of the optimal strategies in the deterministic case with $\mu<0$. Left: $X$ as function of $R$. Right: $X$ (in blue) and $R$ (in red) as function of time. See text for a detailed discussion.}
\label{illustration.fig}
\end{figure}

\subsection{{Main results,  part 2: verification theorem}}\label{sec:main2}

By standard considerations based on dynamic programming, we expect that a pair of equilibrium payoffs should be obtained by solving a system of variational inequalities. First, we state the variational problem and then we formally connect it with the game via a so-called {\em verification theorem} (cf.\ Theorem \ref{thm:verif}). In what follows, given a function $\varphi$ we denote by $\varphi_x$, $\varphi_{xx}$ and $\varphi_r$ its first and second order derivatives with respect to $x$ and its first order derivative with respect to $r$, respectively. Sometimes we equivalently use $\partial_x \varphi$, $\partial_{xx} \varphi$ and $\partial_r \varphi$.

The infinitesimal generator of the uncontrolled process $X^0$ is defined via its action on sufficiently smooth functions $\varphi$ as
\[
(\mathcal{L} \varphi)(r,x) \triangleq \frac{\sigma^2 x^2}{2} \varphi_{xx}(r,x) + \mu x \varphi_{x}(r,x).
\]
The Hamiltonian of the firm's problem is defined as 
\begin{equation}\label{eq:Ham}
\mathcal{H}(r,x;\varphi) \triangleq \sup_{0 \leq \eta \leq \eta_{\mathrm{max}}}  \big(\varphi_r(r,x) - \varphi_x(r,x)\big)\eta ,
\end{equation}
for smooth $\varphi$. 
Then, we expect a pair $(w,v)$ of equilibrium payoffs (as in Definition \ref{defNash}) to be solution, in a sense to be specified later, of the following system: let
\[
\mathcal M\triangleq\{(r,x): v_x(r,x) = \alpha\}, \quad {\mathcal I}\triangleq\mathcal{M}^c=\big([0,\infty)\times\mathbb{R}_+\big)\setminus\mathcal M\ \text{ and }\ \partial\mathcal{M}=\overline{\mathcal M}\cap\overline{{\mathcal I}};
\] 
the function $w$ solves 
\begin{equation}\label{eqw1}
\left\{
\begin{array}{ll}
(\mathcal{L} w - \bar\rho w) (r,x) + \mathcal{H}(r,x;w) + \pi(x) =0, &\qquad (r,x)\in\mathcal I,\\[+4pt]
w_x(r,x) = 0, &\qquad (r,x)\in\partial\mathcal M; 
\end{array}
\right.
\end{equation}
letting $\eta^*(r,x)\triangleq \eta_{\mathrm{max}} 1_{\{w_r>w_x\}}(r,x)$, the function $v$ solves
\begin{equation}\label{eqvariationalv1}
\max \left\{ (\mathcal{L} v- \rho v) (r,x) + \left( v_{r}(r,x) - v_{x}(r,x) \right) \eta^*(r,x)  + \Pi(r,x), v_x(r,x) - \alpha \right\} = 0,
\end{equation}
for $(r,x)\in[0,\infty)\times\mathbb R_+$. 

For now we may assume that all solutions are understood in the classical sense (i.e., with continuous derivatives). Later, in Section \ref{sec:strong}, we will adopt a notion of {\em strong} solution. Suitable growth conditions should also be specified. These will be encoded in the so-called transversality conditions of our verification theorem.
It is immediate to check that 
\[
\eta^*(r,x)=\mathrm{argmax}_{0\le \eta\le \eta_{\mathrm{max}}}\big\{\big(w_r(r,x) - w_x(x,r)\big)\eta\big\}.
\]

In the next theorem, we show that if a sufficiently smooth solution pair $(v,w)$ of the above problem exists, then indeed it corresponds to a pair of equilibrium payoffs for the game.
It is convenient to also define the following compact notation: for $q\ge 0$
\begin{equation}\label{eqG}
\mathcal G^q[\varphi,\eta](r,x)\triangleq \left(\mathcal{L} \varphi- q \varphi + ( \varphi_{r} - \varphi_{x}) \eta\right)(r,x),
\end{equation}
for any pair of sufficiently smooth functions $(\varphi,\eta)$. In the diffusive case $\sigma>0$ the verification theorem is stated under classical regularity assumptions, which are standard in one-dimensional singular control settings and are consistent with the smooth-fit behavior observed in our numerical solutions.

We emphasize that the next theorem {\em does not} require Assumption \ref{ass:profit}. 

\begin{theorem}[\bf Verification] \label{thm:verif}
Assume there is a pair $(v,w)$ of non-negative, continuous functions on $[0,\infty)^2$ that solves \eqref{eqw1}--\eqref{eqvariationalv1} with $w\in C^{1,2}(\overline{\mathcal I})$ and  $v\in C^{1,2}((0,\infty)^2)$. Assume further that there is a control $\nu^*\in\mathcal A_I$ such that for any $\eta\in\mathcal A_F$, the pair $(R^\eta_t,X^{\nu^*,\eta}_t)_{t\ge 0}$ is well-posed and such that $\mathbb P_{r,x}$-a.s.
\begin{equation}\label{eq:SC}
\begin{aligned}
&(R^\eta_t,X^{\nu^*,\eta}_t)\in\overline{\mathcal I},\quad \text{for $t\ge 0$},\\
&1_{\{(r,x)\in\mathcal I\}}\, \nu^*_0+\int_{(0,T]} 1_{\{(R^\eta_{t},X^{\nu^*,\eta}_{t-})\in\mathcal I\}}\ud \nu^*_t=0,\quad \text{for any $T>0$},\\
&(R^\eta_t,X^{\nu^*,\eta}_{t-}+\Delta\nu^*_t)\in\partial\mathcal{M},\quad \text{for any $t> 0$}.
\end{aligned}
\end{equation}
Finally, assume that the process $(R^{\eta^*}_t,X^{\nu^*,\eta^*}_t)_{t\ge 0}$ with $\eta^*_t=\eta^*(R^{\eta^*}_t,X^{\nu^*,\eta^*}_t)$ is well-defined and the transversality conditions
\begin{equation}\label{eq:transv.0}
\lim_{n\to\infty}\mathbb E_{r,x}\Big[\e^{-\bar\rho \theta^*_n}w\big(R^{\eta^*}_{\theta^*_n},X^{\nu^*,\eta^*}_{\theta^*_n}\big)\Big]=0\quad\text{and}\quad
\lim_{n\to\infty}\mathbb E_{r,x}\Big[\e^{-\rho \theta^*_n}v\big(R^{\eta^*}_{\theta^*_n},X^{\nu^*,\eta^*}_{\theta^*_n}\big)\Big]=0,
\end{equation}
hold with $\theta^*_n\triangleq\inf\{t\ge 0:(R^{\eta^*}_t,X^{\nu^*,\eta^*}_t)\notin[0,n)^2\}$.
Then $(v,w)$ are a pair of equilibrium payoffs as in Definition \ref{defNash} and the pair $(\eta^*_t,\nu^*_t)_{t\ge 0}$ is a pair of optimal strategies.
\end{theorem}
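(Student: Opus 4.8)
The plan is a standard verification-by-martingale argument, carried out twice, once for each player, and in each case splitting into the two inequalities required by Definition \ref{defNash}.

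\emph{Step 1: the investor's problem (optimality of $\nu^*$).} Fix the firm's strategy at $\eta^*$ and let $\nu\in\mathcal A_I$ be arbitrary. I would apply It\^o's formula (in the form valid for semimartingales with jumps, e.g. the Meyer--It\^o formula, to accommodate the lump sums $\Delta\nu_t$) to $t\mapsto \e^{-\rho t}v(R^{\eta^*}_t,X^{\nu,\eta^*}_t)$ between $0$ and $\theta_n\triangleq\inf\{t:(R_t,X_t)\notin[0,n)^2\}$. The continuous part of the dynamics produces the term $\e^{-\rho t}\mathcal G^{\rho}[v,\eta^*](R_t,X_t)\,\ud t$ plus $\e^{-\rho t}v_x(R_t,X_t)\,\ud\nu^c_t$ plus a martingale term from $\ud B_t$; the jumps contribute $\sum \e^{-\rho t}(v(R_t,X_{t-}+\Delta\nu_t)-v(R_t,X_{t-}))$, which by the mean value theorem and $v_x\le\alpha$ (from \eqref{eqvariationalv1}) is bounded above by $\alpha\,\Delta\nu_t$. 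On $\mathcal I$ the first variational inequality in \eqref{eqvariationalv1} gives $\mathcal G^{\rho}[v,\eta^*]+\Pi\le 0$, and since $v_x\le\alpha$ everywhere, the whole integrand involving $\ud\nu$ is dominated by $\alpha\,\ud\nu$. Rearranging and taking $\mathbb E_{r,x}[\cdot]$ (the martingale term vanishing, or being handled by localization and Fatou), one obtains
\[
v(r,x)\ge \mathbb E_{r,x}\Big[\int_0^{\theta_n}\e^{-\rho t}\Pi(R_t,X_t)\,\ud t-\alpha\int_{[0,\theta_n]}\e^{-\rho t}\,\ud\nu_t+\e^{-\rho\theta_n}v(R_{\theta_n},X_{\theta_n})\Big].
\]
Letting $n\to\infty$, using the transversality condition \eqref{eq:transv.0} for $v$ (for the candidate strategy) together with nonnegativity of $v$ and monotone/dominated convergence for the $\Pi$ and $\ud\nu$ terms, yields $v(r,x)\ge \mathcal J^I_{r,x}(\eta^*,\nu)$. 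For the reverse inequality with $\nu=\nu^*$, repeat the computation: now \eqref{eq:SC} guarantees $(R_t,X_t)\in\overline{\mathcal I}$, so $\mathcal G^{\rho}[v,\eta^*]+\Pi=0$ there (by continuity of the PDE up to $\partial\mathcal M$); the second line of \eqref{eq:SC} forces the continuous investment $\ud\nu^{*,c}$ to be supported on $\{v_x=\alpha\}$ and the third line forces every jump to land on $\partial\mathcal M$ where again $v_x=\alpha$, so every inequality above is an equality. This gives $v(r,x)=\mathcal J^I_{r,x}(\eta^*,\nu^*)$, hence $v$ is the investor's equilibrium value and $\nu^*$ is a best response to $\eta^*$.

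\emph{Step 2: the firm's problem (optimality of $\eta^*$).} Now fix the investor's strategy at $\nu^*$ and let $\eta\in\mathcal A_F$ be arbitrary; by \eqref{eq:SC} the pair $(R^\eta_t,X^{\nu^*,\eta}_t)$ stays in $\overline{\mathcal I}$, which is exactly the region where $w$ is assumed $C^{1,2}$ and solves the first line of \eqref{eqw1}. Apply It\^o to $\e^{-\bar\rho t}w(R^\eta_t,X^{\nu^*,\eta}_t)$ up to $\theta^*_n$. The drift part is $\e^{-\bar\rho t}\big(\mathcal L w-\bar\rho w+(w_r-w_x)\eta_t\big)(R_t,X_t)\,\ud t\le \e^{-\bar\rho t}\big(\mathcal L w-\bar\rho w+\mathcal H(R_t,X_t;w)\big)(R_t,X_t)\,\ud t=-\e^{-\bar\rho t}\pi(X_t)\,\ud t$, using the definition \eqref{eq:Ham} of the Hamiltonian and the HJB equation \eqref{eqw1}; the contribution of $\ud\nu^*$ is $\e^{-\bar\rho t}w_x\,\ud\nu^{*,c}_t$ plus jump terms, and here I would use the boundary condition $w_x=0$ on $\partial\mathcal M$ from \eqref{eqw1} together with the support properties of $\nu^*$ in \eqref{eq:SC}: the continuous part of $\nu^*$ is supported on $\partial\mathcal M$ where $w_x=0$, and every jump of $\nu^*$ ends on $\partial\mathcal M$. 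For the jumps one needs $w(R,X_{t-}+\Delta\nu^*_t)-w(R,X_{t-})\le 0$; this should follow because $X$ increases along the jump while $\partial\mathcal M$ sits at a minimum of $w(r,\cdot)$ (the condition $w_x=0$ on the boundary plus the structure of $\mathcal I$), or more robustly by noting that on the relevant segment $w_x$ vanishes — this is a point to check carefully. Taking expectations, letting $n\to\infty$ and invoking the transversality condition \eqref{eq:transv.0} for $w$ and nonnegativity, one gets $w(r,x)\ge \mathcal J^F_{r,x}(\eta,\nu^*)$. Taking $\eta=\eta^*$, the definition $\eta^*=\eta_{\max}1_{\{w_r>w_x\}}$ makes the Hamiltonian bound an equality (it is the argmax), the drift inequality becomes the PDE equality, and all the $\nu^*$-terms vanish as before, so $w(r,x)=\mathcal J^F_{r,x}(\eta^*,\nu^*)$.

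\emph{Step 3: conclusion.} Combining Steps 1 and 2, for every $(\eta,\nu)\in\mathcal A_F\times\mathcal A_I$ we have $\mathcal J^F_{r,x}(\eta^*,\nu^*)=w(r,x)\ge \mathcal J^F_{r,x}(\eta,\nu^*)$ and $\mathcal J^I_{r,x}(\eta^*,\nu^*)=v(r,x)\ge \mathcal J^I_{r,x}(\eta^*,\nu)$, which is precisely the Nash condition in Definition \ref{defNash}; moreover $w$ and $v$ are the equilibrium payoffs. The well-posedness hypotheses in the statement ensure the controlled pairs used above are the genuine strong solutions, so the payoffs $\mathcal J$ are the finite ones (not the $-\infty$ penalty value). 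I expect the main technical obstacle to be the careful handling of the $\ud\nu^*$-terms at jump times in Step 2 — showing that the lump-sum investments of $\nu^*$ do not increase $w$ — since the boundary condition $w_x=0$ is imposed only on $\partial\mathcal M$ and one must argue that the jump occurs precisely along a segment where $w$ does not increase; the localization/transversality passage to the limit and the correct version of It\^o's formula for the càdlàg control $\nu^*$ are the other places where care is needed but are essentially routine.
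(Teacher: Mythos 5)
Your proposal is correct and follows essentially the same route as the paper's proof: an It\^o/Meyer–It\^o expansion of the discounted value along the controlled dynamics, the Hamiltonian bound and the HJB equality for the firm, the bound $v_x\le\alpha$ (applied to both the continuous part and the jumps of $\nu$) for the investor, the support conditions \eqref{eq:SC} together with $w_x=0$ on $\partial\mathcal M$ to kill the $\ud\nu^*$-terms, and localization plus transversality and monotone/dominated convergence to pass to the limit. The jump issue you flag in Step 2 is handled in the paper exactly as you suggest, by combining the first and third lines of \eqref{eq:SC} (which force both endpoints of any jump of $\nu^*$ onto $\partial\mathcal M$) with the boundary condition in \eqref{eqw1}, so that the increment of $w$ across each jump vanishes.
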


Various ramifications of the above theorem are discussed in Section \ref{sec:verif}, including versions of the theorem under more relaxed regularity requirements for the solution pair $(v,w)$. It is worth observing that the verification theorem, in one of its relaxations, is indeed used to prove rigorously Theorem \ref{thm:main0} in the zero noise limit. 

Figure \ref{illustration2.fig} shows an optimal trajectory obtained by numerical solution of the system (\ref{eqw1}--\ref{eqvariationalv1}) using the algorithm described in Section \ref{sec:equilibrium} with parameter values given in Section \ref{simul:sthoc}. Similarly to the deterministic setting, the solution is characterized through abatement boundary $b$ (higher boundary in the top graph) and investment boundary $a$ (lower boundary in the top graph). The investor (whose optimal control $\nu^*$ is shown with thick blue curve in the bottom graph) acts to keep the production capacity of the firm above the investment boundary. Notice that the blue curve is flat when the production capacity curve in the top graph is above the boundary and only grows when the production capacity touches the boundary. The company (whose abatement investment $R$ is shown with thin red curve in the bottom graph) acts with maximum force when the production capacity curve is between the two boundaries, and the red curve is flat when the production capacity is above the abatement boundary. Unlike the deterministic setting of Figure \ref{illustration.fig}, where the production capacity process remains at the investment boundary after touching it once, and the abatement never stops once it has started, in the stochastic setting, random fluctuations of the production capacity process may push it away from the investment boundary and even above the abatement boundary, halting both investment and abatement, which restart at a later date when the respective boundaries are attained.

\begin{figure}[htbp]
\centerline{\includegraphics[width=.6\textwidth]{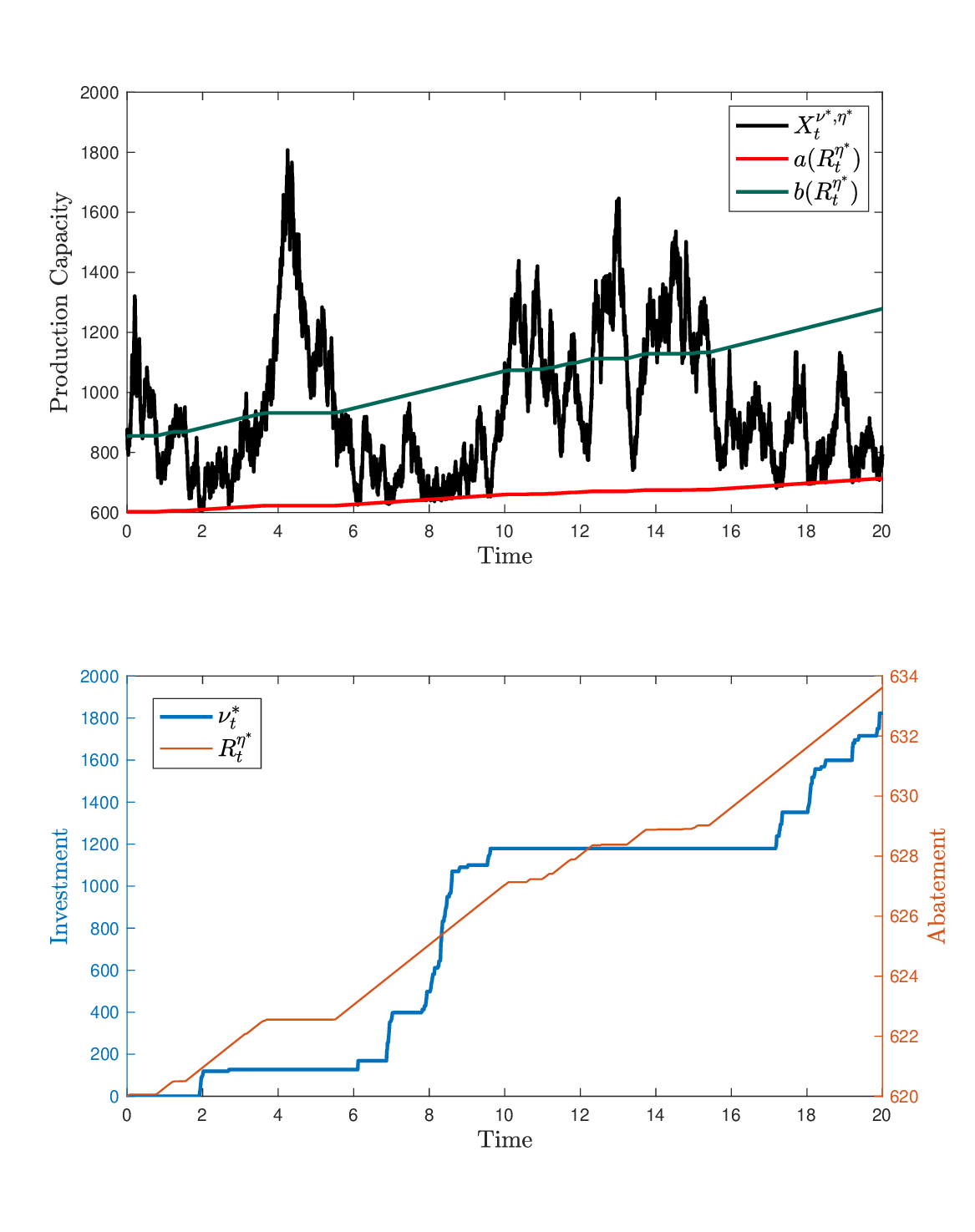}}
\caption{Illustration of the optimal strategies in the stochastic case. Top graph: production capacity ($X$) and the two boundaries as function of time. Bottom graph: optimal investment $\nu^*$ (thick blue line with left scale) and abatement $R$ (thin red line, right scale) as function of time.}
\label{illustration2.fig}
\end{figure}

\section{Proof of verification theorem}\label{sec:verif}

We begin our study by offering the relatively easy proof of the verification theorem (Theorem \ref{thm:verif}).
\begin{proof}[{\bf Proof of Theorem \ref{thm:verif}}]
The proof follows familiar arguments from stochastic control theory and so we only outline it here.
For $(r,x)\in\overline{\mathcal{I}}$ and any $\eta\in\mathcal A_F$, setting $\theta_n\triangleq\inf\{t\ge 0:(R^{\eta}_t,X^{\nu^*,\eta}_t)\notin[0,n)^2\}$, by an application of It\^o's formula we obtain
\begin{equation*}
\begin{aligned}
&\e^{-\bar\rho (t\wedge\theta_n)}w\big(R^\eta_{t\wedge\theta_n},X^{\nu^*,\eta}_{t\wedge\theta_n}\big) \\
&=w(r,x+\nu^*_0)+\int_0^{t\wedge\theta_n}\e^{-\bar\rho s}\mathcal G^{\bar \rho}[w,\eta_s]\big(R^\eta_s,X^{\nu^*,\eta}_s\big)\ud s\\
&\quad+\int_0^{t\wedge\theta_n}\e^{-\bar\rho s}w_x\big(R^\eta_s,X^{\nu^*,\eta}_s\big)\ud \nu^{*,c}_s+\sum_{0<s\le t\wedge\theta_n}\e^{-\bar\rho s}\big[w\big(R^\eta_s,X^{\nu^*,\eta}_s\big)-w\big(R^\eta_s,X^{\nu^*,\eta}_{s-}\big)\big]\\
&\quad+\int_0^{t\wedge\theta_n}\e^{-\bar\rho s}w_x\big(R^\eta_s,X^{\nu^*,\eta}_s\big)\sigma X^{\nu^*,\eta}_s\ud B_s\\
&\le w(r,x)+\int_0^{t\wedge\theta_n}\e^{-\bar\rho s}\big(\mathcal L w-\bar\rho w+\mathcal H(\cdot,w)\big)\big(R^\eta_s,X^{\nu^*,\eta}_s\big)\ud s\\
&\quad+\int_0^{t\wedge\theta_n}\e^{-\bar\rho s}w_x\big(R^\eta_s,X^{\nu^*,\eta}_s\big)\sigma X^{\nu^*,\eta}_s\ud B_s\\
&= w(r,x)-\int_0^{t\wedge\theta_n}\e^{-\bar\rho s}\pi\big(X^{\nu^*,\eta}_s\big)\ud s+\int_0^t\e^{-\bar\rho s}w_x\big(R^\eta_s,X^{\nu^*,\eta}_s\big)\sigma X^{\nu^*,\eta}_s\ud B_s,
\end{aligned}
\end{equation*}
where the inequality is by definition of the Hamiltonian \eqref{eq:Ham} and we use $w_x(R^\eta_s,X^{\nu^*,\eta}_s)\ud \nu^{*,c}_s=0$, $w\big(R^\eta_s,X^{\nu^*,\eta}_s\big)-w\big(R^\eta_s,X^{\nu^*,\eta}_{s-}\big)=0$ and $w(r,x+\nu^*_0)-w(r,x)=0$ by \eqref{eq:SC} and the second equation in \eqref{eqw1}. Now, taking expectations and rearranging terms we obtain
\begin{equation*}
\begin{aligned}
w(r,x)&\ge \mathbb E_{r,x}\Big[\int_0^{t\wedge\theta_n}\e^{-\bar\rho s}\pi\big(X^{\nu^*,\eta}_s\big)\ud s+\e^{-\bar\rho (t\wedge\theta_n)}w\big(R^\eta_{t\wedge\theta_n},X^{\nu^*,\eta}_{t\wedge\theta_n}\big)\Big]\\
&\ge \mathbb E_{r,x}\Big[\int_0^{t\wedge\theta_n}\e^{-\bar\rho s}\pi\big(X^{\nu^*,\eta}_s\big)\ud s\Big],
\end{aligned}
\end{equation*}
where the inequality holds because $w\ge 0$. Since also $\pi\ge 0$, letting $t\to\infty$ and $n\to\infty$ and using Monotone Convergence yields 
\[
w(r,x)\ge \mathbb E_{r,x}\Big[\int_0^\infty\e^{-\bar\rho s}\pi\big(X^{\nu^*,\eta}_s\big)\ud s\Big]=\mathcal J^F_{r,x}(\nu^*,\eta).
\]
Repeating the arguments above with $\eta=\eta^*$ yields 
\begin{equation*}
\begin{aligned}
w(r,x)&= \mathbb E_{r,x}\Big[\int_0^{t\wedge\theta^*_n}\e^{-\bar\rho s}\pi\big(X^{\nu^*,\eta^*}_s\big)\ud s+\e^{-\bar\rho (t\wedge\theta^*_n)}w\big(R^{\eta^*}_{t\wedge\theta^*_n},X^{\nu^*,\eta^*}_{t\wedge\theta^*_n}\big)\Big].
\end{aligned}
\end{equation*}
First we let $t\to\infty$ and use dominated convergence. Then we let $n\to\infty$ and use Monotone Convergence and the transversality condition to obtain  $w(r,x)=\mathcal J^F_{r,x}(\nu^*,\eta^*)$. This shows that $\eta^*$ is a best response to $\nu^*$.

Now, let us look at the investor's payoff. Take $(r,x)\in[0,\infty)\times\mathbb R_+$ and arbitrary $\nu\in\mathcal A_I$. Assume that the dynamics $(R^{\eta^*}_t,X^{\nu,\eta^*}_t)$ is well-posed and redefine $\theta_n\triangleq\inf\{t\ge 0:(R^{\eta^*}_t,X^{\nu,\eta^*}_t)\notin[0,n)^2\}$ with a slight abuse of notation. By It\^o's formula we have 
\begin{equation*}
\begin{aligned}
&\e^{-\rho (t\wedge\theta_n)}v\big(R^{\eta^*}_{t\wedge\theta_n},X^{\nu,\eta^*}_{t\wedge\theta_n}\big)\\
&=v(r,x+\nu_0)+\int_0^{t\wedge\theta_n}\e^{-\rho s}\mathcal G^\rho[v,\eta^*_s]\big(R^{\eta^*}_s,X^{\nu,\eta^*}_s\big)\ud s+\int_0^{t\wedge\theta_n}\e^{-\rho s}v_x\big(R^{\eta^*}_s,X^{\nu,\eta^*}_s\big)\ud \nu^{c}_s\\
&\quad+\sum_{0< s\le t\wedge\theta_n}\e^{-\rho s}\big[v\big(R^{\eta^*}_s,X^{\nu,\eta^*}_s\big)-v\big(R^{\eta^*}_s,X^{\nu,\eta^*}_{s-}\big)\big]+\int_0^{t\wedge\theta_n}\e^{-\rho s}v_x\big(R^{\eta^*}_s,X^{\nu,\eta^*}_s\big)\sigma X^{\nu,\eta^*}_s\ud B_s\\
&\le v(r,x)-\int_0^{t\wedge\theta_n}\e^{-\rho s} \Pi\big(R^{\eta^*}_s,X^{\nu,\eta^*}_s\big)\ud s+\int_{[0,t\wedge\theta_n]}\e^{-\rho s}\alpha\ud \nu_s\\
&\quad+\int_0^{t\wedge\theta_n}\e^{-\rho s}v_x\big(R^{\eta^*}_s,X^{\nu,\eta^*}_s\big)\sigma X^{\nu,\eta^*}_s\ud B_s,
\end{aligned}
\end{equation*}
where the inequality holds by \eqref{eqvariationalv1}. Taking expectations, rearranging terms and passing to the limit in $t$ and $n$ yields $v(r,x)\ge \mathcal J^I_{r,x}(\eta^*,\nu)$, upon using also that $v\ge 0$.  Repeating the argument with $\nu=\nu^*$ the inequality becomes equality and we obtain 
\begin{equation*}
\begin{aligned}
v(r,x)&= \mathbb E_{r,x}\Big[\int_0^{t\wedge\theta^*_n}\e^{-\rho s}\Pi\big(R^{\eta^*}_s,X^{\nu^*,\eta^*}_s\big)\ud s-\int_{[0,t\wedge\theta_n]}\e^{-\rho s}\alpha\ud \nu^{*}_s+\e^{-\rho (t\wedge\theta_n)}v\big(R^{\eta^*}_{t\wedge\theta^*_n},X^{\nu^*,\eta^*}_{t\wedge\theta^*_n}\big)\Big].
\end{aligned}
\end{equation*}
Letting $t\to\infty$ and $n\to\infty$ and using dominated and monotone convergence along with the transversality condition we arrive at $v(r,x)=\mathcal J^I_{r,x}(\eta^*,\nu^*)$. This shows that $\nu^*$ is the best response to $\eta^*$ and it concludes the proof.
\end{proof}

There is a straightforward corollary of the verification theorem which can be interpreted as a verification theorem only for the firm. This assumes that the optimal strategy for the investor is known and it is essentially characterized by a set $\mathcal{M}$. However, the corollary does not assume that the investor's equilibrium payoff be a smooth solution of the variational problem \eqref{eqvariationalv1} nor that the set $\mathcal{M}$ be of the form specified before \eqref{eqw1}. The proof is omitted because it is a repetition, line by line, of the first part of the proof of Theorem \ref{thm:verif}.
\begin{corollary}\label{cor:verif}
Assume that there is a set $\mathcal M\subset[0,\infty)\times\mathbb{R}_+$ whose complement we denote by $\mathcal{I}\triangleq\mathcal{M}^c$ and a continuous function $w$ such that the following conditions hold:
\begin{itemize}
\item[(i)] The function $w \in C^{1,2}(\overline{\mathcal I})$ solves \eqref{eqw1};
\item[(ii)] For any $\eta\in\mathcal{A}_F$  there is $\nu^*=\nu^*(\eta)\in\mathcal{A}_I$ which is optimal for the investor in the sense that $\bar v(r,x;\eta)=\mathcal{J}_{r,x}^I(\eta,\nu^*(\eta))$ for all $(r,x)\in[0,\infty)\times\mathbb{R}_+$;
\item[(iii)] For any $\eta\in\mathcal{A}_F$, the dynamics $(R^\eta_t,X^{\nu^*,\eta}_t)_{t\ge 0}$  with $\nu^*(\eta)$ as in (ii) is well-posed and it satisfies \eqref{eq:SC}; 
\item[(iv)] The process $(R^{\eta^*}_t,X^{\nu^*,\eta^*}_t)_{t\ge 0}$ with $\eta^*_t=\eta^*(R^{\eta^*}_t,X^{\nu^*,\eta^*}_t)$ is well-posed and the transversality condition for the function $w$ in \eqref{eq:transv.0} holds.
\end{itemize}
Then, the pair $(\eta^*_t,\nu^*_t)_{t\ge 0}$ is a Nash equilibrium for the game with equilibrium payoffs $v$ and $w$, where $v(r,x)\triangleq \bar v(r,x;\eta^*)$ as in \eqref{eq:equilv}.
\end{corollary}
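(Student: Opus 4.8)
The plan is to split Definition~\ref{defNash} into the firm's best-response inequality and the investor's best-response inequality, and to observe that the former is an exact replay of the first half of the proof of Theorem~\ref{thm:verif} (the part that only uses the function $w$), while the latter is handed to us directly by hypothesis~(ii).

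For the firm's inequality I would fix $(r,x)\in[0,\infty)\times\mathbb R_+$ and an arbitrary $\eta\in\mathcal A_F$, and apply It\^o's formula to $\e^{-\bar\rho t}w(R^\eta_t,X^{\nu^*,\eta}_t)$ up to $\theta_n=\inf\{t\ge0:(R^\eta_t,X^{\nu^*,\eta}_t)\notin[0,n)^2\}$, exactly as in Theorem~\ref{thm:verif}. By hypothesis~(iii) the pair $(R^\eta,X^{\nu^*,\eta})$ is well-posed, lives in $\overline{\mathcal I}$ (so $w\in C^{1,2}(\overline{\mathcal I})$ from~(i) can be used along the trajectory) and obeys the support conditions in \eqref{eq:SC}. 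Consequently the drift term is dominated using the interior equation in \eqref{eqw1} and the definition of $\mathcal H$ in \eqref{eq:Ham}; the $\ud\nu^{*,c}$-integral and the jump sum drop out because, by \eqref{eq:SC}, $\nu^*$ increases only on $\partial\mathcal M$ and its jumps (including a possible lump sum at $t=0$ when the state starts in $\mathcal M$) land on $\partial\mathcal M$, where $w_x=0$ by the boundary condition in \eqref{eqw1} (and $w$ is $x$-constant across $\mathcal M$, so the initial jump does not change its value); the stochastic integral is a local martingale killed by localisation. Taking expectations, using $w\ge0$ and $\pi\ge0$, and letting $t\to\infty$ then $n\to\infty$ by monotone convergence, yields $w(r,x)\ge\mathcal J^F_{r,x}(\eta,\nu^*)$. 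Re-running the same computation with $\eta=\eta^*$ turns the Hamiltonian bound into an equality, by the very definition $\eta^*(r,x)=\eta_{\mathrm{max}}1_{\{w_r>w_x\}}(r,x)$, and hypothesis~(iv) together with dominated/monotone convergence and the transversality condition for $w$ in \eqref{eq:transv.0} upgrades this to $w(r,x)=\mathcal J^F_{r,x}(\eta^*,\nu^*)$. Hence $\bar w(r,x;\nu^*)=w(r,x)$ and $\mathcal J^F_{r,x}(\eta^*,\nu^*)=w(r,x)\ge\mathcal J^F_{r,x}(\eta,\nu^*)$ for all $\eta\in\mathcal A_F$, which is the first inequality of Definition~\ref{defNash}.

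For the investor's inequality there is nothing to compute: hypothesis~(ii) applied with $\eta=\eta^*$ gives $\bar v(r,x;\eta^*)=\mathcal J^I_{r,x}(\eta^*,\nu^*(\eta^*))=\mathcal J^I_{r,x}(\eta^*,\nu^*)$, where $\nu^*=\nu^*(\eta^*)$ is precisely the control appearing in (iii)--(iv); and since $\bar v(r,x;\eta^*)=\sup_{\nu\in\mathcal A_I}\mathcal J^I_{r,x}(\eta^*,\nu)$ by \eqref{eq:equilv}, the inequality $\mathcal J^I_{r,x}(\eta^*,\nu^*)\ge\mathcal J^I_{r,x}(\eta^*,\nu)$ holds for every $\nu\in\mathcal A_I$. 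Setting $v(r,x)=\bar v(r,x;\eta^*)$, both requirements of Definition~\ref{defNash} are met, so $(\eta^*,\nu^*)$ is a Nash equilibrium with equilibrium payoffs $w$ for the firm and $v$ for the investor.

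I do not expect a genuine obstacle here: all the substantive content of a verification argument---the PDE characterisation \eqref{eqw1}, the support structure \eqref{eq:SC}, well-posedness of the feedback dynamics, and the investor's optimality---has been promoted to a hypothesis, so the corollary is essentially a repackaging of the firm half of Theorem~\ref{thm:verif}. The only two delicate points, and both are already dealt with verbatim inside that proof, are: (a) that the feedback strategy $\eta^*_t=\eta^*(R^{\eta^*}_t,X^{\nu^*,\eta^*}_t)$ yields a well-posed system, so that the computation with $\eta=\eta^*$ is legitimate (granted by~(iv)); and (b) the bookkeeping that makes the $\ud\nu^*$-terms in It\^o's formula vanish, which is exactly the role played by the conjunction of \eqref{eq:SC} with $w_x=0$ on $\partial\mathcal M$.
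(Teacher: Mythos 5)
Your proposal is correct and matches the paper's intent exactly: the paper omits the proof of Corollary~\ref{cor:verif} precisely because it is, as you argue, a line-by-line repetition of the firm's half of the proof of Theorem~\ref{thm:verif}, with the investor's best-response inequality supplied directly by hypothesis~(ii) applied at $\eta=\eta^*$. Your two flagged delicate points (well-posedness of the feedback system via~(iv), and the vanishing of the $\ud\nu^*$-terms via \eqref{eq:SC} together with $w_x=0$ on $\partial\mathcal M$) are exactly the ones handled in that proof.
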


\begin{remark}[{\bf Verification with deterministic dynamics}]\label{rem:verif}
In the special case of no diffusive component, i.e., $\sigma=0$, the problem becomes deterministic. In this case the infinitesimal generator simplifies to $(\mathcal L\varphi)(r,x)=\mu x\varphi_x(r,x)$. The verification theorem above continues to hold under the less restrictive assumptions that $v$ and $w$ be continuous on $[0,\infty)^2$ with $w\in C^1(\overline{\mathcal I})$ and $v\in C^1((0,\infty)^2)$ (for Corollary \ref{cor:verif} only $w\in C^1(\overline{\mathcal I})$ is needed). 
\end{remark}
\begin{remark}
It seems unlikely that one could construct smooth solutions to the system of variational inequalities \eqref{eqw1}--\eqref{eqvariationalv1}, either with $\sigma>0$ or $\sigma=0$. Therefore weaker notions of solution may be needed. In particular, we will see in the next section (especially in subsection \ref{sec:strong}) that a notion of {\em strong} solution is needed for the investor's payoff when $\sigma=0$.
\end{remark}
It is clear that Corollary \ref{cor:verif} imposes very restrictive assumptions, but we are able to use it in order to obtain an equilibrium in the deterministic game in the next section.

\section{Equilibrium in the zero-noise limit}\label{sec:deterministic}

In this section we address the construction of the equilibrium for the problem in the zero noise limit. In particular, at the end of our analysis we will have proven Theorem \ref{thm:main0} and we will have collected a number of auxiliary results concerning the boundary function $b$ and the equilibrium payoffs $v$ and $w$ for both players.

Setting $(R_0,X_0)=(r,x)$, we consider the following controlled dynamics: 
\begin{equation}\label{eqdet}
\left\{
\begin{array}{ll}
X_t^{\nu,\eta} =X_0+\int_0^t \mu X_s^{\nu,\eta} \ud s + \nu_t -\int_0^t\eta_s \ud s, \\[+5pt]
R_t^{\eta} = R_0+\int_0^t \eta_s\ud s,
\end{array}
\right.
\end{equation}
for $t\in[0,\infty)$. Then, the explicit dynamics of $X^{\nu,\eta}$ reads (cf.\ \eqref{eq:Xexpl})
\begin{equation}
X_t^{\nu,\eta} = Y_t \left( x + \widehat{\nu}_t - \widehat{\Lambda}_t  \right),
\end{equation}
where
\begin{gather}
 Y_t \triangleq \e^{\mu t},\qquad
 \widehat{\nu}_t \triangleq \int_{[0,t]} Y_s^{-1} \ud \nu_s,\qquad
 \widehat{\Lambda}_t \triangleq \int_0^t Y_s^{-1} \eta_s \ud s,
\end{gather}
for all $t \in [0,\infty)$. In this context the filtration $\mathbb F$ is trivial, i.e., all processes are deterministic, including all admissible controls for the firm and the investor.

\begin{theorem}\label{thm:nustar}
Let the controlled dynamics be as in \eqref{eqdet} with $\mu \leq 0$ and let the profit function of the investor $\Pi(r,x)$ satisfy Assumption \ref{ass:profit}. Then, for any $\eta\in\mathcal A_F$ the investor's optimal investment policy is unique\footnote{Because of right-continuity of processes in the class $\mathcal{A}_I$, uniqueness is up to indistinguishability. Here, because the problem is deterministic, it means that for any other optimal control $\nu\in\mathcal{A}_I$ it holds $\nu_t=\nu^*_t$, for all $t\ge 0$.} and it is given by 
\begin{equation}\label{eq:nua}
\nu^*_t\triangleq\int_{[0,t]}Y_s\ud \widehat\nu^*_s,\qquad\text{for $t\ge 0$},
\end{equation}
where
\begin{equation}\label{eq:nua2}
\widehat\nu^*_t\triangleq  \big[Y^{-1}_t a(R^\eta_t)-x+  \widehat \Lambda^\eta_t \big]^+\quad\text{with}\quad a(r)= G(r,\alpha\delta). 
\end{equation}

Moreover, the pair $(R^\eta_t,X^{\nu^*,\eta}_t)_{t\ge 0}$ satisfies \eqref{eq:SC} with $\mathcal I\triangleq\{(r,x)\in[0,\infty)^2: x>a(r)\}$ and $\mathcal{M}\triangleq\{(r,x):0<x\le a(r)\}$.
\end{theorem}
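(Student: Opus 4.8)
The plan is to show that the proposed control $\nu^*$ is admissible and optimal by a verification-type argument tailored to the deterministic, uncapped singular control problem faced by the investor once $\eta$ is fixed. First I would rewrite the investor's functional in the ``hatted'' variables: with $\widehat v_t = x + \widehat\nu_t - \widehat\Lambda^\eta_t$ we have $X^{\nu,\eta}_t = Y_t \widehat v_t$ and $R^\eta_t$ does not depend on $\nu$, so the investor is solving
\[
\sup_{\widehat\nu}\ \int_0^\infty \e^{-\rho t}\Pi\big(R^\eta_t, Y_t(x - \widehat\Lambda^\eta_t + \widehat\nu_t)\big)\,\ud t - \alpha\int_{[0,\infty)}\e^{-\rho t}Y_t\,\ud\widehat\nu_t,
\]
over nondecreasing $\widehat\nu$ (with the integrability constraint from $\mathcal A_I$). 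This is a one-dimensional monotone-follower problem with a time-inhomogeneous, strictly concave running reward in the state $\widehat v_t$ and a time-dependent marginal cost $\alpha\e^{-\rho t}Y_t$. The natural candidate is to keep $X^{\nu,\eta}_t$ above the curve $a(R^\eta_t)$ at minimal cost, i.e. to reflect $X^{\nu,\eta}$ upward at $a(R^\eta_t)$ — which is exactly what \eqref{eq:nua2} encodes, since $\widehat\nu^*_t = [Y_t^{-1}a(R^\eta_t) - x + \widehat\Lambda^\eta_t]^+$ is the smallest nondecreasing process ensuring $\widehat v_t \ge Y_t^{-1}a(R^\eta_t)$ for all $t$.

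Next I would justify the choice of threshold $a(r) = G(r,\alpha\delta)$ via a first-order/marginal-value computation. Providing a marginal unit $\ud\widehat\nu_s$ at time $s$ costs $\alpha\e^{-\rho s}Y_s$ and raises $X^{\nu,\eta}_t = Y_t\widehat v_t$ by $Y_t$ for all $t\ge s$ (since $\widehat v$ is shifted by the constant increment), yielding a marginal benefit $\int_s^\infty \e^{-\rho t}\Pi_x(R^\eta_t, X^{\nu,\eta}_t)Y_t\,\ud t$. Investing is worthwhile precisely when this benefit exceeds the cost; at the optimum one balances them. If $X^{\nu,\eta}$ is held at the constant-in-the-hatted-sense level $X^{\nu,\eta}_t = a(R^\eta_t)$ on an interval, one checks (using $\delta = \rho - \mu$ so that $\e^{-\rho t}Y_t = \e^{-\delta t}$ and the identity $\int_s^\infty \e^{-\delta(t-s)}\ud t = 1/\delta$) that the marginal-benefit-equals-cost condition reduces to $\Pi_x(r, a(r)) = \alpha\delta$, i.e. $a(r) = G(r,\alpha\delta)$. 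This identifies the free boundary and motivates the reflection strategy. The rigorous optimality then follows by a standard comparison: for any admissible $\widehat\nu$, concavity of $\Pi(r,\cdot)$ gives $\Pi(r,y') - \Pi(r,y) \le \Pi_x(r,y)(y'-y)$, and combining this with the above marginal identities (integrating by parts in $t$ to move the $\ud\widehat\nu^*$ and $\ud\widehat\nu$ terms around) shows $\mathcal J^I_{r,x}(\eta,\nu) \le \mathcal J^I_{r,x}(\eta,\nu^*)$. Alternatively, and perhaps more cleanly, one constructs an explicit $C^1$ candidate value function $v(r,x;\eta)$ (piecewise: equal to the ``do nothing'' payoff for $x > a(r)$-type regions and to the ``reflect'' payoff otherwise) satisfying the obstacle-problem inequalities $v_x \le \alpha$, $v_x = \alpha$ on $\{x \le a(r)\}$, and the dynamic-programming inequality elsewhere, then invokes the deterministic verification argument (Remark \ref{rem:verif} / Corollary-type reasoning) to conclude.

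The remaining points are: (a) \emph{admissibility} of $\nu^*$, namely $\nu^*\in\mathcal A_I$, which requires $\int_{[0,\infty)}\e^{-\rho t}\ud\nu^*_t < \infty$; this is where Assumption \ref{ass:profit}-(i) enters decisively, since $\nu^*_t$ grows essentially like $a(R^\eta_t) \le a(\eta_{\mathrm{max}}t) + {}$(bounded terms) and $\int_0^\infty \e^{-\rho t}a(\eta_{\mathrm{max}}t)\,\ud t < \infty$ by hypothesis — one must also control the $\mu$-drift contribution $-\int \mu a(R^\eta_s)\ud s$ and the $\widehat\Lambda^\eta$ term, but these are dominated similarly. (b) \emph{Verification of \eqref{eq:SC}} with the stated $\mathcal I$ and $\mathcal M$: the first inclusion $(R^\eta_t, X^{\nu^*,\eta}_t)\in\overline{\mathcal I} = \{x\ge a(r)\}$ is immediate from the reflection construction; the flatness condition $\int 1_{\{X\in\mathcal I\}}\ud\nu^* = 0$ holds because $\widehat\nu^*$ (being the positive part of a process that equals zero exactly on $\{X^{\nu^*,\eta} > a(R^\eta)\}$-type sets) only increases when $X^{\nu^*,\eta}_t = a(R^\eta_t)$; and the jump condition is checked by noting any initial jump $\nu^*_0 > 0$ lands exactly on $x = a(r) \in \partial\mathcal M$. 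I expect the main obstacle to be making the reflection/minimality argument fully rigorous — in particular verifying that the candidate $\nu^*$ genuinely dominates \emph{all} admissible competitors rather than just ``nearby'' ones, and handling the boundary behaviour of $a$ near $r = 0$ (where $a$ need only be continuous, not $C^1$) and the possible lump-sum at $t = 0$; the time-inhomogeneity of both the reward and the cost means one cannot quote an off-the-shelf monotone-follower result and must instead carry out the integration-by-parts comparison or the $C^1$ verification by hand.
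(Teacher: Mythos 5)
Your proposal is correct and follows essentially the same route as the paper: after passing to the hatted variables, the paper integrates the cost term by parts to obtain the running cost $\alpha\delta Y_t\widehat\nu_t$, observes that the functional then decouples in time and is pointwise strictly concave in $\widehat\nu_t$, so the first-order condition $\Pi_x(R_t,Y_t[x+\widehat\nu_t-\widehat\Lambda_t])=\alpha\delta$ yields exactly \eqref{eq:nua2}, with monotonicity (hence admissibility) of the pointwise maximizer following from $\mu\le 0$ and Assumption \ref{ass:profit}-(i), and integrability from $\int_0^\infty \e^{-\rho t}a(\eta_{\mathrm{max}}t)\,\ud t<\infty$. Your ``marginal benefit equals marginal cost'' plus concavity-comparison argument is the same computation in a different order, and the paper never needs your alternative $C^1$ obstacle-problem verification.
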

\begin{proof}
For a fixed $\eta\in\mathcal A_F$ the dynamics of $R^\eta$ is determined and so is the function $\widehat\Lambda=\widehat \Lambda^\eta$. For the ease of notation, using that $\eta$ is fixed throughout, we simply denote $(R^\eta,\widehat\Lambda^\eta)=(R,\widehat \Lambda)$. The investor's problem is to maximize over $\nu \in \mathcal A_I$ (equivalently over $\widehat \nu$) the quantity:
\begin{equation*}
\mathcal{J}^I_{r,x}(\eta,\nu) 
= \int_0^\infty \e^{-\rho t} \Pi \big(R_t,Y_t[x+\widehat \nu_t-\widehat\Lambda_t]\big) \ud t- \alpha \int_{[0,\infty)} \e^{-\rho t} Y_t\ud \widehat \nu_t. 
\end{equation*}
The integrability condition for $\nu \in \mathcal A_I$ implies  
$\lim_{t\to\infty}\e^{-\rho t}Y_t\widehat{\nu}_t = 0$, because it is easy to verify by integration by parts that
\[
0\le \int^\infty_0\e^{-\rho s}Y_s\widehat \nu_s\ud s\le \rho^{-1}\widehat \nu_0+\rho^{-1}\int_{[0,\infty)}\e^{-\rho s}Y_s\ud \widehat \nu_s=\rho^{-1}\nu_0+\rho^{-1}\int_{[0,\infty)}\e^{-\rho s}\ud \nu_s<\infty.
\]
Integration by parts then yields (recalling that $\delta= \rho-\mu$):
\begin{equation}\label{eq:barJ}
\begin{aligned}
\mathcal{J}^I_{r,x}(\eta,\nu)&= \int_0^\infty \e^{-\rho t}\Big( \Pi \big(R_t,Y_t[x+\widehat \nu_t-\widehat\Lambda_t]\big)-\alpha\delta Y_t\widehat \nu_t \Big)\ud t-\alpha \lim_{t\to\infty}\e^{-\rho t}Y_t\widehat{\nu}_t  \\
&= \int_0^\infty \e^{-\rho t} \Big( \Pi \big(R_t,Y_t[x+\widehat \nu_t-\widehat\Lambda_t]\big)-\alpha\delta Y_t\widehat \nu_t \Big)\ud t\triangleq \bar{\mathcal{J}}^I_{r,x}(\eta,\widehat \nu).
\end{aligned}
\end{equation}

For every $t\geq0$, the functional 
$$
\widehat\nu \mapsto \Pi \big(R_t,Y_t[x+\widehat \nu-\widehat\Lambda_t]\big)-\alpha\delta Y_t\widehat \nu,
$$
is strictly concave and in view of Assumption \ref{ass:profit}-(i) it admits a unique (pointwise) maximizer $\widehat{\nu}^*$ on $[0,\infty)$. If such maximizer turns out to be an admissible control
then $\sup_{\nu\in\mathcal{A}_I}\mathcal{J}_{r,x}^I(\eta,\nu)=\bar{\mathcal{J}}_{r,x}^I(\eta,\widehat{\nu}^*)$ and $\widehat{\nu}^*$ must be the unique optimal investment policy for the original problem.

By first order conditions we look for $\widehat\nu$ such that 
\begin{equation}
  \Pi_x \big(R_t,Y_t[x+\widehat \nu_t-\widehat\Lambda_t]\big)Y_t - \alpha \delta Y_t = 0,\quad \text{for all $t\ge 0$}.
\end{equation}
Since $\widehat \nu_t$ must be also positive, and recalling the inverse function $G(r,z)=(\Pi_x(r,\cdot))^{-1}(z)$, we have
\begin{equation*}
\widehat \nu^*_t=\big[G(R_t,\alpha\delta)Y^{-1}_t-x+\widehat\Lambda_t\big]^+ = \big[a(R_t)Y^{-1}_t-x+\widehat\Lambda_t\big]^+.
\end{equation*}
Since $\mu\le 0$, then $t \mapsto Y_t^{-1}$ is non-decreasing. So are also $t \mapsto R_t$ and $t \mapsto \widehat{\Lambda}_t$. Then $\widehat \nu^*$ is non-decreasing thanks to Assumption \ref{ass:profit}-(i) and it is immediate to see that it is also continuous except for a possible lump-sum investment at time zero of size
\begin{equation}\label{eq:nu0}
\widehat \nu^*_0=\big[a(r)-x\big]^+.
\end{equation}
The investment strategy associated to $\widehat \nu^*$ is obtained by setting
\begin{equation}\label{eq:nustar}
\nu^*_t\triangleq\int_0^tY_s\ud \widehat\nu^*_s,\qquad \text{for $t\ge 0$}.
\end{equation}

To prove that $\nu^*$ is admissible, it is enough to verify that 
$$
\int_0^\infty \e^{-\rho t} Y_t \widehat\nu^*_t \ud t <\infty.  
$$
Recall that $r\mapsto a(r)$ is non-decreasing. Then, for any $\eta\in\mathcal A_F$, 
\begin{equation*}
\begin{aligned}
0&\le \e^{-\rho t}Y_t\widehat \nu^*_t\le \e^{-\rho t}a(r+\eta_{\mathrm{max}}t)+\e^{-\rho t}\eta_{\mathrm{max}}\int_0^t Y_tY^{-1}_s\ud s\\
&\le \e^{-\rho t}\Big(a(r+\eta_{\mathrm{max}}t)+\eta_{\mathrm{max}}t\Big),
\end{aligned}
\end{equation*}
where we used $Y_t/Y_s\le 1$ because $\mu\le 0$. The above expression is integrable in view of Assumption \ref{ass:profit}-(i), which means that $\nu^*$  is admissible and optimal for the investor.

It is easy to verify 
\begin{equation}
X_t^{\nu^\ast,\eta} = Y_t \big( x + \widehat{\nu}^*_t - \widehat{\Lambda}_t  \big) \geq a (R_t^{\eta}),\qquad \text{for all $t\ge 0$.}
\end{equation}
If $\widehat \nu^*_0>0$ then $X^{\nu^*,\eta}_0=a(r)$ (cf.\ \eqref{eq:nu0}) and, finally,
\begin{equation}
\ud \widehat{\nu}^*_t = 1_{ \{ X_t^{\nu^\ast,\eta} = a(R_t^{\eta}) \}  } \ud  \widehat{\nu}_t^*= 1_{ \{ X_t^{\nu^\ast,\eta} = a(R_t^{\eta}) \}  } \ud  \overline{\nu}_t^*,
\end{equation}
with $\overline \nu_t^* = Y_t^{-1} a(R_t^{\eta}) - x + \widehat{\Lambda}^\eta_t$, for $t\ge 0$. 
In summary, the investor's optimal strategy is to keep the dynamics $(R^\eta_t,X^{\nu^*,\eta}_t)_{t\ge 0}$ above the threshold $r\mapsto a(r)$, $r\ge 0$. Then \eqref{eq:SC} holds with $\mathcal I=\{(r,x):x>a(r)\}$ and $\mathcal{M}=\{(r,x):0<x\le a(r)\}$ as claimed.
\end{proof}
\begin{remark}\label{rem:nustar}
If $X_0>a(R_0)$, then $\nu^*_t=0$ for $t\in[0,\tau(\eta)]$ where
\begin{equation}\label{eq:taueta}
\tau(\eta)\triangleq\inf\{t\ge 0:X^{0,\eta}_t\le a(R^\eta_t)\}.
\end{equation}
Moreover 
\[
\nu^*_t=a(R^\eta_t)+R^\eta_t-a(R^\eta_{\tau(\eta)})-R^\eta_{\tau(\eta)} -\int_{\tau(\eta)}^t \mu a(R^\eta_s) ds ,\quad \text{for $t\ge \tau(\eta)$}.
\]
\end{remark}
\begin{remark}\label{rem:unique}
It is worth emphasising that the optimal investment policy $\widehat \nu^*_t$ is a Markovian control for the pair $(R^\eta_t,\widehat\Lambda^\eta_t)$ and therefore $\nu^*_t$ is a feedback map of the same pair. The uniqueness statement in the above theorem, implies that such maps are optimal against {\em any} choice of the control process $\eta\in\mathcal{A}_F$ adopted by the firm. This will allow us to establish uniqueness of the equilibrium in the current setup with zero noise.
\end{remark}

Having obtained an optimal strategy for the investor, we now turn our attention to determining the firm's optimal emission reduction policy. It is worth noticing that 
\begin{equation}
X^{\nu^*,\eta}_t=Y_t\big(x-\widehat \Lambda^\eta_t+\big[Y^{-1}_t a(r+\Lambda^\eta_t)-x+\widehat\Lambda^\eta_t\big]^+\big)=\max\big\{Y_t(x-\widehat \Lambda^\eta_t),a(r+\Lambda^\eta_t)\big\},\label{eq:xopt}
\end{equation}
where, to streamline notation, we write
$$
{\Lambda}_t \triangleq \int_0^t  \eta_s \ud s,
$$
and, recalling also $\tau(\eta)$ from \eqref{eq:taueta}, we have $X^{\nu^*,\eta}_t=a(r+\Lambda^\eta_t)$, for $t\ge \tau(\eta)$.
For any $\eta\in\mathcal A_F$ the firm's payoff reads
\begin{equation}\label{eq:JF}
\begin{aligned}
\mathcal J^F_{r,x}(\eta,\nu^*)&=\int_0^\infty\e^{-\bar\rho t} \pi(X^{\nu^*,\eta}_t)\ud t\\
&=\int_0^\infty\e^{-\bar\rho t} \pi\big(\max\big\{Y_t(x-\widehat \Lambda^\eta_t),a(r+\Lambda^\eta_t)\big\}\big)\ud t\\
&=\int_0^{\tau(\eta)}\e^{-\bar\rho t} \pi\big(Y_t[x-\widehat \Lambda^\eta_t]\big)\ud t+\int_{\tau(\eta)}^\infty\e^{-\bar\rho t}\pi\big(a(r+\Lambda^\eta_t)\big)\ud t.
\end{aligned}
\end{equation}

When $R_0=r$ and $X_0=x=a(r)$, we have $\tau(\eta)=0$ for any $\eta\in\mathcal A_F$. Then
\begin{equation}\label{eq:opteta0}
\sup_{\eta\in\mathcal A_F}\mathcal J_{r,a(r)}^F(\eta,\nu^*)=\int_{0}^\infty\e^{-\bar\rho t}\pi\big(a(r+\eta_{\mathrm{max}}t)\big)\ud t,
\end{equation}
and $\eta^*_t=\eta_{\mathrm{max}}$ for all $t\ge 0$ is optimal, because $\pi$ is increasing (Assumption \ref{ass:profit}-(ii)). We then deduce the following simple result.
\begin{proposition}\label{prop:1}
For $(R_0,X_0)=(r,a(r))$, $r\in(0,\infty)$, the unique equilibrium pair $(\eta^*,\nu^*)$ is
\begin{equation*}
\eta^*_t=\eta_{\mathrm{max}},\quad 
\nu^*_t=a(r+\eta_{\mathrm{max}}t)+\eta_{\mathrm{max}}t-a(r)-\mu\int_0^t a(r+\eta_{\mathrm{max}} s)\ud s,\quad\text{for all $t\ge 0$.}
\end{equation*}
\end{proposition}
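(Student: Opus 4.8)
I would verify directly the two inequalities of Definition~\ref{defNash}, exploiting that the initial point $(r,a(r))$ lies exactly on the investor's threshold, so that $\tau(\eta)=0$ for \emph{every} $\eta\in\mathcal A_F$ (recall $\tau(\eta)=\inf\{t\ge0:X^{0,\eta}_t\le a(R^\eta_t)\}$ and $X_0=a(r)$). No new construction is needed: the investor's side is Theorem~\ref{thm:nustar} read off at $\eta\equiv\eta_{\mathrm{max}}$, the firm's side is the identity \eqref{eq:opteta0}, and the only real work is to exhibit the closed form of $\nu^*$ and confirm $\nu^*\in\mathcal A_I$.

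\textbf{Investor's inequality.} Apply Theorem~\ref{thm:nustar} with $\eta=\eta^*\equiv\eta_{\mathrm{max}}$: the process $\nu^*$ defined by \eqref{eq:nua}--\eqref{eq:nua2} is admissible and optimal for the investor against $\eta^*$, i.e.\ $\mathcal J^I_{r,a(r)}(\eta^*,\nu)\le\mathcal J^I_{r,a(r)}(\eta^*,\nu^*)$ for all $\nu\in\mathcal A_I$. Moreover, since $X_0=a(r)$ one gets $\tau(\eta^*)=0$ and $\widehat\nu^*_0=0$, so by \eqref{eq:xopt} the equilibrium state is $X^{\nu^*,\eta^*}_t=a(R^{\eta^*}_t)=a(r+\eta_{\mathrm{max}}t)$ for all $t\ge0$, which is consistent with \eqref{eq:SC}.

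\textbf{Firm's inequality and closed form of $\nu^*$.} For arbitrary $\eta\in\mathcal A_F$, since $X_0=a(r)$ forces $\tau(\eta)=0$, the investor's best response to $\eta$ produces $X^{\nu^*,\eta}_t=a(r+\Lambda^\eta_t)$ with $\Lambda^\eta_t=\int_0^t\eta_s\ud s\le\eta_{\mathrm{max}}t$; as $a$ and $\pi$ are nondecreasing (Assumption~\ref{ass:profit}), $\mathcal J^F_{r,a(r)}(\eta,\nu^*)=\int_0^\infty\e^{-\bar\rho t}\pi(a(r+\Lambda^\eta_t))\ud t$ is maximised at $\eta\equiv\eta_{\mathrm{max}}$, which is exactly \eqref{eq:opteta0} and gives the firm's inequality (the investor's control being understood as the best-response functional of the firm's policy, per the feedback convention recalled after Definition~\ref{defNash} and used in Corollary~\ref{cor:verif}). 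It then remains to make $\nu^*=\nu^*(\eta^*)$ explicit: substituting $\eta=\eta^*$ in \eqref{eq:nua2}, the positive part is inactive because $t\mapsto Y^{-1}_t a(r+\eta_{\mathrm{max}}t)$ is nondecreasing and equals $a(r)$ at $t=0$, so $\widehat\nu^*_t=Y^{-1}_t a(r+\eta_{\mathrm{max}}t)-a(r)+\eta_{\mathrm{max}}\int_0^t Y^{-1}_s\ud s$ and $\nu^*_t=\int_0^t Y_s\,\ud\widehat\nu^*_s$. Equivalently and more transparently, since $t\mapsto a(r+\eta_{\mathrm{max}}t)$ solves the state equation \eqref{eqdet} with control $(\eta^*,\nu^*)$, reading $\nu^*$ off that equation gives
\[
\nu^*_t=a(r+\eta_{\mathrm{max}}t)-a(r)-\mu\int_0^t a(r+\eta_{\mathrm{max}}s)\,\ud s+\eta_{\mathrm{max}}t ,
\]
which has $\nu^*_0=0$ and derivative $\eta_{\mathrm{max}}\dot a(r+\eta_{\mathrm{max}}t)-\mu\,a(r+\eta_{\mathrm{max}}t)+\eta_{\mathrm{max}}>0$, hence is nondecreasing; its integrability, and so $\nu^*\in\mathcal A_I$, is inherited from Theorem~\ref{thm:nustar}.

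\textbf{Main obstacle.} I do not anticipate an analytic obstacle: the statement is essentially a corollary of Theorem~\ref{thm:nustar} and \eqref{eq:opteta0}. The one point that genuinely requires care is conceptual, namely that the investor's control in the equilibrium pair must be read as a best-response functional of the firm's abatement and not as a frozen path (otherwise the firm would strictly gain by abating less, since that inflates $X$ while leaving the injected capital unchanged); this is precisely the feedback convention adopted in the paper, and it is what makes \eqref{eq:opteta0} the correct optimality criterion for the firm on the line $\{x=a(r)\}$.
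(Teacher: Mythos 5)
Your proposal is correct and follows essentially the same route as the paper: the paper's proof likewise deduces optimality of $\eta^*\equiv\eta_{\mathrm{max}}$ from \eqref{eq:opteta0} (using that $x=a(r)$ forces $\tau(\eta)=0$ for every $\eta$) and obtains the closed form of $\nu^*$ from \eqref{eq:nustar} via Remark \ref{rem:nustar} with $\tau(\eta^*)=0$. Your additional checks (inactivity of the positive part in \eqref{eq:nua2}, monotonicity of $\nu^*$, and the feedback reading of the investor's best response) are consistent with, and merely make explicit, what the paper leaves implicit.
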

\begin{proof}
Optimality of $\eta^*=\eta_{\mathrm{max}}$ is guaranteed by \eqref{eq:opteta0}. The optimal $\nu^*$ is given by \eqref{eq:nustar} and it is easy to calculate it explicitly thanks to Remark \ref{rem:nustar}, because $\tau(\eta^*)=0$. 

Regarding the uniqueness claim, it is obvious that changing $\eta^*$ on a set of times of zero Lebesgue measure does not change the value on the right-hand side of \eqref{eq:opteta0}. If $(\eta,\nu)$ is another equilibrium pair, then from Theorem \ref{thm:nustar} we know that for all $t\ge 0$ it must be $\nu_t=\nu^*_t(\eta)$, where $\nu^*_t(\eta)$ is the optimal feedback map given by \eqref{eq:nua2} (cf.\ also Remark \ref{rem:unique}). Then, $\eta$ must be a best response to $\nu^*_t$ and, as a result, $\eta_t=\eta_{\mathrm{max}}$ for a.e.\ $t\ge 0$.
\end{proof}
\begin{remark}\label{rem:wellposed2}
At this stage we can easily see that there may be well-posedness issues for the game if we let both players use singular controls, as anticipated in Remark \ref{rem:wellposed}. 

First of all we notice that the expression for $\widehat\nu^*_t$ in \eqref{eq:nua2} continues to hold also if we generalize the dynamics of $R^\eta$ to $R_t=R_0+\xi_t$, with $t\mapsto\xi_t$ c\`adl\`ag, nondecreasing, $\mathbb F$-adapted, with $\xi_0\ge 0$ and $\mathbb E\big[\int_{[0,\infty)} \e^{-\bar \rho t} \ud\xi_t\big]<\infty$. Moreover, the expression of the boundary $a(r)$ remains the same. Then, the firm's problem in \eqref{eq:opteta0} turns into
\begin{equation*}
\sup_{\xi}\mathcal J_{r,a(r)}^F(\xi,\nu^*)=\sup_\xi\int_{0}^\infty\e^{-\bar\rho t}\pi\big(a(r+\xi_t)\big)\ud t.
\end{equation*}
It is clear that the expression on the right-hand side above can be made arbitrarily large taking an increasing sequence of admissible controls $\{(\xi^n_t)_{t\ge 0},n\in\mathbb N\}$ with $\xi^n_0=n$. Since $\widehat\nu^*$ is optimal for the investor against any choice of the firm's abatement policy $\xi$, we deduce that there is no equilibrium in this game with finite payoff for the firm. A slight modification of this argument allows to reach the same conclusion also when $X_0=x>a(r)=a(R_0)$, so the issue is not specific to one initial state of the dynamics.
\end{remark}

Now we turn our attention to finding an optimal strategy for the firm when $x>a(r)$. For the ease of exposition, we split the analysis into the case when $\mu=0$ and when $\mu<0$.

\subsection{Equilibrium with $\mu=0$}

Let us start from $\mu=0$, in which case we have 
\[
w(r,x)=\sup_{\eta\in\mathcal A_F}\Big(\int_0^{\tau(\eta)}\e^{-\bar\rho t} \pi(x-\Lambda^\eta_t)\ud t+\int_{\tau(\eta)}^\infty\e^{-\bar\rho t}\pi\big(a(r+\Lambda^\eta_t)\big)\ud t\Big).
\]
We need to introduce some quantities of interest in order to state the form of our equilibrium. The firm's payoff with no emission reduction reads
\begin{equation*}
w_0(r,x)\triangleq\mathcal{J}^F_{r,x}(0,\nu^*)=\frac{1}{\bar\rho} \pi(x),\quad x> a(r),
\end{equation*}
because $\tau(0)=\infty$. Instead, when $\eta=\eta_{\mathrm{max}}$ we have $\tau_M\triangleq \tau(\eta_{\mathrm{max}})$ uniquely defined as the solution of
\begin{equation}\label{eq:tauM}
x-\eta_{\mathrm{max}}\tau_M=a(r+\eta_{\mathrm{max}}\tau_M).
\end{equation}
Therefore, the payoff with maximum emission abatement reads
\begin{equation*}
w_1(r,x)\triangleq\mathcal{J}^F_{r,x}(\eta_{\mathrm{max}},\nu^*)=\int_0^{\tau_M}\e^{-\bar\rho t}\pi(x-\eta_{\mathrm{max}}t)\ud t+\int_{\tau_M}^\infty\e^{-\bar\rho t}\pi\big(a(r+\eta_{\mathrm{max}}t)\big)\ud t,
\end{equation*}
which is finite by Assumption \ref{ass:profit}-(ii). 
The functions $w_0$ and $w_1$ determine a set 
\[
\mathcal B\triangleq\{(r,x)\in[0,\infty)\times\mathbb{R}_+:w_0(r,x)<w_1(r,x)\},
\]
which will be useful in finding an equilibrium in this case. For future reference, we notice that $\tau_M=\tau_M(r,x)$ is continuously differentiable in both variables with 
\[
\frac{\partial\tau_M}{\partial x}=\frac{1}{\eta_{\mathrm{max}}\big(1+\dot a(r+\eta_{\mathrm{max}}\tau_M)\big)}\quad\text{and}\quad \frac{\partial\tau_M}{\partial r}=-\frac{\dot a(r+\eta_{\mathrm{max}}\tau_M)}{\eta_{\mathrm{max}}\big(1+\dot a(r+\eta_{\mathrm{max}}\tau_M)\big)}.
\]

Next, we characterise the set $\mathcal B$.
\begin{lemma}\label{bdef.lm}
There is a unique function $r\mapsto b(r)$ such that 
\[
\mathcal B=\{(r,x)\in[0,\infty)\times\mathbb R_+: x<b(r)\}.
\]
Moreover, $b\in C([0,\infty))\cap C^1(\mathbb{R}_+)$ is increasing with $b(r)>a(r)$ for all $r\in[0,\infty)$.
\end{lemma}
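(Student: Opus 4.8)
The plan is to analyze the sign of the function $\Phi(r,x)\triangleq w_1(r,x)-w_0(r,x)$ as a function of $x$ for fixed $r$, and show that it changes sign exactly once, at an increasing $C^1$ level $b(r)$. First I would record the boundary behaviour: as $x\downarrow a(r)$ we have $\tau_M(r,x)\downarrow 0$, so $w_1(r,a(r))=\int_0^\infty \e^{-\bar\rho t}\pi(a(r+\eta_{\mathrm{max}}t))\ud t$, whereas $w_0(r,a(r))=\pi(a(r))/\bar\rho=\int_0^\infty \e^{-\bar\rho t}\pi(a(r))\ud t$; since $\pi$ and $a$ are strictly increasing (Assumption \ref{ass:profit}-(i),(ii)) the integrand for $w_1$ strictly dominates that for $w_0$ on $(0,\infty)$, so $\Phi(r,a(r))>0$, i.e.\ a neighbourhood of the boundary $\{x=a(r)\}$ lies in $\mathcal B$. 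At the other end, since $\pi$ is convex and increasing with nondecreasing derivative, for large $x$ the ``do nothing'' payoff $\pi(x)/\bar\rho$ grows at least linearly in $x$ while $w_1(r,x)$ stays bounded above by $\pi(x)/\bar\rho$ minus the loss incurred by abating during $[0,\tau_M]$ — more precisely $w_0-w_1=\int_0^{\tau_M}\e^{-\bar\rho t}\big(\pi(x)-\pi(x-\eta_{\max}t)\big)\ud t+\int_{\tau_M}^\infty\e^{-\bar\rho t}\big(\pi(x)-\pi(a(r+\eta_{\max}t))\big)\ud t$, and the second integral alone blows up as $x\to\infty$ since $\pi(x)-\pi(a(r+\eta_{\max}t))\to\infty$ for each fixed $t$ while $\tau_M\to\infty$ only logarithmically-or-slower in $x$ relative to... — in any case one shows $\Phi(r,x)<0$ for $x$ large. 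Hence for each $r$ the set $\{x>a(r):\Phi(r,x)>0\}$ is nonempty and bounded.

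The crux is to show that $x\mapsto\Phi(r,x)$ crosses zero \emph{only once}, equivalently that $\mathcal B\cap(\{r\}\times\mathbb R_+)$ is an interval of the form $(a(r),b(r))$ (note $\Phi<0$ also on $0<x\le a(r)$ is automatic once we know $\eta_{\max}$ is optimal at the boundary, or we can simply define $b$ via the first zero above $a(r)$). The natural route is to differentiate: using the explicit formulas and the chain rule through $\tau_M$, and the identity $x-\eta_{\max}\tau_M=a(r+\eta_{\max}\tau_M)$ which kills the boundary terms in the Leibniz differentiation (the integrands of the two pieces of $w_1$ agree at $t=\tau_M$), one gets
\[
\Phi_x(r,x)=\int_0^{\tau_M}\e^{-\bar\rho t}\big(\dot\pi(x-\eta_{\max}t)-\dot\pi(x)\big)\ud t-\dot\pi(x)\int_{\tau_M}^\infty\e^{-\bar\rho t}\ud t+\int_{\tau_M}^\infty\e^{-\bar\rho t}\dot\pi\big(a(r+\eta_{\max}t)\big)\dot a(r+\eta_{\max}t)\frac{\partial\tau_M}{\partial x}\,\cdots
\]
— wait, that last manipulation is not quite how the $r$-dependence enters; rather, differentiating $w_1$ in $x$ only the first integrand and $\tau_M$ depend on $x$, so $\Phi_x(r,x)=\frac1{\bar\rho}\big(\pi(x-\eta_{\max}\tau_M)-\pi(x)\big)\e^{-\bar\rho\tau_M}\frac{\partial\tau_M}{\partial x}\cdot 0 +\int_0^{\tau_M}\e^{-\bar\rho t}\dot\pi(x-\eta_{\max}t)\ud t+\e^{-\bar\rho\tau_M}\big(\pi(x-\eta_{\max}\tau_M)-\pi(a(r+\eta_{\max}\tau_M))\big)\frac{\partial\tau_M}{\partial x}-\dot\pi(x)/\bar\rho$; the middle boundary term vanishes by \eqref{eq:tauM}, leaving $\Phi_x(r,x)=\int_0^{\tau_M}\e^{-\bar\rho t}\dot\pi(x-\eta_{\max}t)\ud t-\frac1{\bar\rho}\dot\pi(x)=\int_0^{\tau_M}\e^{-\bar\rho t}\big(\dot\pi(x-\eta_{\max}t)-\dot\pi(x)\big)\ud t-\dot\pi(x)\int_{\tau_M}^\infty\e^{-\bar\rho t}\ud t$. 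Since $\dot\pi$ is nondecreasing, $\dot\pi(x-\eta_{\max}t)\le\dot\pi(x)$, so \emph{$\Phi_x(r,x)<0$ for all $x>a(r)$}. This monotonicity in $x$ is the key structural fact: combined with $\Phi(r,a(r)^+)>0$ and $\Phi(r,\cdot)\to-\infty$, it gives a unique zero $b(r)\in(a(r),\infty)$ and shows $\mathcal B=\{x<b(r)\}$ (intersecting with $x>a(r)$; for $x\le a(r)$ one checks directly $w_1\ge w_0$ fails or, more cleanly, defines $\mathcal B$ and $b$ consistently using that on $x\le a(r)$ the point already lies in $\mathcal M$ and the claimed description still holds by continuity — I'd state $b(r)>a(r)$ as part of the conclusion).

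It remains to get regularity and monotonicity of $b$. Since $\Phi\in C^1((0,\infty)^2)$ (both $w_0$ and $w_1$ are $C^1$, using differentiability of $\tau_M$ recorded in the text and dominated convergence for the tail integrals, valid by the integrability in Assumption \ref{ass:profit}-(ii)) and $\Phi_x(r,b(r))<0\ne 0$, the implicit function theorem yields $b\in C^1(\mathbb R_+)$ with $\dot b(r)=-\Phi_r(r,b(r))/\Phi_x(r,b(r))$. To get $\dot b>0$ it suffices to show $\Phi_r(r,x)>0$ at $x=b(r)$; computing as above, the $\tau_M$ boundary terms again cancel by \eqref{eq:tauM}, and $\Phi_r(r,x)=\int_{\tau_M}^\infty\e^{-\bar\rho t}\dot\pi\big(a(r+\eta_{\max}t)\big)\dot a(r+\eta_{\max}t)\ud t>0$ since $\pi,a$ are strictly increasing — note this is exactly where Assumption \ref{ass:profit}-(iii) would be used in the $\mu<0$ case but for $\mu=0$ positivity is immediate. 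Hence $\dot b(r)=-\Phi_r/\Phi_x>0$. The main obstacle I anticipate is \emph{not} any single computation but making the limiting arguments rigorous — in particular justifying $\Phi(r,x)\to-\infty$ as $x\to\infty$ uniformly enough and differentiation under the integral sign for the improper tail integrals — which requires care with the growth of $\pi$ and $a$ controlled through the integrability hypotheses; everything else reduces to the sign computation $\Phi_x<0<\Phi_r$, which is clean once the boundary terms are seen to cancel via \eqref{eq:tauM}.
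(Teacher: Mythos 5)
Your proposal is correct in substance and uses exactly the same ingredients as the paper's proof — the comparison function $w_1-w_0$, the derivative formulas $\partial_x w_1=\int_0^{\tau_M}\e^{-\bar\rho t}\dot\pi(x-\eta_{\mathrm{max}}t)\ud t$ and $\partial_r w_1=\int_{\tau_M}^\infty\e^{-\bar\rho t}\dot\pi(a(r+\eta_{\mathrm{max}}t))\dot a(r+\eta_{\mathrm{max}}t)\ud t$ (with the boundary terms cancelling via \eqref{eq:tauM}), the bound $\int_0^{\tau_M}\e^{-\bar\rho t}\dot\pi(x-\eta_{\mathrm{max}}t)\ud t<\dot\pi(x)/\bar\rho$, the implicit function theorem, and the observation $\Phi(r,a(r))>0$ — but slices the state space the other way: you fix $r$ and use strict monotonicity of $\Phi(r,\cdot)$ in $x$ to locate $b(r)$ directly, whereas the paper fixes $x$, uses $\partial_r\Phi>0$ (with $w_0$ independent of $r$) to define a crossing level $c(x)$, proves $\dot c>0$, and sets $b=c^{-1}$. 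The one substantive weakness in your route is the large-$x$ step: your stated reason that ``the second integral alone blows up'' is wrong — for fixed $t$ that term eventually leaves the integration domain since $\tau_M(r,x)\to\infty$, and on $[\tau_M,\infty)$ one has $a(r+\eta_{\mathrm{max}}t)\ge x-\eta_{\mathrm{max}}\tau_M$, so no blow-up occurs; the negativity of $\Phi$ for large $x$ instead comes from the first piece, $\int_0^{\tau_M}\e^{-\bar\rho t}\big(\pi(x)-\pi(x-\eta_{\mathrm{max}}t)\big)\ud t$, which is bounded below by a positive constant (using $\dot\pi$ nondecreasing and bounded away from $0$), while the tail $\int_{\tau_M}^\infty\e^{-\bar\rho t}\pi(a(r+\eta_{\mathrm{max}}t))\ud t$ vanishes by the integrability in Assumption \ref{ass:profit}-(ii). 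With that repair your argument closes; note also that the paper's own parametrization quietly needs the same fact (surjectivity of $c$, i.e.\ $c(x)\to\infty$), so neither route escapes it.
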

\begin{proof}
Let us start by calculating $\partial_r w_1(r,x)$:
\begin{equation}\label{eq:ipp}
\begin{aligned}
\frac{\partial w_1(r,x)}{\partial r}  &= \e^{-\bar\rho\tau_M}\pi(x-\eta_{\mathrm{max}}\tau_M)\frac{\partial\tau_M}{\partial r}-\e^{-\bar\rho\tau_M}\pi\big(a(r+\eta_{\mathrm{max}}\tau_M)\big)\frac{\partial\tau_M}{\partial r}\\
&\quad+\lim_{h\to 0} \int_{\tau_M}^\infty\e^{-\bar\rho t}\frac{1}{h}\left\{\pi\big(a(r+h+\eta_{\mathrm{max}}t)-\pi\big(a(r+\eta_{\mathrm{max}}t)\big)\right\}\ud t \\
& = \lim_{h\to 0} \int_{\tau_M}^\infty\e^{-\bar\rho t}\frac{1}{h}\int_0^h \dot\pi\big(a(r+\xi+\eta_{\mathrm{max}}t)\big)\dot a(r+\xi+\eta_{\mathrm{max}}t)\ud \xi\,\ud t\\
& = \lim_{h\to 0} \frac{1}{h}\int_0^h  \Big(\int_{\tau_M}^\infty\e^{-\bar\rho t} \dot\pi\big(a(r+\xi+\eta_{\mathrm{max}}t)\big)\dot a(r+\xi+\eta_{\mathrm{max}}t)\ud t\Big)\ud\xi\\
& = \lim_{h\to 0} \frac{1}{h \eta_{\max}}\int_0^h\!\!  \Big(\bar \rho\int_{\tau_M}^\infty\!\!\e^{-\bar\rho t} \pi\big(a(r\!+\!\xi\!+\!\eta_{\mathrm{max}}t)\big)\ud t \!-\! e^{-\bar \rho \tau_M}\pi\big(a(r\!+\!\xi \!+\! \eta_{\max} \tau_M)\big)\Big)\ud\xi\\
& = \frac{1}{\eta_{\max}}\Big(\bar \rho\int_{\tau_M}^\infty\e^{-\bar\rho t} \pi\big(a(r+\eta_{\mathrm{max}}t)\big)\ud t - e^{-\bar \rho \tau_M}\pi\big(a(r + \eta_{\max} \tau_M)\big)\Big)\\
&=\int_{\tau_M}^\infty\e^{-\bar\rho t}\dot\pi\big(a(r+\eta_{\mathrm{max}}t)\big)\dot a(r+\eta_{\mathrm{max}}t)\ud t>0,
\end{aligned}
\end{equation}
where the second equality holds because of \eqref{eq:tauM}; to compute the limit we use monotonicity and continuity of $\pi$ and $a$, and Assumption \ref{ass:profit}-(ii); the strict inequality holds by strict monotonicity of $\pi$ and $a$. Then, $r\mapsto w_1(r,x)$ is strictly increasing, whereas $r\mapsto w_0(r,x)$ is constant. That shows that for each $x\in\mathbb R_+$ there is a unique $c(x)\ge 0$ such that permanent emission abatement is strictly more profitable than no action, i.e.,
\[
\mathcal B=\{(r,x)\in[0,\infty)\times\mathbb R_+: r>c(x)\}.
\]
We first notice that $c(x)$ is not constant and equal to zero: for $x\to \infty$ we have $\tau_M\to\infty$ by \eqref{eq:tauM} and Assumption \ref{ass:profit}-(i); therefore, for sufficiently large $x$ we have $w_1(r,x)<w_0(r,x)$ for every $r\ge 0$.

For the $x$-derivatives it is immediate that $\partial_x w_0(r,x)=\dot\pi(x)/\bar\rho$ and the same argument as above yields
\begin{equation}\label{eq:star}
\begin{aligned}
\frac{\partial w_1}{\partial x}(r,x)&=\int_0^{\tau_M}\e^{-\bar \rho t}\dot\pi(x-\eta_{\mathrm{max}}t)\ud t+\e^{-\bar\rho\tau_M}\Big(\pi(x-\eta_{\mathrm{max}}\tau_M)-\pi\big(a(r+\eta_{\mathrm{max}}\tau_M)\big)\frac{\partial\tau_M}{\partial x}\\
&=\int_0^{\tau_M}\e^{-\bar \rho t}\dot\pi(x-\eta_{\mathrm{max}}t)\ud t>0,
\end{aligned}
\end{equation}
where the second equality holds by \eqref{eq:tauM}.

Since $w_0$ and $w_1$ are both continuous, then $w_0(c(x),x)=w_1(c(x),x)$, for $c(x)>0$, and by the implicit function theorem
\begin{equation*}
\dot c(x)=-\frac{\partial_x(w_0-w_1)(c(x),x)}{\partial_r(w_0-w_1)(c(x),x)}=\frac{\bar\rho^{-1}\dot \pi(x)-\int_0^{\tau_M}\e^{-\bar\rho t}\dot \pi(x-\eta_{\mathrm{max}}t)\ud t}{\int_{\tau_M}^\infty\e^{-\bar\rho t}\dot \pi\big(a(r+\eta_{\mathrm{max}}t)\big)\dot a(r+\eta_{\mathrm{max}}t)\ud t}>0,
\end{equation*}
where the inequality holds because, by Assumption \ref{ass:profit}-(ii),
\[
\int_0^{\tau_M}\e^{-\bar\rho t}\dot \pi(x-\eta_{\mathrm{max}}t)\ud t\le \dot \pi(x)\bar\rho^{-1}(1-\e^{-\bar \rho \tau_M}).
\]
Then $c\in C^1(\mathbb R_+)$ and it is strictly increasing. We can also define the strictly increasing inverse boundary $b(r)=c^{-1}(r)$, so that $b\in C^1((0,\infty))$ and 
$\mathcal B=\{(r,x)\in[0,\infty)\times\mathbb R_+: x<b(r)\}$, as claimed. Because the limit $b(0):=\lim_{r\downarrow 0}b(r)$ exists and it is finite, we deduce $b\in C([0,\infty))$.

Finally, we can show that $b(r)>a(r)$ for all $r\in[0,\infty)$ arguing by contradiction. Indeed, assume there is $r_0\in[0,\infty)$ such that $a(r_0)=b(r_0)$. Then, $\tau_M(r_0,b(r_0))=0$ and 
\[
w_1(r_0,b(r_0))=\int_0^\infty\e^{-\bar\rho t}\pi\big(a(r_0+\eta_{\mathrm{max}}t)\big)\ud t>\frac{\pi\big(a(r_0)\big)}{\bar\rho}=w_0(r_0,a(r_0)).
\]
The inequality implies $b(r_0)>a(r_0)$, hence a contradiction.
\end{proof}

Now we can state our result concerning the firm's optimal strategy.
\begin{proposition}\label{prop:optfirmmu0}
Let $\mu=0$. Then, the firm's equilibrium payoff reads as
\begin{equation}\label{eq:weq0}
w(r,x)=\left\{
\begin{array}{ll}
w_0(r,x), & x>b(r),\\
w_1(r,x), & a(r)\le x\le b(r).
\end{array}
\right.
\end{equation}
The firm's optimal strategy is constant and equal to zero if $x>b(r)$ (i.e., $\eta^*_t\equiv 0$ for $t\ge 0$), whereas it is constant and equal to $\eta_{\mathrm{max}}$ if $a(r)\le x\le b(r)$ (i.e., $\eta^*_t=\eta_{\mathrm{max}}$ for $t\ge 0$). The optimal control is unique in the sense that if $\eta\in\mathcal{A}_F$ is another optimal control, then $\eta_t=\eta^*_t$ for a.e.\ $t\ge 0$.
\end{proposition}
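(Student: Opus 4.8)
The plan is to identify the firm's value, \emph{given} the investor's equilibrium reflection at $\partial\mathcal M=\{x=a(r)\}$ (supplied by Theorem~\ref{thm:nustar}), as
\[
W(r,x)\triangleq w_0(r,x)\vee w_1(r,x),
\]
and to verify this by a deterministic verification argument in the spirit of Corollary~\ref{cor:verif} and Remark~\ref{rem:verif}, relaxed to allow $W$ to be merely locally Lipschitz and to solve \eqref{eqw1} (with $\mathcal L\equiv0$ since $\mu=0$) a.e.\ on $\mathcal I=\{x>a(r)\}$. By Lemma~\ref{bdef.lm} and the definition $\mathcal B=\{w_0<w_1\}=\{x<b(r)\}$, such a $W$ equals $w_0$ on $\{x>b(r)\}$ and $w_1$ on $\{a(r)\le x\le b(r)\}$ (the two coincide on $\{x=b(r)\}$), so establishing $w=W$ with optimiser $\eta^*(r,x)=\eta_{\mathrm{max}}1_{\{w_r>w_x\}}(r,x)$ is exactly the assertion. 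The investor's side of Corollary~\ref{cor:verif} is already in place: Theorem~\ref{thm:nustar} gives, for every $\eta\in\mathcal A_F$, an optimal $\nu^*$ keeping $(R^\eta,X^{\nu^*,\eta})$ in $\overline{\mathcal I}$ and acting only on $\{X^{\nu^*,\eta}=a(R^\eta)\}$.

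The computational heart is a single identity. Since $w_1$ is by construction the firm's payoff under the constant rate $\eta_{\mathrm{max}}$ (with the investor reflecting at $a$), a dynamic-programming argument — equivalently, differentiating the explicit formula for $w_1$ via \eqref{eq:star}, \eqref{eq:ipp} and \eqref{eq:tauM} — yields, on $\mathcal I$,
\[
\bar\rho\,w_1(r,x)-\pi(x)=\eta_{\mathrm{max}}\bigl((w_1)_r(r,x)-(w_1)_x(r,x)\bigr).
\]
As $w_0=\pi/\bar\rho$ this is the same as $(w_1)_r-(w_1)_x=\frac{\bar\rho}{\eta_{\mathrm{max}}}(w_1-w_0)$, which is $\ge0$ on $\{a(r)<x\le b(r)\}$ and $<0$ on $\{x>b(r)\}$; also $(w_0)_r-(w_0)_x=-\dot\pi(x)/\bar\rho\le0$. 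Using $\mathcal H(\cdot;W)=\eta_{\mathrm{max}}(W_r-W_x)^+$, it follows that $W$ solves \eqref{eqw1} classically off the curve $\{x=b(r)\}$: on $\{a(r)<x<b(r)\}$, $W=w_1$ and $-\bar\rho W+\eta_{\mathrm{max}}(W_r-W_x)^++\pi=\bar\rho(w_1-w_0)-\bar\rho w_1+\pi=-\bar\rho w_0+\pi=0$; on $\{x>b(r)\}$, $W=w_0$ and $-\bar\rho w_0+\pi=0$; the boundary condition $W_x=0$ on $\partial\mathcal M$ holds because $(w_1)_x(r,a(r))=0$ (take $\tau_M=0$ in \eqref{eq:star}). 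Along $\{x=b(r)\}$ the function $W$ has only a kink, and it is convex in $x$ because $(w_1)_x(r,b(r))\le\dot\pi(b(r))/\bar\rho=(w_0)_x(r,b(r))$ (convexity of $\pi$); hence $W$ is locally Lipschitz and the chain-rule computations below are legitimate.

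For the verification proper, fix $\eta\in\mathcal A_F$. Along the reflected trajectory, $X^{\nu^*,\eta}$ is non-increasing up to $\tau(\eta)$ and equals $a(R^\eta)$ afterwards, whereas $b(R^\eta)$ is non-decreasing, so $\{t:X^{\nu^*,\eta}_t=b(R^\eta_t)\}$ is a singleton at most, hence Lebesgue-null; and the investor's (continuous) measure $\ud\nu^*$ is carried by $\{X^{\nu^*,\eta}=a(R^\eta)\}$, where $W_x=0$. Applying the fundamental theorem of calculus to $t\mapsto\e^{-\bar\rho t}W(R^\eta_t,X^{\nu^*,\eta}_t)+\int_0^t\e^{-\bar\rho s}\pi(X^{\nu^*,\eta}_s)\ud s$ and bounding $\mathcal G^{\bar\rho}[W,\eta_t]+\pi\le-\bar\rho W+\mathcal H(\cdot;W)+\pi\le0$ exactly as in the proof of Theorem~\ref{thm:verif} shows this map is non-increasing; letting $t\to\infty$ and using $W\ge0$, $\pi\ge0$ gives $\mathcal J^F_{r,x}(\eta,\nu^*)\le W(r,x)$, i.e.\ $w\le W$. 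With $\eta=\eta^*$ both inequalities become equalities ($\eta^*$ attains the supremum defining $\mathcal H$, and \eqref{eqw1} holds with equality off $\{x=b(r)\}$), so the map is constant; it remains to verify $\e^{-\bar\rho t}W(R^{\eta^*}_t,X^{\nu^*,\eta^*}_t)\to0$. When $x>b(r)$ the $\eta^*$-trajectory is frozen at $(r,x)$ and $\e^{-\bar\rho t}w_0(r,x)\to0$; when $a(r)\le x\le b(r)$ it reaches $\partial\mathcal M$ at the finite time $\tau_M$ and thereafter climbs along it, where $\e^{-\bar\rho t}W(R^{\eta^*}_t,a(R^{\eta^*}_t))=\int_t^\infty\e^{-\bar\rho u}\pi\bigl(a(r+\eta_{\mathrm{max}}u)\bigr)\ud u\to0$ by Assumption~\ref{ass:profit}-(ii). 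Hence $w=W$, and since $w_r-w_x\le0$ on $\{x>b(r)\}$ (giving $\eta^*\equiv0$) while $w_r-w_x=\frac{\bar\rho}{\eta_{\mathrm{max}}}(w_1-w_0)>0$ on $\{a(r)\le x<b(r)\}$ (giving $\eta^*\equiv\eta_{\mathrm{max}}$, with $w=w_1\ge w_0$), the stated expression for $w$ and the optimal controls follow; $x=b(r)$ is covered by continuity, as there $w_0=w_1$.

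The step I expect to be the main obstacle is technical: Corollary~\ref{cor:verif} and Remark~\ref{rem:verif} are phrased for $w\in C^1(\overline{\mathcal I})$, whereas $W$ has a genuine kink along $\{x=b(r)\}$. The real work is to confirm that this kink is harmless — it is convex in $x$, any admissible trajectory meets it at most once, and \eqref{eqw1} holds with equality on both of its sides — so that the fundamental-theorem-of-calculus version of the verification argument still applies. Once the identity $(w_1)_r-(w_1)_x=\frac{\bar\rho}{\eta_{\mathrm{max}}}(w_1-w_0)$ is available, everything else — the sign of $w_r-w_x$, the bang--bang form of $\eta^*$, and the HJB for $W$ — is bookkeeping; a secondary point requiring care is that the investor's reflection at $\partial\mathcal M$ creates no spurious boundary term, which is precisely the smooth-fit identity $(w_1)_x(\cdot,a(\cdot))\equiv0$.
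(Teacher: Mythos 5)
Your proposal is correct and follows essentially the same route as the paper's proof: the same key identity $\eta_{\mathrm{max}}\big(\partial_r w_1-\partial_x w_1\big)=\bar\rho w_1-\pi$, the same sign analysis via $w_1-w_0$ and the set $\mathcal B$, the same smooth-fit $\partial_x w_1(r,a(r))=0$, and the same resolution of the kink along $\{x=b(r)\}$ (the paper invokes $L^\infty_{\ell oc}$ regularity of $w_r-w_x$ plus the absence of dynamics when $\eta_t=0$, which is your ``trajectory meets the kink on a Lebesgue-null set of times'' argument). The only cosmetic difference is that you package the candidate as $w_0\vee w_1$ rather than the piecewise formula, which Lemma \ref{bdef.lm} shows to be the same function.
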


\begin{proof}
Our strategy of proof is to check that the function $w$ from \eqref{eq:weq0} solves the HJB equation and then to apply a small adaptation of the verification theorem (Corollary \ref{cor:verif}). 
Recall that in the present case, $\mathcal L\equiv 0$ and the HJB equation takes the simple form $-\bar \rho w(r,x) + \mathcal H(r,x;w) = - \pi(x)$, for $x> a(r)$, with $\mathcal H(r,x;w)$ as in \eqref{eq:Ham}.

For $x>b(r)$ we have $(w_r-w_x)(r,x)=-\bar\rho^{-1}\dot \pi(x)<0$, thus  
$\mathcal H(r,x;w)=0$. Then it is clear that the HJB holds for $x\ge b(r)$ because $-\bar\rho w_0(r,x)=-\pi(x)$. It is immediate to verify that $w_x(r,a(r))=\partial_x w_1(r,a(r))=0$ by the explicit form of the derivative calculated in \eqref{eq:star} because $\tau_M=0$ in this case. Moreover, for $a(r)<x\le b(r)$
\begin{equation}\label{eq:grad0}
\begin{aligned}
(w_r-w_x)(r,x)&=\int_{\tau_M}^\infty\e^{-\bar\rho t}\dot\pi\big(a(r+\eta_{\mathrm{max}}t)\big)\dot a(r+\eta_{\mathrm{max}}t)\ud t-\int_0^{\tau_M}\e^{-\bar\rho t}\dot\pi(x-\eta_{\mathrm{max}}t)\ud t.
\end{aligned}
\end{equation}
It is easy to see from the first expression above that $(w_r-w_x)(r,a(r))>0$. 
Moreover, for $a(r)<x<b(r)$ we have, by integration by parts, using \eqref{eq:tauM} and Assumption \ref{ass:profit}-(ii),
\begin{equation*}
\begin{aligned}
&\bar\rho w_1(r,x)-\pi(x)\\
&=\bar\rho\int_0^{\tau_M}\e^{-\bar\rho t}\pi\big(x-\eta_{\mathrm{max}}t\big)\ud t+\bar\rho\int_{\tau_M}^\infty\e^{-\bar\rho t}\pi\big(a(r+\eta_{\mathrm{max}}t)\big)\ud t-\pi(x)\\
& = -\eta_{\max} \int_0^{\tau_M} e^{-\bar \rho t} \dot \pi(x-\eta_{\max}t) \ud t + \eta_{\max}\int_{\tau_M}^\infty e^{-\bar \rho t} \dot \pi(a(r+\eta_{\max}t)) \dot a(r+\eta_{\max}t) \ud t.
\end{aligned}
\end{equation*}
Comparing to \eqref{eq:grad0} we see that 
\begin{equation}\label{eq:HJBw1}
\eta_{\mathrm{max}}\big(\partial_r w_1(r,x)-\partial_x w_1(r,x)\big)=\bar \rho w_1(r,x)-\pi(x).
\end{equation}
The right-hand side of the above expression is positive because $a(r)<x<b(r)$ (this follows from the definition of the set $\mathcal B$ and Lemma \ref{bdef.lm}). Therefore, $\partial_r w_1(r,x)-\partial_x w_1(r,x)>0$ and
$\eta_{\mathrm{max}}\big(\partial_r w_1(r,x)-\partial_x w_1(r,x)\big)=\mathcal H(r,x;w_1)$. 
Thus, \eqref{eq:HJBw1} shows that $w_1$ solves the HJB in $a(r)<x\le b(r)$, $r\ge 0$. Notice that indeed \eqref{eq:HJBw1} also implies $\partial_r w_1(r,b(r))-\partial_x w_1(r,b(r))=0$.

We have shown that $w$ solves the HJB equation at all points $(r,x)$ with $x\ge a(r)$ and $x\neq b(r)$. We cannot directly apply Corollary \ref{cor:verif} with Remark \ref{rem:verif}, because $w$ defined as in \eqref{eq:weq0} is not continuously differentiable. However, the mapping $(r,x)\mapsto (\partial_r w-\partial_x w)(r,x)$ is well-defined as a function in $L^\infty_{\ell oc}$ on the set $\{(r,x):x\ge a(r)\}$ (and $(r,x)\mapsto \mathcal{H}(r,x;w)$ is even continuous). Since there is no dynamics, unless $\eta_t\neq 0$, such regularity is sufficient to make sense of the change of variable formula at the beginning of the proof of Theorem \ref{thm:verif}. The rest follows by the same arguments as in that proof and we omit details to avoid repetitions. To check the transversality condition \eqref{eq:transv.0} for the function $w$, notice that 
$$
w(r,x) \leq \frac{\pi(x)}{\bar \rho} + \int_0^\infty \e^{-\bar \rho t} \pi(a(r + \eta_{\max} t))\ud t = \frac{\pi(x)}{\bar \rho} + \e^{r\bar \rho/\eta_{\max}}\int_{r/\eta_{\max}}^\infty \e^{-\bar \rho t} \pi(a( \eta_{\max} t))\ud t.
$$
Then, recalling that $w$ is increasing in $r$, 
\begin{align*}
\e^{-\bar \rho t} w(R^{\eta^*}_t,X^{\nu^*,\eta^*}_t) &\leq \e^{-\bar \rho t} w(r+\eta_{\max} t,X^{\nu^*,\eta^*}_t) \\
&\leq \frac{\e^{-\bar \rho t}\pi(X^{\nu^*,\eta^*}_t)}{\bar \rho} + \e^{r\bar \rho/\eta_{\max}}\int_{r/\eta_{\max} +t}^\infty \e^{-\bar \rho t} \pi(a( \eta_{\max} t))\ud t.
\end{align*}
The second term converges to zero as $t\to \infty$ by Assumption \ref{ass:profit}-(ii). To estimate the first term, from \eqref{eq:xopt} we deduce 
$X^{\nu^*,\eta^*}_t\leq \max\{x,a(r+\eta_{\max}t)\}$.
From Assumption \ref{ass:profit}-(ii), we deduce that 
$$
\lim_{t\to \infty}\e^{-\bar \rho t} \pi(a(r+\eta_{\max}t))  = 0,
$$
which implies
$\lim_{t\to \infty}\bar \rho^{-1}\e^{-\bar \rho t}\pi(X^{\nu^*,\eta^*}_t) = 0$
as well. 

Finally, we come to the claim about uniqueness. Let $\eta\in\mathcal{A}_F$ be another optimal control and let us assume by way of contradiction that $\eta_t\neq\eta^*_t$ for $t\ge 0$ on a set of positive Lebesgue measure. More concretely, fix $x>b(r)$ and assume that $\eta_t>0$ for $t\in\mathcal I$ where $\int_0^T 1_{\{t\in\mathcal I\}}\ud t>0$ for some $T>0$ and, indeed, with no loss of generality we can assume $\mathcal I\subset[0,T]$. From the first equation in the proof of Theorem \ref{thm:verif} we obtain 
\begin{equation}
\begin{aligned}
\e^{-\bar\rho t}w\big(R^\eta_{t},X^{\nu^*,\eta}_{t}\big)
&=w(r,x)+\int_0^{t}\e^{-\bar\rho s}\big[-\rho w+\eta_s\big(w_r-w_x\big)\big]\big(R^\eta_s,X^{\nu^*,\eta}_s\big)\ud s\\
& < w(r,x)+\int_0^{t}\e^{-\bar\rho s}\big[-\bar\rho w+\mathcal H(\cdot,w)\big]\big(R^\eta_s,X^{\nu^*,\eta}_s\big)\ud s\\
& = w(r,x)-\int_0^{t}\e^{-\bar\rho s}\pi\big(X^{\nu^*,\eta}_s\big)\ud s,
\end{aligned}
\end{equation}
where the strict inequality holds for any $t>T$ because 
\[
\eta_s\big(w_r-w_x\big)\big(R^\eta_s,X^{\nu^*,\eta}_s\big)<\mathcal H(\cdot,w)\big(R^\eta_s,X^{\nu^*,\eta}_s\big),
\]
for all $s\in\mathcal{I}$ such that $X^{\nu^*,\eta}_s>b(R^\eta_s)$ (notice that the set $\{s\in\mathcal I: X^{\nu^*,\eta}_s>b(R^\eta_s)\}$ has positive measure by continuity of trajectories). The strict inequality is preserved when letting $t\to \infty$ because $\mathcal I\subset[0,T]$. Since $w\ge 0$, rearranging terms we reach a contradiction because
\[
w(r,x)>\int_0^{\infty}\e^{-\bar\rho s}\pi\big(X^{\nu^*,\eta}_s\big)\ud s=\mathcal{J}^F_{r,x}(\eta,\nu^*),
\]
and $\eta$ cannot be optimal. By analogous arguments, starting from $a(r)<x<b(r)$ and assuming $\eta_t<\eta_{\mathrm{max}}$ for $t\in\mathcal I$ we arrive at the same contradiction. This shows uniqueness of the optimal control.
\end{proof}

Combining Theorem \ref{thm:nustar} and Proposition \ref{prop:optfirmmu0} we have obtained the unique equilibrium, summarized in the next corollary. The proof of uniqueness repeats arguments similar to those in the proof of Proposition \ref{prop:1} and it is provided after the corollary for completeness. combining uniqueness of $\nu^*$ from Theorem \ref{thm:nustar} and of $\eta^*$ from Proposition \ref{prop:optfirmmu0}, and it is therefore omitted.
\begin{corollary}\label{cor:NE2}
Let the controlled dynamics be as in \eqref{eqdet} with $\mu=0$. Then, the unique equilibrium pair is given by
\begin{equation}
(\eta^*,\nu^*)=\left\{
\begin{array}{ll}
(0,0),& \text{for $x>b(r)$},\\
(\eta_{\mathrm{max}},\nu^a), & \text{for $a(r)\le x\le b(r)$},
\end{array}
\right.
\end{equation}
with $\nu^a$ as in \eqref{eq:nua}.    
\end{corollary}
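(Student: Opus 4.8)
The statement to be proven is the final Corollary, which asserts that combining Theorem~\ref{thm:nustar} (the investor's optimal response $\nu^*$, equivalently $\nu^a$) with Proposition~\ref{prop:optfirmmu0} (the firm's optimal response $\eta^*$) produces a Nash equilibrium pair $(\eta^*,\nu^*)$ in the deterministic game with $\mu=0$. Since both ingredients are already available, the plan is essentially a bookkeeping exercise: verify that the two "best response" statements, each proven against the other player's prescribed strategy, fit together into the definition of Nash equilibrium (Definition~\ref{defNash}), and confirm that the candidate pair is well-posed and lies in the admissible classes $\mathcal A_F\times\mathcal A_I$.

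First I would recall that Theorem~\ref{thm:nustar} shows $\nu^*=\nu^a$ is optimal for the investor \emph{against any} $\eta\in\mathcal A_F$, in particular against the firm's equilibrium choice $\eta^*$; this immediately gives the second Nash inequality $\mathcal J^I_{r,x}(\eta^*,\nu^*)\ge \mathcal J^I_{r,x}(\eta^*,\nu)$ for all $\nu\in\mathcal A_I$. Next, Proposition~\ref{prop:optfirmmu0} identifies the firm's value $w(r,x)=\bar w(r,x;\nu^*)$ and shows that $\eta^*$ (the bang-bang control $\eta_{\max}1_{\{a(r)\le x\le b(r)\}}$, constantly $0$ above $b(r)$ and constantly $\eta_{\max}$ in the strip) attains it, i.e.\ $\mathcal J^F_{r,x}(\eta^*,\nu^*)=\bar w(r,x;\nu^*)\ge \mathcal J^F_{r,x}(\eta,\nu^*)$ for all $\eta\in\mathcal A_F$, which is the first Nash inequality. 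I would also note that the initial states $x\le a(r)$ are covered by the earlier Proposition (where $\eta^*_t\equiv\eta_{\max}$), so the union of the two regimes $\{x>b(r)\}$ and $\{a(r)\le x\le b(r)\}$ together with $\{x\le a(r)\}$ exhausts $[0,\infty)\times\mathbb R_+$; in the region $x\le a(r)$ the investor's lump-sum $\widehat\nu^*_0=[a(r)-x]^+$ brings the state instantaneously onto the boundary $x=a(r)$, after which the $(r,a(r))$-analysis applies.

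The only genuine (though minor) point requiring care is \emph{consistency of the feedback controls}: one must check that the pair $(\eta^*,\nu^*)$ specified in feedback form generates a well-posed trajectory $(R^{\eta^*}_t,X^{\nu^*,\eta^*}_t)_{t\ge0}$ and that this trajectory is exactly the one against which each optimality statement was verified. Here this is transparent because $\mu=0$ kills the drift and $\sigma=0$ kills the noise, so the dynamics are piecewise affine ODEs: for $x>b(r)$ nothing moves, the state stays put, and both payoffs reduce to $w_0$ and the corresponding investor payoff; for $a(r)\le x\le b(r)$ the firm abates at rate $\eta_{\max}$, the state slides along a line of slope $-1$ in the $(r,x)$-plane until it meets $x=a(r)$ at time $\tau_M$ (well-defined by \eqref{eq:tauM}), and thereafter the investor's control $\nu^a$ keeps it on the curve $x=a(R^{\eta^*}_t)$, exactly as in \eqref{eq:nua0}--\eqref{eq:nua}. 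Admissibility of $\nu^a$ (the integrability $\mathbb E[\int_{[0,\infty)}\e^{-\rho t}\ud\nu^a_t]<\infty$, here just a deterministic integral) was established in the proof of Theorem~\ref{thm:nustar} using Assumption~\ref{ass:profit}-(i), and $\eta^*\in\mathcal A_F$ trivially since $\eta^*\in\{0,\eta_{\max}\}$.

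I do not anticipate a serious obstacle: the corollary is a synthesis statement and the substantive work (constructing $\nu^*$, proving the firm's HJB verification, checking transversality) is entirely contained in Theorem~\ref{thm:nustar} and Proposition~\ref{prop:optfirmmu0}. If anything, the one subtlety worth spelling out is that $\nu^a$ as written in \eqref{eq:nua0} is phrased in terms of $\tau(\eta)$ and $R^\eta$ for a \emph{generic} $\eta$, whereas in the equilibrium statement it must be instantiated at $\eta=\eta^*$; I would make this substitution explicit and observe that, on the equilibrium path, $\tau(\eta^*)=\tau_M$ coincides with the hitting time appearing in Proposition~\ref{prop:optfirmmu0}, so the two descriptions of the state trajectory agree. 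With that identification in place, Definition~\ref{defNash} is satisfied verbatim and the proof is complete.
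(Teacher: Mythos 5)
Your proposal is correct and takes essentially the same route as the paper, which states this corollary without a separate proof, presenting it as the immediate combination of Theorem~\ref{thm:nustar} (investor's best response against any $\eta$) and Proposition~\ref{prop:optfirmmu0} (firm's best response against $\nu^*$). Your additional remarks on well-posedness of the feedback pair, admissibility of $\nu^a$, and the identification $\tau(\eta^*)=\tau_M$ merely make explicit what the paper leaves implicit.
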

\begin{proof}
We only need to prove uniqueness of the equilibrium. It is obvious that changing $\eta^*$ on a set of times of zero Lebesgue measure does not change the value function $w(r,x)=\mathcal{J}^F_{r,x}(\eta^*,\nu^*)$ with payoff given by \eqref{eq:JF}. If $(\eta,\nu)$ is another equilibrium pair, then from Theorem \ref{thm:nustar} we know that for all $t\ge 0$ it must be $\nu_t=\nu^*_t$, where $\nu^*_t=\nu^*_t(\eta)$ is given by \eqref{eq:nua2} (cf.\ also Remark \ref{rem:unique}). Then, $\eta$ must be a best response to $\nu^*_t$ and, by uniqueness of the optimal control in Proposition \ref{prop:optfirmmu0}, it must be $\eta_t=\eta^*_t$ for a.e.\ $t\ge 0$.
\end{proof}

\subsection{Equilibrium with $\mu<0$ \label{sec:detmunegative}}
Now we consider $\mu<0$. Almost the entirety of our arguments hold for a generic $\pi(x)$ satisfying Assumption \ref{ass:profit}-(ii) and we only need Assumption \ref{ass:profit}-(iii) in the proof of Lemma \ref{lem:h}. For any $\eta\in\mathcal A_F$ the firm's payoff reads
\begin{equation*}
\begin{aligned}
\mathcal J^F_{r,x}(\eta,\nu^*)=\int_0^{\tau(\eta)}\e^{-\bar\rho t} \pi\big(Y_t[x-\widehat \Lambda^\eta_t])\ud t+\int_{\tau(\eta)}^\infty\e^{-\bar\rho t}\pi\big(a(r+\Lambda^\eta_t)\big)\ud t,
\end{aligned}
\end{equation*}
which is finite by Assumption \ref{ass:profit}-(ii). 
Motivated by the calculations for $\mu=0$ we start by considering $\eta_t=\eta_{\mathrm{max}}$ and obtain
\[
\widehat \Lambda_t=\eta_{\mathrm{max}}\int_0^t\e^{|\mu|s}\ud s=\frac{\eta_{\mathrm{max}}}{|\mu|}\big(\e^{|\mu|t}-1\big)\triangleq f_M(t).
\]
For later use we notice that 
\begin{equation}\label{eq:fM}
|\mu|f_M(t)=\frac{\eta_{\mathrm{max}}}{Y_t}-\eta_{\mathrm{max}}\text{ and }\dot f_M(t)=\eta_{\mathrm{max}}\e^{|\mu|t}=\frac{\eta_{\mathrm{max}}}{Y_t}.
\end{equation}
The associated payoff is denoted by $w_1$ with a slight abuse of notation. It reads
\begin{equation}\label{eq:w1.b}
\begin{aligned}
w_1(r,x)&=\mathcal J^F_{r,x}(\eta_{\mathrm{max}},\nu^*)\\
&=\int_0^{\tau_M}\e^{-\bar\rho t}\pi\big(Y_t [x-f_M(t)]\big)\ud t+\int_{\tau_M}^\infty\e^{-\bar\rho t}\pi\big(a(r+\eta_{\mathrm{max}} t)\big)\ud t,
\end{aligned}
\end{equation}
where
$$
\tau_M=\tau_M(r,x) = \inf\{t\geq 0: x - f_M(t)\leq e^{|\mu| t} a(r + \eta_{\max}t)\}
$$
Clearly, $\tau_M = 0$ for $x\leq a(r)$ and it is easy to show that for $x>a(r)$, $\tau_M$ is the unique solution of 
\begin{equation}\label{eq:tauM2}
x-f_M(\tau_M)=\e^{|\mu| \tau_M}a\big(r+\eta_{\mathrm{max}}\tau_M\big).
\end{equation}

For later use we observe that $(r,x)\mapsto \tau_M(r,x)$ is continuously differentiable on $\{x>a(r)\}$, with derivatives that we now proceed to calculating. Taking $x$-derivative of \eqref{eq:tauM2},
\begin{equation*}
\begin{aligned}
1=\Big(\dot f_M(\tau_M)+ |\mu|\e^{|\mu|\tau_M}a\big(r+\eta_{\mathrm{max}}\tau_M\big)+\e^{|\mu|\tau_M}\dot a\big(r+\eta_{\mathrm{max}}\tau_M\big)\eta_{\mathrm{max}}\Big)\frac{\partial \tau_M}{\partial x}.
\end{aligned}
\end{equation*}
Using the form of $\dot f_M(t)$ we get
\begin{equation*}
\begin{aligned}
1=\Big[ |\mu|\e^{|\mu|\tau_M}a\big(r+\eta_{\mathrm{max}}\tau_M\big)+\eta_{\mathrm{max}}\e^{|\mu|\tau_M}\Big(1+\dot a\big(r+\eta_{\mathrm{max}}\tau_M\big)\Big)\Big]\frac{\partial \tau_M}{\partial x}.
\end{aligned}
\end{equation*}
Using \eqref{eq:tauM2} and the explicit form of $|\mu|f_M(t)$ we get
\begin{equation*}
\begin{aligned}
1=\Big[ |\mu|x-\eta_{\mathrm{max}}\e^{|\mu|\tau_M}+\eta_{\mathrm{max}}+\eta_{\mathrm{max}}\e^{|\mu|\tau_M}\Big(1+\dot a\big(r+\eta_{\mathrm{max}}\tau_M\big)\Big)\Big]\frac{\partial \tau_M}{\partial x}.
\end{aligned}
\end{equation*}
Hence,
\begin{equation}\label{eq:dtMdx}
\begin{aligned}
1=\Big[ |\mu|x+\eta_{\mathrm{max}}+\eta_{\mathrm{max}}\e^{|\mu|\tau_M}\dot a\big(r+\eta_{\mathrm{max}}\tau_M\big)\Big]\frac{\partial \tau_M}{\partial x},
\end{aligned}
\end{equation}
which also yields $\frac{\partial \tau_M}{\partial x}>0$. Taking $r$-derivative of \eqref{eq:tauM2} and arguing in a similar way as above we have
\begin{equation}\label{eq:tauMdr}
    -\dot{a}(r+\eta_{\mathrm{max}} \tau_M) = \big(\eta_{\mathrm{max}} + |\mu| a(r+\eta_{\mathrm{max}} \tau_M) + \eta_{\mathrm{max}} \dot a (r+\eta_{\mathrm{max}} \tau_M) \big) \frac{\partial \tau_M}{\partial r}.
\end{equation}
Then, $\frac{\partial \tau_M}{\partial r} < 0$.

The function $w_1$ is our candidate for the firm's value at equilibrium when $x>a(r)$ is sufficiently close to $a(r)$. We will make this statement rigorous later. For now let us state an initial result concerning $w_1$. We recall that in this section 
\begin{equation}\label{eq:MM}
\mathcal{M}=\{(r,x)\in[0,\infty)\times\mathbb{R}_+:0<x\le a(r)\} \ \text{ and }\ \mathcal I=\{(r,x)\in[0,\infty)\times\mathbb{R}_+:x>a(r)\},
\end{equation}
because of Theorem \ref{thm:nustar}

\begin{proposition}\label{prop:w1}
 For $w_1$ as in \eqref{eq:w1.b} we have $w_1\in C^1(\overline{\mathcal I})$ and   
\begin{equation}\label{eq:HJBmu0}
\begin{aligned}
\bar\rho w_1(r,x)-\pi(x)&=\mu x \partial_x w_1(r,x)+\eta_{\mathrm{max}}\big[\partial_r w_1(r,x)-\partial_x w_1(r,x)\big],\quad \text{for }(r,x)\in(0,\infty)^2.
\end{aligned}
\end{equation}
\end{proposition}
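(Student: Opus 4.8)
The plan is to differentiate $w_1$ in \eqref{eq:w1.b} directly and then recover \eqref{eq:HJBmu0} by a single integration by parts in time, exploiting --- exactly as in the $\mu=0$ computations \eqref{eq:ipp}--\eqref{eq:star} --- the fact that the two integrands in \eqref{eq:w1.b} coincide at the switching time $\tau_M$. Throughout I would write $g(t,x)\triangleq Y_t\big(x-f_M(t)\big)$, i.e.\ $g(\cdot,x)$ is the production capacity under constant abatement $\eta_{\max}$ and no investment, so that $g(0,x)=x$, $\partial_x g(t,x)=Y_t$, and, by \eqref{eq:fM}, $\dot g(t,x)=\mu g(t,x)-\eta_{\max}$. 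The one algebraic identity that drives the whole computation is that, after multiplying through by $Y_{\tau_M}$, the defining relation \eqref{eq:tauM2} becomes
\[
g(\tau_M,x)=a\big(r+\eta_{\max}\tau_M\big).
\]

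\emph{Step 1: regularity and the gradient.} On $\mathcal I=\{x>a(r)\}$ the map $(r,x)\mapsto\tau_M(r,x)$ is $C^1$ with derivatives \eqref{eq:dtMdx}--\eqref{eq:tauMdr}; since $\tau\mapsto f_M(\tau)+\e^{|\mu|\tau}a(r+\eta_{\max}\tau)$ is continuous and strictly increasing from $a(r)$ to $+\infty$, this map extends continuously to $\overline{\mathcal I}$ with $\tau_M\equiv 0$ on $\partial\mathcal I=\{x=a(r)\}$. I would then differentiate $w_1$ under the integral sign by Leibniz's rule, the required local uniform integrability following --- as in the argument for \eqref{eq:ipp} --- from continuity and monotonicity of $\dot\pi$ and $\dot a$ together with Assumption~\ref{ass:profit}-(ii). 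Because $g(\tau_M,x)=a(r+\eta_{\max}\tau_M)$, the two boundary terms produced by the moving limit $\tau_M$ cancel, leaving
\begin{equation}\label{eq:w1gradient}
\partial_x w_1(r,x)=\int_0^{\tau_M}\e^{-\bar\rho t}Y_t\,\dot\pi\big(g(t,x)\big)\ud t,\qquad
\partial_r w_1(r,x)=\int_{\tau_M}^\infty\e^{-\bar\rho t}\dot\pi\big(a(r+\eta_{\max}t)\big)\dot a\big(r+\eta_{\max}t\big)\ud t .
\end{equation}
Both expressions are continuous in $(r,x)$ and extend continuously to $\{x=a(r)\}$, where $\tau_M=0$ (so $\partial_x w_1=0$ and $\partial_r w_1$ is finite by integration by parts and Assumption~\ref{ass:profit}-(ii)); hence $w_1\in C^1(\overline{\mathcal I})$.

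\emph{Step 2: the equation.} Integrating $\bar\rho w_1$ by parts in $t$ separately on $[0,\tau_M]$ and on $[\tau_M,\infty)$ --- using $\dot g(t,x)=\mu g(t,x)-\eta_{\max}$ on the first interval and $\lim_{t\to\infty}\e^{-\bar\rho t}\pi(a(r+\eta_{\max}t))=0$ (Assumption~\ref{ass:profit}-(ii)) on the second --- and cancelling the boundary terms at $\tau_M$ once more via $g(\tau_M,x)=a(r+\eta_{\max}\tau_M)$, one arrives at
\[
\bar\rho w_1(r,x)-\pi(x)=\int_0^{\tau_M}\e^{-\bar\rho t}\dot\pi\big(g(t,x)\big)\big(\mu g(t,x)-\eta_{\max}\big)\ud t+\eta_{\max}\,\partial_r w_1(r,x).
\]
Comparing with the first identity in \eqref{eq:w1gradient}, it then suffices to verify the pointwise relation $\mu g(t,x)-\eta_{\max}=(\mu x-\eta_{\max})Y_t$ for every $t\ge 0$. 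Since $g(t,x)-xY_t=-Y_t f_M(t)$, this reduces to $|\mu|Y_t f_M(t)=\eta_{\max}(1-Y_t)$, which is precisely \eqref{eq:fM}. Substituting gives
\[
\bar\rho w_1(r,x)-\pi(x)=\mu x\,\partial_x w_1(r,x)-\eta_{\max}\,\partial_x w_1(r,x)+\eta_{\max}\,\partial_r w_1(r,x),
\]
first on $\mathcal I$ and then, by continuity of all terms, up to $\overline{\mathcal I}$; this is \eqref{eq:HJBmu0}.

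The only point that needs a little care is the $C^1$ claim up to the free boundary $\{x=a(r)\}$: there $\tau_M$ is merely Lipschitz (it has a corner, being identically $0$ on $\{x\le a(r)\}$), so one must argue that this lack of smoothness is invisible to $w_1$ --- which is the case precisely because $\tau_M$ enters $w_1$ only through boundary contributions of the form $\e^{-\bar\rho\tau_M}\big[\pi(g(\tau_M,x))-\pi(a(r+\eta_{\max}\tau_M))\big]$ times a derivative of $\tau_M$, and these vanish identically by the defining relation for $\tau_M$. Modulo this observation, everything else is a routine adaptation of the $\mu=0$ computations already carried out in the proof of Lemma~\ref{bdef.lm}.
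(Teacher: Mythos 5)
Your proposal is correct and follows essentially the same route as the paper's proof: compute $\partial_r w_1$ and $\partial_x w_1$ by differentiating under the integral (with the boundary terms at $\tau_M$ cancelling via \eqref{eq:tauM2}), then integrate $\bar\rho w_1$ by parts in time and use \eqref{eq:fM} — your identity $\mu g(t,x)-\eta_{\max}=(\mu x-\eta_{\max})Y_t$ is exactly the paper's observation that $-\mu f_M(t)-\dot f_M(t)=-\eta_{\max}$. The extra care you take with the $C^1$ extension to $\{x=a(r)\}$ is consistent with what the paper does implicitly.
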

\begin{proof}
We start by computing $\partial_r w_1$ and $\partial_x w_1$. Their explicit expressions will imply $w_1\in C^1(\overline{\mathcal I})$ as claimed.
Let us first consider the derivative with respect to $r$. 

By following the same arguments as in \eqref{eq:ipp}, we obtain
\begin{equation}\label{eq:drw1}
\frac{\partial w_1}{\partial r}(r,x)=\int_{\tau_M}^\infty\e^{-\bar\rho t}\dot\pi(a(r+\eta_{\max} t))\dot a(r+\eta_{\max}t) \ud t> 0.
\end{equation}
This integral is finite because, by integration by parts, 
$$
\int_{\tau_M}^\infty\e^{-\bar\rho t}\dot\pi(a(r+\eta_{\max} t))\dot a(r+\eta_{\max}t) \ud t = \frac{\bar \rho}{\eta_{\max}} \int_{\tau_M}^\infty \e^{-\bar \rho t} \pi\big(a(r+\eta_{\max}t)\big)\ud t - \frac{1}{\eta_{\max}} \e^{-\bar \rho\tau_M} \pi\big(a(r+\eta_{\max}\tau_M)\big), 
$$
and the final expression is finite by Assumption \ref{ass:profit}-(ii). 

Turning now to the derivative with respect to $x$, using \eqref{eq:tauM2} and similar calculations to \eqref{eq:star}, we obtain:
$$
\frac{\partial w_1}{\partial x}(r,x) = \lim_{h\to 0}  \frac{1}{h} \int_0^{\tau_M}\e^{-\bar \rho t} \int_0^h \dot \pi\big(Y_t[x+\xi - f_M(t)]\big)Y_t \ud \xi\,\ud t. 
$$
Using the monotonicity of $\dot \pi$ and the monotone convergence theorem, it is then easy to conclude that 
\begin{equation}\label{eq:dxw1}
\begin{aligned}
\frac{\partial w_1}{\partial x}(r,x)&=
\int_0^{\tau_M}\e^{-\bar\rho t}\dot \pi\big(Y_t[x-f_M(t)]\big)Y_t\ud t> 0.
\end{aligned}
\end{equation}
The above integral is finite because it is computed over a bounded interval and the integrand is continuous.
Combining the expressions for the two derivatives, we obtain
\begin{equation}\label{eq:drdx}
\begin{aligned}
(\partial_r w_1-\partial_x w_1)(r,x)&=\int_{\tau_M}^\infty\e^{-\bar\rho t}\dot \pi\big(a(r+\eta_{\mathrm{max}} t)\big)\dot a(r+\eta_{\mathrm{max}} t)\ud t\\
&\quad-\int_0^{\tau_M}\e^{-\bar\rho t}\dot \pi\big(Y_t[x-f_M(t)]\big)Y_t\ud t.
\end{aligned}
\end{equation}
Moreover, $\mu x\partial_x w_1(r,x)<0$ for $x>a(r)$ with
\begin{equation*}
\mu x\partial_x w_1(r,x)=
\mu x \int_0^{\tau_M}\e^{-\bar\rho t}\dot \pi\big(Y_t[x-f_M(t)]\big)Y_t\ud t,
\end{equation*}
by \eqref{eq:dxw1} and $\mu<0$.

Now let us calculate $\bar\rho w_1(r,x)-\pi(x)$ and compare it to $\mu x \partial_x w_1(r,x)+\eta_{\mathrm{max}}(\partial_r w_1-\partial_x w_1)(r,x)$. Rewriting $\bar\rho\e^{-\bar\rho t}\ud t=-\ud \e^{-\bar\rho t}$ and integrating by parts, we have
\begin{equation*}
\begin{aligned}
&\bar\rho w_1(r,x)-\pi(x)\\
&=-\pi(x)-\int_0^{\tau_M}\pi\big(Y_t [x-f_M(t)]\big)\ud \e^{-\bar\rho t}-\int_{\tau_M}^\infty\pi\big(a(r+\eta_{\mathrm{max}} t)\big)\ud \e^{-\bar\rho t} \\
&=\int_0^{\tau_M}\e^{-\bar \rho t}\dot \pi\big(Y_t[x-f_M(t)]\big)Y_t\big(\mu x-\mu f_M(t)-\dot f_M(t)\big)\ud t\\
&\quad+\eta_{\mathrm{max}}\int_{\tau_M}^\infty\e^{-\bar \rho t}\dot \pi\big(a(r+\eta_{\mathrm{max}}t)\big)\dot a(r+\eta_{\mathrm{max}}t)\ud t,
\end{aligned}
\end{equation*}
where we also used \eqref{eq:tauM2} to cancel a term resulting from integration by parts. Notice that $-\mu f_M(t)-\dot f_M(t)=-\eta_{\mathrm{max}}$ due to \eqref{eq:fM} and therefore, it is easy to verify that \eqref{eq:HJBmu0} holds.
\end{proof}

It is clear from the form of \eqref{eq:drdx} that for sufficiently large values of $x$ it must be $(\partial_r w_1-\partial_x w_1)(r,x)<0$ because $\lim_{x\to\infty}\tau_M(r,x)=\infty$ and $\dot \pi$ is non-decreasing (cf.\ Assumption \ref{ass:profit}). Then, $w_1$ cannot be a solution of the HJB equation in the whole space but at most on the set where $(\partial_r w_1-\partial_x w_1)(r,x)\ge 0$. This motivates the next part of the analysis.

By definition of $\tau(0)=\tau_{r,x}(0)$ (cf.\ \eqref{eq:taueta}) we have
\[
x\e^{\mu \tau_{r,x}(0)}=a(r)\iff \tau_{r,x}(0)=\frac{1}{|\mu|}\ln\frac{x}{a(r)}.
\]
We want to consider a class of firm's strategies of the
form $\eta^\tau_t\triangleq\eta_{\mathrm{max}}1_{\{t\ge \tau\}}$, for a generic $\tau\ge 0$. These are strategies according to which the firm is idle until time $\tau$ and then it starts exerting emission abatement at the maximum rate. The associated payoff reads  
\begin{equation*}
\begin{aligned}
\mathcal J^F_{r,x}(\eta^\tau,\nu^*)&=\int_0^{\tau\wedge \tau_{r,x}(0)}\e^{-\bar\rho t}\pi\big(xY_t\big)\ud t+\e^{-\bar\rho(\tau\wedge \tau_{r,x}(0))}w_1(r,xY_{\tau\wedge \tau_{r,x}(0)})\triangleq \mathcal{J}_{r,x}(\tau),
\end{aligned}
\end{equation*}
where, at time $\tau\wedge \tau(0)$ the firm receives the payoff associated to exerting maximum control. 
Then, with a slight abuse of notation, we set 
\begin{equation}\label{eq:OSP}
w_0(r,x)\triangleq \sup_{\tau\ge 0}\mathcal{J}_{r,x}(\tau).
\end{equation}

The problem in \eqref{eq:OSP} is a deterministic optimal stopping problem. An explicit solution seems difficult to obtain but we can still rely on the general optimal stopping theory to find a characterization of the optimal stopping rule. Clearly $w_0(r,x)\ge\mathcal{J}_{r,x}(0)= w_1(r,x)$ and next we establish continuity of $w_0$. 
\begin{lemma}\label{lem:loclip}
For any compact $K\subset\overline{\mathcal{I}}$ there is $c_K>0$ such that 
\[
\big|w_0(r_1,x_1)-w_0(r_2,x_2)\big|\le c_K\big(|x_2-x_1|+|r_2-r_1|\big),\quad \text{for } (r,x)\in K.
\]
\end{lemma}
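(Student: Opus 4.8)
The plan is to exploit the variational (optimal-stopping) representation \eqref{eq:OSP} of $w_0$ together with the explicit $C^1$-bounds on $w_1$ from Proposition \ref{prop:w1} and on $\tau_M$ from \eqref{eq:dtMdx}--\eqref{eq:tauMdr}. The general strategy is the classical one for value functions of stopping problems: since $w_0(r,x)=\sup_{\tau\ge 0}\mathcal J_{r,x}(\tau)$ is a supremum of a family of functions indexed by $\tau$, it suffices to obtain a Lipschitz estimate, uniform in $\tau$, for the maps $(r,x)\mapsto\mathcal J_{r,x}(\tau)$ on the compact set $K$, and then pass to the supremum. Concretely, for two points $(r_1,x_1),(r_2,x_2)\in K$ and a fixed $\tau$, I would write
$
\mathcal J_{r_1,x_1}(\tau)-\mathcal J_{r_2,x_2}(\tau)
$
as the sum of three contributions: the difference of the two integrals $\int_0^{\tau\wedge\tau_{\cdot}(0)}\e^{-\bar\rho t}\pi(\cdot\,Y_t)\ud t$ over possibly different integration limits, and the difference of the discounted terminal terms $\e^{-\bar\rho(\tau\wedge\tau_\cdot(0))}w_1(\cdot,\cdot\,Y_{\tau\wedge\tau_\cdot(0)})$.

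For the first contribution, note $\tau_{r,x}(0)=|\mu|^{-1}\ln(x/a(r))$ is Lipschitz on $K$ (it is $C^1$ with bounded derivatives there, using that $a$ is $C^1$ and bounded away from $0$ on the $r$-range of $K$), the integrand $\e^{-\bar\rho t}\pi(xY_t)$ is bounded on $[0,T_K]$ for $T_K=\sup_K\tau_{r,x}(0)$ and Lipschitz in $x$ (since $\pi\in C^1$ with $\dot\pi$ nondecreasing, hence locally bounded, and $Y_t\le 1$); so the difference of integrals over $[0,\tau\wedge\tau_{r_1,x_1}(0)]$ versus $[0,\tau\wedge\tau_{r_2,x_2}(0)]$ is bounded by a constant times $|x_1-x_2|+|r_1-r_2|$, uniformly in $\tau$ (the worst case being $\tau\ge\max$ of the two hitting times, so the limits are the hitting times themselves, whose difference is controlled). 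For the terminal term I use that $w_1\in C^1(\overline{\mathcal I})$ by Proposition \ref{prop:w1}, hence locally Lipschitz on $K$; I also need that the point $(r,xY_{\tau\wedge\tau_{r,x}(0)})$ stays in a fixed compact subset of $\overline{\mathcal I}$ as $(r,x)$ ranges over $K$ and $\tau$ over $[0,\infty)$ — this holds because $xY_t\in[xY_{\tau_{r,x}(0)},x]=[a(r),x]$, which is contained in a compact set determined by $K$. Then the difference of terminal terms is controlled by the Lipschitz constant of $w_1$ on that enlarged compact, the Lipschitz dependence of $xY_{\tau\wedge\tau_{r,x}(0)}$ on $(r,x)$ (again uniform in $\tau$, via the Lipschitz bound on the hitting time and boundedness of $\dot Y$), and the uniform bound on $w_1$.

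Having shown $|\mathcal J_{r_1,x_1}(\tau)-\mathcal J_{r_2,x_2}(\tau)|\le c_K(|x_1-x_2|+|r_1-r_2|)$ with $c_K$ independent of $\tau$, the elementary inequality $|\sup_\tau f(\tau)-\sup_\tau g(\tau)|\le\sup_\tau|f(\tau)-g(\tau)|$ yields the claim. The main obstacle is the bookkeeping around the two distinct, point-dependent stopping times $\tau\wedge\tau_{r_1,x_1}(0)$ and $\tau\wedge\tau_{r_2,x_2}(0)$: one must split $[0,\max]$ into the common part and the short "collar" $[\min,\max]$, bound the integrand on the collar by its (finite, since the relevant $x$-values lie in a compact subset of $\mathbb R_+$) sup times the collar length, and then check that the continuity of $w_1$ up to the boundary $\{x=a(r)\}$ makes the terminal term match continuously across the regime change at $\tau=\tau_{r,x}(0)$ — this last point is exactly where Proposition \ref{prop:w1}'s assertion $w_1\in C^1(\overline{\mathcal I})$, rather than merely in the open region, is used. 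I would also remark that global integrability issues do not arise because everything is evaluated over the bounded interval $[0,\tau\wedge\tau_{r,x}(0)]\subseteq[0,T_K]$, with the tail of $w_1$ already shown finite.
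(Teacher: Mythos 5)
Your proposal is correct and follows essentially the same route as the paper: bound $\big|\mathcal J_{r_1,x_1}(\tau)-\mathcal J_{r_2,x_2}(\tau)\big|$ uniformly in $\tau$ by splitting into the integral difference on the common interval, the ``collar'' integral over $[\tau_1\wedge\tau_2,\tau_1\vee\tau_2]$, and the terminal-term difference, then invoke local Lipschitz continuity of $w_1$, $\tau_{\cdot,\cdot}(0)$ and $a$ and the inequality $|\sup_\tau f-\sup_\tau g|\le\sup_\tau|f-g|$. The only detail the paper makes explicit that you gloss over is that when cross-comparing the terminal terms the point $x_1Y_{\tau\wedge\tau_2}$ may fall below $a(r_1)$, which the paper fixes by inserting the truncation $\vee\, a(r_1)$ as an intermediate evaluation point.
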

\begin{proof}
For $r_1,r_2\in[0,\infty)$ and $x_1,x_2\in[0,\infty)$, set $\tau_1=\tau_{r_1,x_1}(0)$ and $\tau_2=\tau_{r_2,x_2}(0)$ for notational simplicity. Then, 
\begin{equation*}
\begin{aligned}
&\big|w_0(r_1,x_1)-w_0(r_2,x_2)\big|\\
&\le\int_0^\infty\e^{-\bar\rho t}|\pi(x_2Y_t)-\pi(x_1Y_t)|\ud t+\int_{\tau_1\wedge\tau_2}^{\tau_1\vee\tau_2}\e^{-\bar\rho t}\pi(x_1 Y_t)\vee\pi(x_2Y_t)\ud t\\
&\quad +\bar \rho|\tau_2-\tau_1| w_1(r_1,x_1) + \sup_\tau\big|w_1(r_1,x_1 Y_{\tau\wedge\tau_1}) - w_1\big(r_1,x_1 Y_{\tau\wedge\tau_2}\vee a(r_1)\big)\big|\\
&\quad +\sup_\tau\big|w_1\big(r_1,x_1 Y_{\tau\wedge\tau_2}\vee a(r_1)\big)-w_1(r_2,x_2 Y_{\tau\wedge\tau_2})\big|,
\end{aligned}
\end{equation*}
where we used $w_1(r,xY_t)\le w_1(r,x)$ by monotonicity of $w_1$ and $Y_t\le 1$. Since $(r,x)\mapsto \tau_{r,x}(0)$ and $(r,x)\mapsto w_1(r,x)$ are locally Lipschitz continuous in $\overline{\mathcal I}$, and $r\mapsto a(r)$ is locally Lipschitz in $\mathbb{R}_+$, then 
it is not hard to see that $(r,x)\mapsto w_0(r,x)$ is also locally Lipschitz continuous in $\overline{\mathcal I}$, as claimed. 
\end{proof}
Thanks to continuity of $w_0$, standard optimal stopping theory guarantees that it is optimal to stop at 
\begin{equation}\label{eq:tau*}
\tau_*=\inf\{t\ge 0: w_0(r,xY_t)=w_1(r,xY_t)\}\wedge \tau_{r,x}(0).
\end{equation}
Moreover, 
\begin{equation}\label{eq:Mp}
s\mapsto\int_0^{s\wedge \tau_*}\e^{-\bar\rho t}\pi(xY_t)\ud t+\e^{-\bar\rho(s\wedge \tau_*)}w_0(r,x Y_{s\wedge \tau_*}),
\end{equation}
must be constant (a deterministic martingale).
Now, it is natural to split the state space into the so-called {\em continuation} and {\em stopping} sets, defined respectively as
\[
\mathcal C\triangleq\{(r,x)\in[0,\infty)\times\mathbb{R}_+:w_0(r,x)>w_1(r,x)\}\quad\text{and}\quad\mathcal S=\mathcal C^c\triangleq \big([0,\infty)\times\mathbb{R}_+\big)\setminus\mathcal C.
\]
By construction $(r,a(r))\in\mathcal S$ for any $r\in[0,\infty)$. Moreover, writing
\[
\e^{-\bar\rho(\tau\wedge\tau_{r,x}(0))}w_1(r,xY_{\tau\wedge\tau_{r,x}(0)})=w_1(r,x)+\int_0^{\tau\wedge\tau_{r,x}(0)}\e^{-\bar\rho t}\Big(\mu xY_t\partial_x w_1\big(r,xY_t\big)-\bar\rho w_1\big(r,xY_t\big)\Big)\ud t,
\]
we deduce
\begin{equation}\label{eq:w0w1}
\varphi(r,x)\triangleq w_0(r,x)-w_1(r,x)=\sup_\tau\int_0^{\tau\wedge\tau_{r,x}(0)}\e^{-\bar\rho t}h(r,xY_t)\ud t,
\end{equation}
where
\begin{equation}\label{eq:h}
\begin{aligned}
h(r,x)&\triangleq-|\mu| x\partial_x w_1(r,x)-\bar\rho w_1(r,x)+\pi(x)\\
&=-\eta_{\mathrm{max}}\big(\partial_r w_1-\partial_x w_1\big)(r,x).
\end{aligned}
\end{equation}
Thanks to this formulation, it is possible to establish the form of the optimal stopping rule \eqref{eq:tau*}.
\begin{lemma}\label{lem:h}
We have $h\in C^1(\overline{\mathcal I})$. For $x\ge a(r)$ and $r>0$ we have $\partial_r h<0$ and $\partial_x h>0$. Moreover, $h(r,a(r))<0$ and $\lim_{x\to\infty}h(r,x)>0$. \end{lemma}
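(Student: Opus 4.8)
The plan is to work from the explicit first‑order derivative formulas \eqref{eq:drw1}, \eqref{eq:dxw1}, \eqref{eq:drdx} for $w_1$, from the regularity and sign information on $\tau_M$ recorded in \eqref{eq:tauM2}, \eqref{eq:dtMdx}, \eqref{eq:tauMdr}, and from the separate convexity of $w_1$ in $r$ and in $x$. Throughout put $m(r,x)\triangleq r+\eta_{\mathrm{max}}\tau_M(r,x)$ and $P(s)\triangleq\dot\pi(a(s))\dot a(s)>0$; note that by Assumption \ref{ass:profit}-(iii) the map $r\mapsto P(r)$ is non‑decreasing, equivalently $r\mapsto\pi(a(r))$ is convex — this is the only place item (iii) enters. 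I expect the smoothness claim to be the main obstacle: since $\pi$ is assumed only $C^1$, $\dot\pi$ is merely continuous and one cannot differentiate naively under the integral sign in \eqref{eq:drw1}--\eqref{eq:dxw1}; everything else is sign bookkeeping built on quantities already at hand.

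For the regularity, by \eqref{eq:h} and Proposition \ref{prop:w1} it suffices to show $\partial_r w_1,\partial_x w_1\in C^1(\overline{\mathcal I})$. I would push the $(r,x)$‑dependence out of the arguments of $\dot\pi$ by changing variables: the substitution $s=r+\eta_{\mathrm{max}}t$ in \eqref{eq:drw1} gives $\partial_r w_1(r,x)=\eta_{\mathrm{max}}^{-1}\e^{\bar\rho r/\eta_{\mathrm{max}}}\int_{m(r,x)}^{\infty}\e^{-\bar\rho s/\eta_{\mathrm{max}}}P(s)\,\ud s$, and the substitution $v=X^{0,\eta_{\mathrm{max}}}_t=(x+c_0)\e^{\mu t}-c_0$, with $c_0\triangleq\eta_{\mathrm{max}}/|\mu|$, in \eqref{eq:dxw1} gives $\partial_x w_1(r,x)=\bigl(|\mu|(x+c_0)^{\kappa+1}\bigr)^{-1}\int_{a(m(r,x))}^{x}(v+c_0)^{\kappa}\dot\pi(v)\,\ud v$ with $\kappa\triangleq\bar\rho/|\mu|$. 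Since $\tau_M$, hence $m$, is $C^1$ on $\overline{\mathcal I}$ by \eqref{eq:tauM2}--\eqref{eq:tauMdr}, and $a$, $\dot\pi$, $P$ are continuous, differentiating these two expressions once more in $r$ and in $x$ — where now only the limits and the explicit prefactors carry the variables — produces continuous $\partial_{rr}w_1$, $\partial_{rx}w_1$ and $\partial_{xx}w_1$. Hence $w_1\in C^2(\overline{\mathcal I})$ and, by \eqref{eq:h}, $h\in C^1(\overline{\mathcal I})$.

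For the monotonicity, differentiating \eqref{eq:drw1} in $x$ (only the lower endpoint $\tau_M$ depends on $x$) gives $\partial_{rx}w_1=-\e^{-\bar\rho\tau_M}P(m)\,\partial_x\tau_M<0$, using $P(m)>0$ and $\partial_x\tau_M>0$ from \eqref{eq:dtMdx}. For the pure second derivatives I use convexity: by \eqref{eq:xopt} with $\eta=\eta_{\mathrm{max}}$ one has $X^{\nu^*,\eta_{\mathrm{max}}}_t=\max\{Y_t(x-f_M(t)),\,a(r+\eta_{\mathrm{max}}t)\}$, whence $\pi(X^{\nu^*,\eta_{\mathrm{max}}}_t)=\max\{\pi(Y_t(x-f_M(t))),\,\pi(a(r+\eta_{\mathrm{max}}t))\}$ is convex in $x$ (as $\pi$ is convex increasing and $x\mapsto Y_t(x-f_M(t))$ is affine increasing) and convex in $r$ (as $\pi$ is increasing and $r\mapsto\pi(a(r))$ is convex); integrating in $t$ against $\e^{-\bar\rho t}\,\ud t$ yields convexity of $w_1(r,\cdot)$ and of $w_1(\cdot,x)$, so $\partial_{xx}w_1\ge 0$ and $\partial_{rr}w_1\ge 0$ on $\overline{\mathcal I}$. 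Using $w_1\in C^2$ (so the mixed partials coincide), on $\{x\ge a(r),\ r>0\}$ this gives $\partial_x h=\eta_{\mathrm{max}}(\partial_{xx}w_1-\partial_{rx}w_1)>0$ and $\partial_r h=\eta_{\mathrm{max}}(\partial_{rx}w_1-\partial_{rr}w_1)<0$.

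Finally, the boundary value and the limit follow from \eqref{eq:drdx} together with \eqref{eq:h}. At $x=a(r)$ we have $\tau_M=0$, so $(\partial_r w_1-\partial_x w_1)(r,a(r))=\int_0^\infty\e^{-\bar\rho t}P(r+\eta_{\mathrm{max}}t)\,\ud t>0$ by strict monotonicity of $\pi$ and $a$, hence $h(r,a(r))=-\eta_{\mathrm{max}}(\partial_r w_1-\partial_x w_1)(r,a(r))<0$. As $x\to\infty$ one has $\tau_M(r,x)\to\infty$, so $\partial_r w_1(r,x)=\int_{\tau_M}^\infty\e^{-\bar\rho t}P(r+\eta_{\mathrm{max}}t)\,\ud t\to 0$ (the integral over $[0,\infty)$ being finite by Assumption \ref{ass:profit}-(ii), via the integration by parts following \eqref{eq:drw1}), while, since $t\mapsto X^{0,\eta_{\mathrm{max}}}_t$ is decreasing with $X^{0,\eta_{\mathrm{max}}}_1\to\infty$ and $\dot\pi$ is non‑decreasing, $\partial_x w_1(r,x)\ge\int_0^1\e^{-\bar\rho t}\dot\pi(X^{0,\eta_{\mathrm{max}}}_t)Y_t\,\ud t$ stays above a fixed positive constant for all large $x$ (for such $x$, $\tau_M(r,x)>1$ and $X^{0,\eta_{\mathrm{max}}}_t\ge X^{0,\eta_{\mathrm{max}}}_1\ge 1$ on $[0,1]$). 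Therefore $h(r,x)=\eta_{\mathrm{max}}(\partial_x w_1-\partial_r w_1)(r,x)>0$ for $x$ large, and since $\partial_x h>0$ makes $h(r,\cdot)$ increasing, $\lim_{x\to\infty}h(r,x)>0$.
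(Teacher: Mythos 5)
Your proof is correct, and for the two central claims it takes a genuinely different route from the paper's. For the regularity $h\in C^1(\overline{\mathcal I})$, you change variables ($s=r+\eta_{\mathrm{max}}t$ in $\partial_r w_1$ and $v=Y_t[x-f_M(t)]$ in $\partial_x w_1$) so that all the $(r,x)$--dependence sits in the integration limits and explicit prefactors; the paper instead keeps the original parametrization and justifies differentiating under the integral sign via monotonicity and monotone convergence. Your version is cleaner in that it only invokes the fundamental theorem of calculus together with the $C^1$ regularity of $\tau_M$, and it delivers the slightly stronger conclusion $w_1\in C^2(\overline{\mathcal I})$ as a byproduct. For the signs, you compute the single cross-partial $\partial_x\partial_r w_1=-\e^{-\bar\rho\tau_M}P(m)\,\partial_x\tau_M<0$ explicitly and obtain $\partial_{xx}w_1\ge 0$, $\partial_{rr}w_1\ge 0$ from separate convexity of $w_1$ in each variable, which you derive from the pathwise representation $\pi(X^{\nu^*,\eta_{\mathrm{max}}}_t)=\max\{\pi(Y_t(x-f_M(t))),\pi(a(r+\eta_{\mathrm{max}}t))\}$ (convexity in $x$ from convexity of the extended $\pi$, convexity in $r$ from Assumption \ref{ass:profit}-(iii) read as convexity of $r\mapsto\pi(a(r))$ --- the same place the paper uses that assumption). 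The paper instead works with the integration-by-parts representation \eqref{heq} and runs a direct two-point comparison ($r_1<r_2$, then $x_1<x_2$) on each of its two terms. Your structural argument is arguably more transparent about where the strict inequality comes from (solely the cross-partial) and where convexity is used, at the cost of needing the $C^2$ regularity to invoke Schwarz's theorem; the paper's comparison argument avoids second derivatives of $w_1$ altogether. The boundary value $h(r,a(r))<0$ and the limit as $x\to\infty$ are handled essentially as in the paper (you read the former off \eqref{eq:drdx} at $\tau_M=0$ rather than off $\bar\rho w_1-\pi$, which is an equivalent shortcut).
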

Before proving the lemma, we obtain one important consequence thereof.
\begin{proposition}\label{Proposition_funcb}
It is optimal to stop at 
\begin{equation}\label{eq:tau*b}
\tau_*=\tau_b\triangleq\inf\{t\ge 0:xY_t\le b(r)\},
\end{equation}
where $b(r)>a(r)$ is the unique solution of $h(r,b(r))=0$ for $r\in[0,\infty)$. Moreover, the function $b$ is increasing and $b\in C([0,\infty))\cap C^1(\mathbb{R}_+)$.   
\end{proposition}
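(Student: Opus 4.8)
The plan is to combine Lemma~\ref{lem:h} with the optimal stopping representation \eqref{eq:w0w1} and a standard implicit function argument. First, fix $r>0$ and use the properties of $h$ from Lemma~\ref{lem:h}: the map $x\mapsto h(r,x)$ is continuous on $[a(r),\infty)$, strictly increasing there (since $\partial_x h>0$), with $h(r,a(r))<0$ and $\lim_{x\to\infty}h(r,x)>0$. By the intermediate value theorem there is a unique $b(r)\in(a(r),\infty)$ with $h(r,b(r))=0$, and moreover $h(r,x)<0$ for $a(r)\le x<b(r)$ while $h(r,x)>0$ for $x>b(r)$; the case $r=0$ follows by continuity of $h$. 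Since $h\in C^1(\overline{\mathcal I})$ and $\partial_x h(r,b(r))>0$, the implicit function theorem yields $b\in C^1(\mathbb R_+)$ with
\[
\dot b(r)=-\frac{\partial_r h(r,b(r))}{\partial_x h(r,b(r))}>0,
\]
the positivity coming from $\partial_r h<0$ and $\partial_x h>0$; hence $b$ is increasing.

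Next, I would identify the stopping set $\mathcal S=\{(r,x):w_0(r,x)=w_1(r,x)\}$ with $\{(r,x):a(r)\le x\le b(r)\}$ by means of \eqref{eq:w0w1}. The crucial point is that in this optimal stopping problem the firm is idle before $\tau$, so $R$ stays frozen at $r$ and the controlled state moves along the deterministic, non-increasing path $t\mapsto xY_t$ (here $Y_t=\e^{\mu t}$ is non-increasing because $\mu<0$), which decreases from $x$ down to $a(r)$ at time $\tau_{r,x}(0)$. If $a(r)\le x\le b(r)$, then $a(r)\le xY_t\le x\le b(r)$ for all $t\in[0,\tau_{r,x}(0)]$, so $h(r,xY_t)\le 0$ along the whole trajectory and the integral in \eqref{eq:w0w1} is nonpositive for every $\tau$; the supremum thus equals $0$ (attained at $\tau=0$), i.e.\ $\varphi(r,x)=0$. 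If instead $x>b(r)$, then $h(r,x)>0$, so by continuity $h(r,xY_t)>0$ on a right-neighbourhood of $t=0$ contained in $[0,\tau_{r,x}(0))$ (note $\tau_{r,x}(0)>0$ since $x>b(r)>a(r)$), and taking $\tau$ small makes the integral strictly positive, i.e.\ $\varphi(r,x)>0$. Hence $\mathcal C\cap\overline{\mathcal I}=\{x>b(r)\}$ and $\mathcal S\cap\overline{\mathcal I}=\{a(r)\le x\le b(r)\}$.

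Finally, I would insert this into the already-established optimal rule \eqref{eq:tau*}, which reads $\tau_*=\inf\{t\ge 0:(r,xY_t)\in\mathcal S\}\wedge\tau_{r,x}(0)$ (using $w_0\ge w_1$, so $\{w_0=w_1\}=\mathcal S$). For $t\le\tau_{r,x}(0)$ we have $xY_t\ge a(r)$, so $(r,xY_t)\in\mathcal S$ exactly when $xY_t\le b(r)$; therefore $\inf\{t\ge 0:(r,xY_t)\in\mathcal S\}=\tau_b$. Moreover, because $b(r)>a(r)$ and $t\mapsto xY_t$ is non-increasing, the path reaches level $b(r)$ no later than level $a(r)$, i.e.\ $\tau_b\le\tau_{r,x}(0)$, so the truncation at $\tau_{r,x}(0)$ is redundant and $\tau_*=\tau_b$, as claimed. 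I expect the step requiring most care to be the identification of $\mathcal S$: it is precisely the monotonicity of the trajectory $t\mapsto xY_t$ (a consequence of $\mu<0$), together with the monotonicity of $h(r,\cdot)$ from Lemma~\ref{lem:h}, that turns the sublevel structure of $\mathcal S$ into the clean hitting-time description $\tau_b$; the remaining steps are routine applications of the intermediate value and implicit function theorems.
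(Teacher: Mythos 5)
Your proof is correct and follows essentially the same route as the paper: both arguments rest entirely on Lemma~\ref{lem:h} (sign of $h$ at $a(r)$ and at infinity, monotonicity of $h$ in $x$ and $r$), the resulting monotonicity of $t\mapsto h(r,xY_t)$ in the representation \eqref{eq:w0w1}, and the implicit function theorem for the regularity and monotonicity of $b$. The only difference is presentational: you explicitly identify the stopping set $\{a(r)\le x\le b(r)\}$ and then invoke \eqref{eq:tau*}, whereas the paper concludes directly that the decreasing integrand makes $\tau_b$ the unique maximizer; these are the same argument spelled out to different degrees.
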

\begin{proof}
By Lemma \ref{lem:h}, $t\mapsto h(r,xY_t)$ is decreasing. Therefore, the maximum on the right-hand side of \eqref{eq:w0w1} is uniquely attained at $\tau_b$. Moreover, by the implicit function theorem we have
\begin{equation}\label{eq:bdot}
\dot b(r)=-\frac{\partial_r h(r,b(r))}{\partial_x h(r,b(r))}>0,\quad r\in\mathbb{R}_+,
\end{equation}
which concludes the proof of $b\in C^1(\mathbb R_+)$. The limit $b(0):=\lim_{r\downarrow 0}b(r)$ exists and it is finite, due to properties of $h(\cdot,\cdot)$. Then, $b\in C([0,\infty))$.
\end{proof}
It remains to prove the lemma. 
\begin{proof}[{\bf Proof of Lemma \ref{lem:h}}] We first observe that, due to $\partial_x w_1(r,a(r))=0$ (cf.\ \eqref{eq:dxw1}), we have $h(r,a(r))=-\bar\rho w_1(r,a(r))+\pi(a(r))$. However, Assumption \ref{ass:profit}-(ii) and \eqref{eq:w1.b} also imply 
\[
w_1(r,a(r))=\int_0^\infty \e^{-\bar\rho t}\pi(a(r+\eta_{\mathrm{max}}t))\ud t>\bar\rho^{-1}\pi\big(a(r)\big),
\]
hence $h(r,a(r))<0$. By the definition of $h$ and \eqref{eq:drdx} we have by monotone convergence 
\begin{equation*}
\begin{aligned}
\lim_{x\to\infty}h(r,x)&=-\eta_{\mathrm{max}}\lim_{x\to\infty}\big[\partial_r w_1(r,x)-\partial_x w_1(r,x)\big]= \eta_{\mathrm{max}}\lim_{x\to\infty}\int_0^\infty \e^{-\bar \rho t}\dot \pi\big(Y_t[x-f_M(t)]\big)Y_t\ud t\\
&=\eta_{\mathrm{max}}\int_0^\infty \e^{-\bar \rho t}\lim_{x\to\infty}\dot \pi\big(Y_t[x-f_M(t)]\big)Y_t\ud t>0,
\end{aligned}
\end{equation*}
where the final inequality holds by \eqref{eq:drdx} and Assumption \ref{ass:profit}-(ii). This concludes the proof of the final statement in the lemma.

Next we calculate the gradient of $h$. First we recall \eqref{eq:drdx} and notice that we can rewrite, by integration by parts,
\begin{equation}\label{heq}
\begin{aligned}
  h(r,x) &= -\eta_{\mathrm{max}}\big(\partial_r w_1-\partial_x w_1\big)(r,x)\\
  &=  -\int_{\tau_M}^\infty \e^{-\bar \rho t} \dot\pi(a(r+\eta_{\max} t))\dot a(r+\eta_{\max} t) \eta_{\max} \ud t + \eta_{\max} \int_0^{\tau_M} \e^{-\bar \rho t} \dot \pi(Y_t[x-f_M(t)]) Y_t \ud t\\
  & = \e^{-\bar\rho \tau_M} \pi(a(r+\eta_{\max}\tau_M))- \bar \rho \int_{\tau_M}^\infty \e^{-\bar \rho t} \pi(a(r+\eta_{\max} t))\ud t\\ 
  &\quad + \eta_{\max} \int_0^{\tau_M} \e^{-\bar \rho t} \dot \pi(Y_t[x-f_M(t)])Y_t\ud t.
\end{aligned}
\end{equation}
Computing the derivative with respect to $r$, arguing as in \eqref{eq:ipp} to justify interchanging the limit and the integral, we get
\begin{equation*}
\begin{aligned}
 \partial_r h(r,x) 
 &= -\bar \rho \e^{-\bar\rho \tau_M}  \pi\big(a(r\!+\!\eta_{\max}\tau_M)\big)\frac{\partial \tau_M}{\partial r}\!+\!\e^{-\bar\rho \tau_M}\dot \pi\big(a(r\!+\!\eta_{\max}\tau_M)\big) \dot a(r\!+\!\eta_{\max}\tau_M)\Big(1\!+\!\eta_{\max}\frac{\partial \tau_M}{\partial r}\Big) \\ 
  &\quad+ \bar \rho \e^{-\bar \rho \tau_M} \pi\big(a(r+\eta_{\max} \tau_M)\big)\frac{\partial\tau_M}{\partial r}- \bar \rho \int_{\tau_M}^\infty \e^{-\bar \rho t} \dot\pi\big(a(r+\eta_{\max} t)\big) \dot a(r+\eta_{\max} t)\ud t \\ 
  &\quad+ \eta_{\max}  \e^{-\bar \rho \tau_M} \dot \pi\big(Y_{\tau_M}[x-f_M(\tau_M)]\big)Y_{\tau_M}\frac{\partial \tau_M}{\partial r}\\
  & =  \e^{-\bar \rho \tau_M} \dot \pi\big(a(r+\eta_{\max}\tau_M)\big)\Big[ \dot a(r+\eta_{\max}\tau_M)\Big(1+\eta_{\max}\frac{\partial \tau_M}{\partial r}\Big)+\eta_{\max} Y_{\tau_M}\frac{\partial \tau_M}{\partial r}\Big] \\ 
  &\quad- \eta_{\max}\bar \rho \int_{\tau_M}^\infty \!\!\e^{-\bar \rho t} \dot\pi\big(a(r+\eta_{\max} t)\big) \dot a(r+\eta_{\max} t)\ud t. 
\end{aligned}
\end{equation*}
The finiteness of the integral can be justified by another integration by parts and Assumption \ref{ass:profit}-(ii). This shows the differentiability with respect to $r$. To show that the derivative is negative, recall that $\frac{\partial \tau_M}{\partial r}<0$ and consider the representation in the second line of \eqref{heq}. The second term depends on $r$ only through $\tau_M$, its derivative with respect to $r$ is therefore strictly negative. To evaluate the first term, consider $r_1<r_2$. Then, using Assumption \ref{ass:profit}-(iii)
\begin{equation*}
\begin{aligned}
&\int_{\tau_M(r_1)}^\infty \e^{-\bar \rho t} \dot\pi\big(a(r_1+\eta_{\max} t)\big)\dot a(r_1+\eta_{\max} t)  \ud t - \int_{\tau_M(r_2)}^\infty \e^{-\bar \rho t} \dot\pi\big(a(r_2+\eta_{\max} t)\big)\dot a(r_2+\eta_{\max} t) \ud t \\
& = \int_{\tau_M(r_1)}^\infty \e^{-\bar \rho t} \Big(\dot\pi\big(a(r_1+\eta_{\max} t)\big)\dot a(r_1+\eta_{\max} t)-\dot\pi\big(a(r_2+\eta_{\max} t)\big)\dot a(r_2+\eta_{\max} t) \Big) \ud t \\
& - \int_{\tau_M(r_2)}^{\tau_M(r_1)} \e^{-\bar \rho t} \dot\pi\big(a(r_2+\eta_{\max} t)\big)\dot a(r_2+\eta_{\max} t) \ud t\leq 0.
\end{aligned}
\end{equation*}
Thus, $\frac{\partial h}{\partial r}<0$. 
Further, to compute the derivative with respect to $x$, we use integration by parts to rewrite $h$ as follows:
\begin{equation*}
\begin{aligned}
h(r,x) &=-\frac{\eta_{\max}}{|\mu| x + \eta_{\max}}\int_0^{\tau_M}\e^{-\bar\rho t}\frac{\ud}{\ud t} \pi\big(Y_t[x-f_M(t)]\big)\ud t-\int_{\tau_M}^\infty\e^{-\bar \rho t}\frac{\ud}{\ud t}\pi\big(a(r+\eta_{\mathrm{max}}t)\big)\ud t\\
 & = \frac{\eta_{\max}}{|\mu| x + \eta_{\max}}\pi(x) -\frac{\bar \rho\eta_{\max}}{|\mu| x + \eta_{\max}}\int_0^{\tau_M}\e^{-\bar\rho t}\pi\big(Y_t[x-f_M(t)]\big)\ud t\\
 &\quad -\frac{\eta_{\max}}{|\mu| x + \eta_{\max}} \e^{-\bar \rho \tau_M} \pi\big(a(r + \eta_{\max}\tau_M)\big)-\int_{\tau_M}^\infty\e^{-\bar \rho t}\frac{\ud}{\ud t}\pi\big(a(r+\eta_{\mathrm{max}}t)\big)\ud t.
\end{aligned}
\end{equation*}
Once again, we can apply monotone convergence theorem to show that $h$ is differentiable with respect to $x$, with
\begin{equation*}
\begin{aligned}
&\partial_x h(r,x)\\
& = \frac{\eta_{\max}}{|\mu| x \!+\! \eta_{\max}}\dot \pi(x)\!-\!\frac{|\mu|\eta_{\max}}{\big(|\mu| x\! +\! \eta_{\max}\big)^2}\Big(\pi(x)\!-\!\bar \rho\int_0^{\tau_M}\!\!\e^{-\bar\rho t}\pi\big(Y_t[x\!-\!f_M(t)]\big)\ud t\Big)\\
&\quad-\frac{\bar \rho\eta_{\max}}{|\mu| x\! +\! \eta_{\max}}\int_0^{\tau_M}\!\!\e^{-\bar\rho t}\dot\pi\big(Y_t[x\!-\!f_M(t)]\big)Y_t\ud t\\
 &\quad +\frac{\eta_{\max}|\mu|}{\big(|\mu| x + \eta_{\max}\big)^2} \e^{-\bar \rho \tau_M} \pi\big(a(r + \eta_{\max}\tau_M)\big)\\
 &\quad+\e^{-\bar \rho \tau_M} \dot\pi\big(a(r + \eta_{\max}\tau_M)\big)\dot a(r + \eta_{\max}\tau_M)\frac{\partial\tau_M}{\partial x}\frac{|\mu| x}{|\mu| x + \eta_{\max}}.
\end{aligned}
\end{equation*}

To show that the derivative is positive, recall that $\frac{\partial \tau_M}{\partial x}>0$ and consider the representation in the second line of \eqref{heq}. The first term depends on $x$ only through $\tau_M$, its derivative with respect to $x$ is therefore strictly positive. To evaluate the second term, consider $x_1<x_2$. Then,
\begin{equation*}
\begin{aligned}
 & \int_0^{\tau_M(x_2)} \e^{-\bar \rho t} \dot \pi\big(Y_t[x_2-f_M(t)]\big) Y_t \ud t - \int_0^{\tau_M(x_1)} \e^{-\bar \rho t} \dot \pi\big(Y_t[x_1-f_M(t)]\big) Y_t \ud t \\
  & = \int_{\tau_M(x_1)}^{\tau_M(x_2)} \e^{-\bar \rho t} \dot \pi\big(Y_t[x_2-f_M(t)]\big) Y_t \ud t + \int_0^{\tau_M(x_1)} \e^{-\bar \rho t} \Big(\dot \pi\big(Y_t[x_2-f_M(t)]\big) - \dot \pi\big(Y_t[x_1-f_M(t)]\big) \Big)Y_t \ud t,
\end{aligned}
\end{equation*}
where both terms are nonnegative  by our assumptions. Thus $\frac{\partial h}{\partial x}>0$. 

Continuity of $\nabla h$ is easily deduced by the explicit expressions for the derivatives.
\end{proof}
\begin{proposition}\label{prop:optfirmmu}
Let $\mu<0$ and $(R_0,X_0)=(r,x)$. Take $w_0$ and $w_1$ as in \eqref{eq:OSP} and \eqref{eq:w1.b}, respectively. Then, the firm's equilibrium payoff reads as
\begin{equation}\label{eq:weq1}
w(r,x)=\left\{
\begin{array}{ll}
w_0(r,x), & x>b(r),\\
w_1(r,x), & a(r)\le x\le b(r).
\end{array}
\right.
\end{equation}
The firm's optimal strategy reads $\eta^*_t=\eta_{\mathrm{max}}1_{\{t\ge \tau_b\}}$, with $\tau_b=\tau_*$ as in \eqref{eq:tau*b}. The optimal control is unique in the sense that if $\eta\in\mathcal{A}_F$ is another optimal control, then $\eta_t=\eta^*_t$ for a.e.\ $t\ge 0$.
\end{proposition}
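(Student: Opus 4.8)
The plan is to verify that the function $w$ defined in \eqref{eq:weq1} is a $C^1$ solution of the firm's system \eqref{eqw1} on $\mathcal I=\{x>a(r)\}$ (with $\partial\mathcal M=\{x=a(r)\}$), to check that the associated feedback control $\eta^*(r,x)=\eta_{\mathrm{max}}1_{\{w_r>w_x\}}(r,x)$ reduces along the equilibrium trajectory to $\eta^*_t=\eta_{\mathrm{max}}1_{\{t\ge\tau_b\}}$, to verify the transversality condition for $w$, and then to invoke the deterministic version of the firm's verification statement, i.e.\ Corollary \ref{cor:verif} together with Remark \ref{rem:verif}. Conditions (ii)--(iii) there are already supplied by Theorem \ref{thm:nustar}, so the work concentrates on condition (i), the feedback identification, and condition (iv). Throughout I would use that on $\overline{\mathcal I}$ one has $w=w_0$ — the piecewise formula \eqref{eq:weq1} merely records that $w_0=w_1$ on $\{a(r)\le x\le b(r)\}$ (there $\tau_b=0$) — and that, by \eqref{eq:w0w1} and Proposition \ref{Proposition_funcb},
\[
w_0(r,x)=w_1(r,x)+\varphi(r,x),\qquad \varphi(r,x)=\int_0^{\tau_b(r,x)}\e^{-\bar\rho t}h(r,xY_t)\,\ud t,
\]
with $\tau_b(r,x)=|\mu|^{-1}\ln\big(x/b(r)\big)$ for $x\ge b(r)$ and $\varphi\equiv 0$ for $a(r)\le x\le b(r)$.

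I would first establish $w\in C^1(\overline{\mathcal I})$. Since $w_1\in C^1(\overline{\mathcal I})$ by Proposition \ref{prop:w1}, only $\varphi$ needs attention. Because $h\in C^1(\overline{\mathcal I})$ (Lemma \ref{lem:h}) and $b\in C^1(\mathbb R_+)$ (Proposition \ref{Proposition_funcb}), differentiating under the integral sign and using the free-boundary relation $h(r,b(r))=0$ (which cancels the boundary terms) gives, for $x\ge b(r)$,
\begin{align*}
\partial_x\varphi(r,x)&=\int_0^{\tau_b}\e^{-\bar\rho t}\partial_x h(r,xY_t)\,Y_t\,\ud t,\\
\partial_r\varphi(r,x)&=\int_0^{\tau_b}\e^{-\bar\rho t}\partial_r h(r,xY_t)\,\ud t.
\end{align*}
Both expressions are continuous on $\overline{\mathcal I}$ and vanish as $x\downarrow b(r)$, matching the identically zero gradient of $\varphi$ on $\{a(r)\le x\le b(r)\}$; hence $\varphi\in C^1(\overline{\mathcal I})$ and $w\in C^1(\overline{\mathcal I})$. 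Moreover $w=w_1$ in a one-sided neighbourhood of $\{x=a(r)\}$ and $\partial_x w_1(r,a(r))=0$ by \eqref{eq:dxw1}, so the boundary condition $w_x=0$ on $\partial\mathcal M$ in \eqref{eqw1} holds.

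Next comes the variational equation. On $\{a(r)<x\le b(r)\}$ we have $w=w_1$ and, by Lemma \ref{lem:h}, $h\le 0$, so $\partial_r w_1-\partial_x w_1=-h/\eta_{\mathrm{max}}\ge 0$, $\mathcal H(r,x;w)=\eta_{\mathrm{max}}(\partial_r w_1-\partial_x w_1)$, and the first line of \eqref{eqw1} becomes exactly \eqref{eq:HJBmu0}. On $\{x>b(r)\}$, the martingale identity \eqref{eq:Mp} for the optimal-stopping value $w_0$, differentiated at $s=0$, gives $(\mathcal L w-\bar\rho w)(r,x)+\pi(x)=0$, so it only remains to show $\mathcal H(r,x;w)=0$, i.e.\ $\partial_r w-\partial_x w\le 0$, there. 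This is the sole genuinely new estimate, and it follows cleanly from Lemma \ref{lem:h}: combining $\partial_r w_1-\partial_x w_1=-h(r,x)/\eta_{\mathrm{max}}$ with the formulas above,
\[
\partial_r w(r,x)-\partial_x w(r,x)=-\frac{h(r,x)}{\eta_{\mathrm{max}}}+\int_0^{\tau_b}\e^{-\bar\rho t}\big(\partial_r h(r,xY_t)-Y_t\,\partial_x h(r,xY_t)\big)\ud t<0,
\]
since $h(r,x)>0$ for $x>b(r)$, $\partial_r h<0$ and $\partial_x h>0$ on $\{x\ge a(r)\}$, $Y_t>0$, and $xY_t\ge b(r)>a(r)$ along the integration path. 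Hence $w$ solves \eqref{eqw1} on all of $\mathcal I$.

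Finally I would identify the feedback control and close the argument. The sign analysis shows $\eta^*(r,x)=\eta_{\mathrm{max}}$ on $\{a(r)\le x<b(r)\}$ and $\eta^*(r,x)=0$ on $\{x\ge b(r)\}$. Running the (deterministic, explicit) dynamics $(R^{\eta^*}_t,X^{\nu^*,\eta^*}_t)$ with $\nu^*$ from Theorem \ref{thm:nustar}: in the first phase $\eta^*=0$ and $X_t=xY_t$ decays until $\tau_b$; in the second phase $\eta^*=\eta_{\mathrm{max}}$ and, since $b$ is increasing, $R$ increasing and $X$ decreasing, one has $X_t<b(R_t)$ until $X_t$ reaches $a(R_t)$, after which the investor keeps the pair on $\{x=a(r)\}$ with $\eta^*=\eta_{\mathrm{max}}$; this gives $\eta^*_t=\eta_{\mathrm{max}}1_{\{t\ge\tau_b\}}$, and well-posedness and conditions \eqref{eq:SC} are inherited from Theorem \ref{thm:nustar}. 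For transversality, the bound $X^{\nu^*,\eta}_t\le\max\{x,a(r+\eta_{\mathrm{max}}t)\}$ from \eqref{eq:xopt} yields $w(r,x)\le\bar\rho^{-1}\pi(x)+\int_0^\infty\e^{-\bar\rho t}\pi\big(a(r+\eta_{\mathrm{max}}t)\big)\ud t$, and then the very estimate used in the proof of Proposition \ref{prop:optfirmmu0}, together with Assumption \ref{ass:profit}-(ii), gives $\e^{-\bar\rho\theta^*_n}w\big(R^{\eta^*}_{\theta^*_n},X^{\nu^*,\eta^*}_{\theta^*_n}\big)\to 0$. Corollary \ref{cor:verif} with Remark \ref{rem:verif} then applies and delivers that $(\eta^*,\nu^*)$ is a Nash equilibrium with the firm's payoff equal to $w$, which is the claim. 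I expect the main obstacle to be the $C^1$-pasting of $w_0=w_1+\varphi$ across the moving boundary $\{x=b(r)\}$ and the sign of $w_r-w_x$ in the continuation region; both rest on the $C^1$-regularity and strict monotonicity of $h$ provided by Lemma \ref{lem:h}.
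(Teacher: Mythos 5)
Your proposal is correct and follows essentially the same route as the paper: establish $w\in C^1(\overline{\mathcal I})$ by differentiating $\varphi$ under the integral with the boundary terms cancelled by $h(r,b(r))=0$, verify the HJB via \eqref{eq:HJBmu0} and the sign of $h$ on $\{a(r)\le x< b(r)\}$ and via the martingale identity \eqref{eq:Mp} on $\{x>b(r)\}$, check transversality as in Proposition \ref{prop:optfirmmu0}, and conclude with Corollary \ref{cor:verif} and Remark \ref{rem:verif}. The only cosmetic difference is that you obtain $\partial_r w_0-\partial_x w_0\le 0$ on $\{x>b(r)\}$ directly from the explicit derivative formulas and the signs of $\partial_r h$, $\partial_x h$, whereas the paper derives the same inequality from monotonicity of $\varphi$ in each variable via its supremum representation \eqref{eq:w0w1}; both rest on Lemma \ref{lem:h} and are equivalent.
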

\begin{proof}
The proof consists of showing that $w$ solves the HJB equation and then applying the verification theorem (in the form of Corollary \ref{cor:verif} with Remark \ref{rem:verif}).

First, we want to show that $w\in C^1(\overline{\mathcal I})$. The result is clear for $a(r)\le x\le b(r)$ by \eqref{eq:drw1} and \eqref{eq:dxw1} and because $w=w_1$ in that set. Indeed, $w_1$ is everywhere continuously differentiable. 
For $x> b(r)$ 
it is convenient to recall that (cf.\ \eqref{eq:tau*b})
\[
\tau^{r,x}_*=\frac{1}{|\mu|}\ln\frac{x}{b(r)}.
\]
Therefore
\[
\frac{\partial\tau_*^{r,x}}{\partial x}=\frac{1}{|\mu| x}\quad\text{and}\quad\frac{\partial\tau^{r,x}_*}{\partial r}=-\frac{1}{|\mu|}\frac{\dot b(r)}{b(r)},
\]
are both continuous (cf.\ \eqref{eq:bdot}). 
Recall that for $x>b(r)$, $w(r,x)=w_0(r.x)=w_1(r,x)+\varphi(r,x)$ with (cf.\ \eqref{eq:w0w1}) 
\begin{equation}\label{eq:vphi}
\varphi(r,x)=\int_0^{\tau_*^{r,x}}\e^{-\bar\rho t}h\big(r,xY_t\big)\ud t,
\end{equation}
where we also used that $\tau_*^{r,x}\le \tau_{r,x}(0)$.
Then, $\nabla w_0$ is continuous in the set $\{(r,x):x\ge b(r)\}$ if and only if $\nabla \varphi$ is such. 
Differentiating in $x$, using the continuity of the derivatives of $h$ shown above to justify exchanging the limit and the integral, and recalling $h(r,b(r))=0$ yields
\begin{equation*}
\begin{aligned}
\partial_x\varphi(r,x)&=\e^{-\bar\rho\tau_*^{r,x}}h(r,b(r))\frac{\partial\tau_*^{r,x}}{\partial x}+\int_0^{\tau_*^{r,x}}\e^{-\bar\rho t}\partial_x h\big(r,xY_t\big)Y_t\ud t\\
&=\int_0^{\tau_*^{r,x}}\e^{-\bar\rho t}\partial_x h\big(r,xY_t\big)Y_t\ud t.
\end{aligned}
\end{equation*}
Differentiating in $r$ yields
\begin{equation*}
\begin{aligned}
\partial_r\varphi(r,x)&=\e^{-\bar\rho\tau_*^{r,x}}h(r,b(r))\partial_r\tau^{r,x}_*+\int_0^{\tau_*^{r,x}}\e^{-\bar\rho t}\partial_r h(r,xY_t)\ud t\\
&=\int_0^{\tau_*^{r,x}}\e^{-\bar\rho t}\partial_r h(r,xY_t)\ud t.
\end{aligned}
\end{equation*}
We immediately deduce $\nabla \varphi$ continuous in the set $\{(r,x):x\ge b(r)\}$. 
Since $w_0(r,b(r))=w_1(r,b(r))$, it remains to verify that $\nabla w_0(r,b(r))=\nabla w_1(r,b(r))$. However,  $\tau_*^{r,b(r)}=0$ implies $\nabla \varphi(r,b(r))=0$ and the claim follows.

Now we show that $w$ solves the HJB equation. Recall that $h(r,x)<0$ for $a(r)\le x<b(r)$ by Lemma \ref{lem:h}. Comparing to \eqref{eq:HJBmu0} yields  $(\partial_r w_1-\partial_x w_1)(r,x)=-\eta_{\mathrm{max}}^{-1}h(r,x)>0$ for $a(r)\le x<b(r)$ and therefore $\eta_{\mathrm{max}}(\partial_r w_1-\partial_x w_1)(r,x)=\mathcal{H}(r,x;w)$. Moreover, it is clear that $w_x(r,a(r))=\partial_x w_1(r,a(r))=0$ for all $r>0$ by \eqref{eq:dxw1}. Thus, $w$ is a solution of the HJB equation in the set $\{(r,x):a(r)\le x<b(r)\}$, i.e., by Proposition \ref{prop:w1}
\begin{equation}\label{eq:HJBw1b}
\begin{aligned}
\mu x\partial_x w(r,x)-\bar\rho w(r,x)+\mathcal H(r,x;w)+\pi(x)&=0,\quad (r,x): a(r)<x<b(r),\\
w_x(r,a(r))&=0,\quad r\in(0,\infty).
\end{aligned}
\end{equation}
Next, we prove that $(\partial_r w_0-\partial_x w_0)(r,x)< 0$ for $x > b(r)$. Fix $b(r)<x_1<x_2$. Recall that $x\mapsto\tau_{r,x}(0)$ is increasing and $\tau_{r,x_2}(0)\ge\tau_{r,x_1}(0)\ge\tau_*^{r,x_1}$.
Then,
\begin{equation*}
\begin{aligned}
\varphi(r,x_2)&\ge \int_0^{\tau_*^{r,x_1}\wedge\tau_{r,x_2}(0)}\e^{-\bar\rho t}h(r,x_2Y_t)\ud t=\int_0^{\tau_*^{r,x_1}}\e^{-\bar\rho t}h(r,x_2Y_t)\ud t\\
&=\int_0^{\tau_*^{r,x_1}}\e^{-\bar\rho t}\big(h(r,x_2Y_t)-h(r,x_1Y_t)\big)\ud t+\varphi(r,x_1).
\end{aligned}
\end{equation*}
Rearranging terms and dividing by $x_2-x_1$, we let $x_2-x_1\to 0$ and deduce
\begin{equation*}
\begin{aligned}
\partial_x\varphi(r,x)\ge \int_0^{\tau_*^{r,x}}\e^{-\bar\rho t}\partial_x h(r,x Y_t)Y_t\ud t>0, 
\end{aligned}
\end{equation*}
where we used continuity of $x\mapsto \tau^{r,x}_*$ and the strict inequality holds for any $x>b(r)$ because $\tau^{r,x}_*>0$ and $\partial_x h(r,x Y_t)>0$ for $t<\tau^{r,x}_*$ (cf.\ Lemma \ref{lem:h}).

Now fix $r_1<r_2$ and $x>b(r_2)$. Since $r\mapsto \tau_{r,x}(0)$ is decreasing and $\tau^{r_2,x}_*\le \tau_{r_2,x}(0)\le \tau_{r_1,x}(0)$, by analogous arguments as the ones above we get
\begin{equation*}
\begin{aligned}
\varphi(r_2,x)-\varphi(r_1,x)\le\int_0^{\tau^{r_2,x}_*}\e^{-\bar \rho t}\big(h(r_2,xY_t)-h(r_1,xY_t)\big)\ud t. 
\end{aligned}
\end{equation*}
Dividing by $r_2-r_1$ and letting $r_2-r_1\to 0$ we deduce
\begin{equation*}
\partial_r\varphi(r,x)\le\int_0^{\tau^{r,x}_*}\e^{-\bar \rho t}\partial_r h(r,xY_t)\ud t<0, 
\end{equation*}
where we used continuity of $r\mapsto \tau^{r,x}_*$ and the strict inequality holds for $x>b(r)$, because $\tau^{r,x}_*>0$ and $\partial_r h(r,xY_t)<0$ for $t<\tau^{r,x}_*$ (cf.\ Lemma \ref{lem:h}).

Thus, for $(r,x)$ such that $x > b(r)$ we have
\begin{equation}\label{eq:gradineq}
\big(\partial_r w_0-\partial_x w_0\big)(r,x)=\big(\partial_r w_1-\partial_x w_1\big)(r,x)+\big(\partial_r \varphi-\partial_x \varphi\big)(r,x) < 0,
\end{equation}
as needed. The latter implies $\mathcal{H}(r,x;w)=0$ for $x>b(r)$. Since $\tau_*$ from \eqref{eq:tau*b} is optimal and using the fact that the mapping in \eqref{eq:Mp} is constant, we deduce that\footnote{In this case also direct differentiation of the function $w_0$ would easily lead to the same result.} 
\[
\mu x\partial_x w_0(r,x)-\bar\rho w_0(r,x)+\pi(x)=0,\quad \text{for $x\ge b(r)$}.
\]
In conclusion we have shown that $w\in C^1(\overline{\mathcal I})$ solves
\begin{equation*}
\begin{aligned}
\mu x \partial_x w(r,x)-\bar\rho w(r,x)+\mathcal{H}(r,x;w)+\pi(x) &=0,\quad (r,x):x > a(r),\\
w_x(r,a(r))&=0,\quad r\in(0,\infty).
\end{aligned}
\end{equation*}
 
To check the transversality condition \eqref{eq:transv.0} for the function $w$, using \eqref{eq:OSP} and \eqref{eq:w1.b} we get 
$$
|w(r,x)|\le \frac{\pi(x)}{\bar \rho}+w_1(r,x) \leq \frac{2\pi(x)}{\bar \rho}+ \e^{r\bar \rho/\eta_{\max}}\int_{r/\eta_{\max}}^\infty \e^{-\bar \rho t} \pi(a( \eta_{\max} t))\ud t
$$
and then proceed as in the proof of Proposition \ref{prop:optfirmmu0}.

By construction, the optimal investment policy $\nu^*$ satisfies \eqref{eq:SC} (cf.\ Theorem \ref{thm:nustar}) and therefore we can apply our verification theorem (Corollary \ref{cor:verif} and Remark \ref{rem:verif}) to deduce optimality of $\eta^*_t\triangleq \eta_{\mathrm{max}}1_{\{t\ge \tau_*\}}$, in the sense that 
$w(r,x)=\sup_{\eta}\mathcal J^F_{r,x}(\eta,\nu^*)=\mathcal J^F_{r,x}(\eta^*,\nu^*)$, for $(r,x)\in\overline{\mathcal{I}}$.

It remains to prove uniqueness of the optimal control. Fix $r\ge 0$ and $a(r)\le x\le b(r)$, so that for any control $\eta\in\mathcal{A}_F$ we have 
\begin{equation}\label{eq:bdddomain}
a(R^\eta_t)\le X^{\nu^*,\eta^*}_t < b(R^\eta_t),\quad\text{for all $t\in(0,\infty)$},
\end{equation}
as a result of the negative drift $\mu<0$ and of $a(r)<b(r)$ for all $r\in[0,\infty)$. Assume that $\eta\in\mathcal{A}_F$ is optimal but $\eta_t<\eta_{\mathrm{max}}$ for $t\in\mathcal I$ where $\int_0^T 1_{\{t\in\mathcal I\}}\ud t>0$ for some $T>0$. Indeed, with no loss of generality we can assume $\mathcal I\subset[0,T]$. From the first equation in the proof of Theorem \ref{thm:verif} and using \eqref{eq:HJBw1b} we obtain 
\begin{equation}\label{eq:itow}
\begin{aligned}
&\e^{-\bar\rho t}w\big(R^\eta_{t},X^{\nu^*,\eta}_{t}\big)\\
&=w(r,x)+\int_0^{t}\e^{-\bar\rho s}\big[\mu X^{\nu^*,\eta}_s\, w_x(\cdot)-\rho w(\cdot)+\eta_s\big(w_r(\cdot)-w_x(\cdot)\big)\big]\big(R^\eta_s,X^{\nu^*,\eta}_s\big)\ud s\\
& < w(r,x)+\int_0^{t}\e^{-\bar\rho s}\big[\mu X^{\nu^*,\eta}_s\, w_x(\cdot)-\bar\rho w(\cdot)+\mathcal H(\cdot;w)\big]\big(R^\eta_s,X^{\nu^*,\eta}_s\big)\ud s\\
& = w(r,x)-\int_0^{t}\e^{-\bar\rho s}\pi\big(X^{\nu^*,\eta}_s\big)\ud s,
\end{aligned}
\end{equation}
where the strict inequality holds for any $t>T$ because 
\[
\eta_s\big(w_r-w_x\big)\big(R^\eta_s,X^{\nu^*,\eta}_s\big)<\mathcal H\big(R^\eta_s,X^{\nu^*,\eta}_s;w\big),
\]
for all $s\in\mathcal{I}$. The strict inequality is preserved when letting $t\to \infty$ because $\mathcal I\subset[0,T]$. Since $w\ge 0$, rearranging terms we reach a contradiction because 
\[
w(r,x)=w_1(r,x)>\int_0^{\infty}\e^{-\bar\rho s}\pi\big(X^{\nu^*,\eta}_s\big)\ud s=\mathcal{J}^F_{r,x}(\eta,\nu^*),
\]
and $\eta$ cannot be optimal for $a(r)\le x\le b(r)$, $r\in[0,\infty)$. 

Next we consider $x>b(r)$ for some $r\in[0,\infty)$. Assume that $\eta\in\mathcal{A}_F$ is an optimal control and let $\tau_b^\eta:=\inf\{t\ge 0:X^{\nu^*,\eta}_t\le b(R^\eta_t)\}$. From the previous paragraph we know that $\eta_t=\eta_{\mathrm{max}}$ for a.e.\ $t\ge \tau^\eta_b$ and it only remains to show that $\eta_t=0$ for a.e.\ $t\in[0,\tau^\eta_b]$. Assume otherwise and let $\eta_t>0$ for $t\in\mathcal{I}\cap[0,\tau^\eta_b]$ where $\mathcal I$ is a set of positive Lebesgue measure. Repeating the same arguments as in the derivation of \eqref{eq:itow} we come to the same conclusion, because \eqref{eq:gradineq} implies
\[
\eta_s\big(w_r-w_x\big)\big(R^\eta_s,X^{\nu^*,\eta}_s\big)<0=\mathcal H\big(R^\eta_s,X^{\nu^*,\eta}_s;w\big),
\]
for all $s\in\mathcal{I}\cap[0,\tau^\eta_b]$. Then, letting $t\to \infty$ we arrive again at a contradiction, showing that $\eta_t=0$ for a.e.\ $t\in[0,\tau^\eta_b]$.

In conclusion, we have obtained $\eta_t=\eta^*_t$ for a.e.\ $t\in[0,\infty)$, as claimed.
\end{proof}

Thanks to Theorem \ref{thm:nustar} and Proposition \ref{prop:optfirmmu}, the form of the Nash equilibrium follows by a direct application of Corollary \ref{cor:verif} and Remark \ref{rem:verif}. The proof of uniqueness is omitted because it repeats verbatim the argument in the proof of Corollary \ref{cor:NE2} combining uniqueness of $\nu^*$ from Theorem \ref{thm:nustar} and of $\eta^*$ from Proposition \ref{prop:optfirmmu}.
\begin{corollary}\label{optstrat.cor}
Let the controlled dynamics be as in \eqref{eqdet} with $\mu<0$. Then, the unique equilibrium pair is given by
\begin{equation}\label{eq.optimalpair_det}
(\eta^*,\nu^*)=(\eta_{\mathrm{max}}1_{\{t\ge \tau_b\}},\nu^a),
\end{equation}
with $\nu^a$ as in \eqref{eq:nua} and $\tau_b$ as in \eqref{eq:tau*b}.    
\end{corollary}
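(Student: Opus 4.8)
The plan is to obtain the result as a direct application of Corollary~\ref{cor:verif}, in its deterministic relaxation from Remark~\ref{rem:verif}, feeding in the two main intermediate results of this subsection. Throughout I fix the sets $\mathcal M=\{(r,x):0<x\le a(r)\}$ and $\mathcal I=\mathcal M^c=\{(r,x):x>a(r)\}$ identified in Theorem~\ref{thm:nustar}.

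First I would dispatch hypotheses (ii) and (iii) of Corollary~\ref{cor:verif}: Theorem~\ref{thm:nustar} furnishes, for every $\eta\in\mathcal A_F$, the investor's optimal response $\nu^*=\nu^*(\eta)$ of \eqref{eq:nua}--\eqref{eq:nua2}, so that $\bar v(r,x;\eta)=\mathcal J^I_{r,x}(\eta,\nu^*(\eta))$, and it also shows that the controlled pair $(R^\eta_t,X^{\nu^*,\eta}_t)_{t\ge 0}$, given explicitly by \eqref{eq:xopt}, is well-posed and satisfies \eqref{eq:SC} with the above $\mathcal M,\mathcal I$. For hypothesis (i) I would invoke Proposition~\ref{prop:optfirmmu}: the candidate $w$ of \eqref{eq:weq1} belongs to $C^1(\overline{\mathcal I})$ and solves the deterministic form of \eqref{eqw1}, namely $\mu x\,w_x-\bar\rho w+\mathcal H(r,x;w)+\pi(x)=0$ on $\mathcal I$ with boundary condition $w_x(r,a(r))=0$; since $\sigma=0$, Remark~\ref{rem:verif} guarantees that this $C^1$-regularity on $\overline{\mathcal I}$ suffices.

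The remaining point, hypothesis (iv), is the one requiring a little care. I would check that the feedback control $\eta^*(r,x)=\eta_{\mathrm{max}}1_{\{w_r>w_x\}}(r,x)$, run along the dynamics \eqref{eqdet} with $\nu^*$, is well-posed and coincides with the open-loop strategy $\eta_{\mathrm{max}}1_{\{t\ge\tau_b\}}$ of \eqref{eq.optimalpair_det}. From the proof of Proposition~\ref{prop:optfirmmu} and Lemma~\ref{lem:h} one has, on the band $\{a(r)\le x\le b(r)\}$, that $(w_r-w_x)(r,x)=-\eta_{\mathrm{max}}^{-1}h(r,x)>0$, while $(w_r-w_x)(r,x)\le 0$ for $x\ge b(r)$; hence the feedback equals $\eta_{\mathrm{max}}$ strictly below the boundary $b$ and $0$ above it. Starting from $x>b(r)$: with $\eta^*\equiv 0$ the uncontrolled capacity $X_t=xY_t$ decreases ($\mu<0$) and $R_t\equiv r$, so the pair first reaches $\{x=b(r)\}$ exactly at $\tau_b=\tau_*$ of \eqref{eq:tau*b}; thereafter abatement runs at rate $\eta_{\mathrm{max}}$ while $\nu^*$ keeps $(R,X)$ on or above the moving boundary $a(\cdot)$, and since $b$ is increasing and $a<b$ (Proposition~\ref{Proposition_funcb}) the pair stays inside the closed band $\{a(r)\le x\le b(r)\}$ — in particular it never re-enters $\{x>b(r)\}$ — so the feedback remains equal to $\eta_{\mathrm{max}}$ and agrees with $\eta_{\mathrm{max}}1_{\{t\ge\tau_b\}}$, making the closed-loop trajectory well-posed. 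For a starting point with $a(r)\le x\le b(r)$ the same reasoning applies with $\tau_b=0$, and if $x<a(r)$ an initial lump investment of size $[a(r)-x]^+$ brings the state onto $\partial\mathcal M$ and reduces matters to the previous case, consistently with \eqref{eq:SC}. The transversality condition \eqref{eq:transv.0} for $w$ is precisely what was established at the end of the proof of Proposition~\ref{prop:optfirmmu}. With (i)--(iv) in place, Corollary~\ref{cor:verif} yields that $(\eta^*,\nu^*)=(\eta_{\mathrm{max}}1_{\{t\ge\tau_b\}},\nu^a)$ is a Nash equilibrium, with equilibrium payoffs $w$ from \eqref{eq:weq1} and $v(r,x)\triangleq\bar v(r,x;\eta^*)$.

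I expect the only genuine obstacle to be the bookkeeping in hypothesis (iv): reconciling the feedback prescription $\eta_{\mathrm{max}}1_{\{w_r>w_x\}}$ with the explicit open-loop strategy and establishing well-posedness of the resulting closed-loop dynamics — in particular ruling out the trajectory crossing back above the boundary $b$, which would switch the control off. All the analytic ingredients for this (the sign and monotonicity of $h$ from Lemma~\ref{lem:h}, the strict ordering $a<b$ and monotonicity of $a$ and $b$ from Proposition~\ref{Proposition_funcb}, and the explicit representation \eqref{eq:xopt} of $X^{\nu^*,\eta}$) are already available, so the verification is routine once organized carefully and requires no new estimates.
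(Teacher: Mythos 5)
Your proposal is correct and follows exactly the paper's route: the paper's own proof is a one-line appeal to Theorem~\ref{thm:nustar}, Proposition~\ref{prop:optfirmmu}, Corollary~\ref{cor:verif} and Remark~\ref{rem:verif}, and you simply make explicit the verification of hypotheses (i)--(iv) that the paper leaves implicit. Your extra care on hypothesis (iv) (reconciling the feedback $\eta_{\mathrm{max}}1_{\{w_r>w_x\}}$ with the open-loop strategy and ruling out re-entry above $b$) is sound and is detail the paper omits rather than a deviation from its argument.
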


\subsection{Strong solution of the HJB system}\label{sec:strong}

A posteriori we can show that the investor's equilibrium payoff $v$ solves the variational problem \eqref{eqvariationalv1}. Then, the pair $(v,w)$ solves the HJB system \eqref{eqw1}--\eqref{eqvariationalv1} because we have shown that $w$ solves \eqref{eqw1} in the proof of Proposition \ref{prop:optfirmmu}. However, since the function $v$ may not be smooth, we need a precise notion of solution of the HJB system \eqref{eqw1}--\eqref{eqvariationalv1} (we call it {\em strong solution}). In particular, we will adapt the definition of the set $\mathcal{M}$ in order to account for the fact that $v_x$ is only defined almost everywhere.
\begin{definition}[{\bf Strong solution}]\label{def:strong}
For $\sigma=0$, recall that $\mathcal{L}=\mu x\partial_x$. The pair $(v,w)$ is a {\em strong solution} of the HJB system if:
\begin{itemize}
\item[(i)] $v$ is locally Lipschitz on $(0,\infty)^2$ with $v_x\le \alpha$ a.e.

\item[(ii)] Letting
$\mathcal M\triangleq\{(r,x): v_x(r,x)\ \text{exists and}\ v_x(r,x) = \alpha\}$, $\mathcal{I}\triangleq\mathcal{M}^c$
and $\partial\mathcal{M}=\overline{\mathcal M}\cap\overline{{\mathcal I}}$,
it holds $w\in C^1(\overline{\mathcal I})$ and
\begin{equation*}
\left\{
\begin{array}{ll}
(\mathcal{L} w - \bar\rho w) (r,x) + \mathcal{H}(r,x;w) + \pi(x) =0, &\qquad (r,x)\in\mathcal I,\\[+4pt]
w_x(r,x) = 0, &\qquad (r,x)\in\partial\mathcal M. 
\end{array}
\right.
\end{equation*}
\item[(iii)] Letting $\eta^*(r,x)\triangleq \eta_{\mathrm{max}} 1_{\{w_r>w_x\}}(r,x)$,
\begin{equation*}
\max \left\{ (\mathcal{L} v- \rho v) (r,x) + \left( v_{r}(r,x) - v_{x}(r,x) \right) \eta^*(r,x)  + \Pi(r,x), v_x(r,x) - \alpha \right\} = 0,
\end{equation*}
for a.e.\ $(r,x)\in[0,\infty)\times\mathbb R_+$.
\end{itemize}
\end{definition}
Next we show that $v$ satisfies (i) and (iii) of the above definition. For the statement and proof of the next proposition it is convenient to introduce sets
\begin{equation*}
\mathcal{O}_a\triangleq\{(r,x):0<x<a(r)\},\quad
\mathcal    {O}_{a,b}\triangleq\{(r,x):a(r)< x< b(r)\}\quad\text{and}\quad\mathcal{O}_{b}\triangleq\{(r,x):b(r)<x\}.
\end{equation*}
Moreover, we are going to use the following notation: given a set $S$ that can be partitioned as $S=A\cup B$, we say that a function $\varphi:S\to\mathbb R$ belongs to $C(\overline A)\cap C(\overline B)$ if $\varphi$ is continuous separately on $A$ and $B$ with continuous extensions to the closure of both sets; this allows the function $\varphi$ to be discontinuous across the boundary $\overline A\cap\overline B$, thus $C(S)\subsetneq C(\overline A)\cap C(\overline B)$. 
\begin{proposition}\label{prop:strong}
The investor's equilibrium payoff $v$ satisfies (i) and (iii) in Definition \ref{def:strong}. The set $\mathcal{M}$ is explicitly given by $\mathcal{M}=\{(r,x): 0< x\le a(r)\}$ and its boundary reads $\partial\mathcal M=\{(r,x):x=a(r)\}$. Finally,  $v_x\in C(\overline{\mathcal{O}}_{a})\cap C(\overline{\mathcal{O}}_{a,b})\cap C(\overline{\mathcal{O}}_{b})$ (possibly discontinuous across $r\mapsto b(r)$) and $v_r \in C(\overline{\mathcal{O}}_{a,b})$.
\end{proposition}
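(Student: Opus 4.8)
\emph{Plan.}
I would read off everything from an explicit expression for the equilibrium payoff of the investor. Combining Theorem~\ref{thm:nustar} with Corollary~\ref{optstrat.cor} (and its $\mu=0$ analogue), one has $v(r,x)=\mathcal J^I_{r,x}(\eta^*,\nu^a)$, which I would evaluate region by region. On $\mathcal O_a$ the investor's first action is the lump-sum investment $[a(r)-x]^+=a(r)-x$ (cf.\ \eqref{eq:nu0}), so $v(r,x)=v(r,a(r))-\alpha\big(a(r)-x\big)$ there. On $\mathcal O_{a,b}$ one has $\tau_b=0$ and $\eta^*\equiv\eta_{\mathrm{max}}$; writing $R_t=r+\eta_{\mathrm{max}}t$ and $X_t=Y_t[x-f_M(t)]$ for $t\le\tau_M$ (with $\tau_M$ as in \eqref{eq:tauM2}, so that $X_{\tau_M}=a(R_{\tau_M})$),
\[
v(r,x)=\int_0^{\tau_M}\e^{-\rho t}\Pi(R_t,X_t)\,\ud t+\e^{-\rho\tau_M}v\big(R_{\tau_M},a(R_{\tau_M})\big),
\]
\[
v(r,a(r))=\int_0^\infty\e^{-\rho t}\Big[\Pi\big(r+\eta_{\mathrm{max}}t,a(r+\eta_{\mathrm{max}}t)\big)-\alpha\,g(r+\eta_{\mathrm{max}}t)\Big]\ud t,
\]
with $g(s)\triangleq(\dot a(s)+1)\eta_{\mathrm{max}}+|\mu|a(s)$ the rate at which the investor funds the boundary $\{x=a(r)\}$ (cf.\ Remark~\ref{rem:nustar}); finiteness of $v(r,a(r))$ follows from Assumption~\ref{ass:profit} as in the analysis of $w_1$. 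Finally, on $\mathcal O_b$, $\tau_b=\tfrac1{|\mu|}\ln\tfrac{x}{b(r)}$ and $v(r,x)=\int_0^{\tau_b}\e^{-\rho t}\Pi(r,xY_t)\,\ud t+\e^{-\rho\tau_b}v(r,b(r))$, which reduces to $v(r,x)=\tfrac1\rho\Pi(r,x)$ when $\mu=0$.

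\emph{Regularity in each region, the two one-sided equations, and smooth fit.}
Differentiating the closed forms above and interchanging limits with integrals by monotonicity of $\Pi$, exactly as in Proposition~\ref{prop:w1} and Lemma~\ref{lem:h}, I would obtain $v_x,v_r$ continuous and locally bounded up to the boundary of each of $\mathcal O_a,\mathcal O_{a,b},\mathcal O_b$, with $v_x\equiv\alpha$ and $v_r\equiv\tfrac{\ud}{\ud r}v(r,a(r))-\alpha\dot a(r)$ on $\overline{\mathcal O_a}$, and $v_r\in C(\overline{\mathcal O_{a,b}})$. Since $w_r>w_x$ on $\mathcal O_{a,b}$ and $w_r\le w_x$ on $\mathcal O_b$ (Proposition~\ref{prop:optfirmmu}), we have $\eta^*=\eta_{\mathrm{max}}$ on $\mathcal O_{a,b}$ and $\eta^*=0$ on $\mathcal O_b$, and since the equilibrium trajectories foliate each of these sets, constancy of $t\mapsto\e^{-\rho t}v(R_t,X_t)+\int_0^t\e^{-\rho s}\Pi\,\ud s$ along them yields $(\mathcal L v-\rho v)+\eta_{\mathrm{max}}(v_r-v_x)+\Pi=0$ on $\mathcal O_{a,b}$ and $(\mathcal L v-\rho v)+\Pi=0$ on $\mathcal O_b$. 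For the smooth fit at $x=a(r)$, differentiating the closed form of $v(r,a(r))$ gives $\eta_{\mathrm{max}}\tfrac{\ud}{\ud r}v(r,a(r))=\rho v(r,a(r))-\Pi(r,a(r))+\alpha\,g(r)$; inserting this, together with the chain rule $\tfrac{\ud}{\ud r}v(r,a(r))=v_r(r,a(r)^+)+\dot a(r)v_x(r,a(r)^+)$ and $g(r)=|\mu|a(r)+\eta_{\mathrm{max}}(1+\dot a(r))$, into the $\mathcal O_{a,b}$-equation evaluated as $x\downarrow a(r)$ forces $v_x(r,a(r)^+)=\alpha$. Hence $v_x$ extends continuously across $\{x=a(r)\}$ with value $\alpha$, so that $\{x=a(r)\}\subseteq\mathcal M$.

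\emph{The gradient constraint and its strictness.}
I would first prove $v_x\le\alpha$ by a one-step dynamic programming estimate: for $x'>x$ the investor may invest $x'-x$ at time $0$ (cost $\alpha(x'-x)$) and then follow the equilibrium response from state $(r,x')$; since $\eta^*$ is Markovian this admissible deviation has payoff $-\alpha(x'-x)+v(r,x')$, whence $v(r,x)\ge v(r,x')-\alpha(x'-x)$. This makes $v$ locally Lipschitz in $x$ (and in $r$ by the closed forms), gives $v_x\le\alpha$ a.e., and we already have $v_x=\alpha$ on $\mathcal O_a$. For strictness off $\mathcal O_a$: on $(0,b(r)]$ the function $v(r,\cdot)$ coincides with the investor's value when the firm abates at the maximal rate at all times, which is concave (fixed abatement, concave $\Pi(r,\cdot)$, linear investment cost) and affine with slope $\alpha$ on $(0,a(r)]$, so $v_x=\alpha$ on a subinterval of $\mathcal O_{a,b}$ would make $v$ affine there and then the $\mathcal O_{a,b}$-equation would make $\Pi(r,\cdot)$ affine, contradicting its strict concavity; thus $v_x<\alpha$ on $\mathcal O_{a,b}$, and in particular $v_x(r,b(r)^-)<\alpha$. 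On $\mathcal O_b$ with $\mu<0$, differentiating the $\mathcal O_b$-ODE gives $v_{xx}=(\Pi_x-\delta v_x)/(|\mu|x)$; at any point where $v_x$ attained the value $\alpha$ it would be a maximum of $v_x\le\alpha$, forcing $v_{xx}=0$ and hence $\Pi_x(r,x)=\alpha\delta$, which contradicts $\Pi_x(r,x)<\Pi_x(r,a(r))=\alpha\delta$ for $x>a(r)$; combined with $v_x(r,b(r)^-)<\alpha$ this excludes $v_x=\alpha$ on all of $\overline{\mathcal O_b}$. When $\mu=0$, $v_x=\Pi_x/\rho<\alpha$ on $\mathcal O_b$ directly. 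Consequently $\mathcal M=\{(r,x):0<x\le a(r)\}$ and $\partial\mathcal M=\{(r,x):x=a(r)\}$.

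\emph{Conclusion and main difficulty.}
The three pieces glue into a continuous $v$ on $(0,\infty)^2$ (values match across $\{x=a(r)\}$ and $\{x=b(r)\}$ as $\tau_M,\tau_b\downarrow0$) with locally bounded, piecewise continuous gradient, hence locally Lipschitz with $v_x\le\alpha$ a.e.: this is item (i). By the two previous paragraphs $v_x$ is continuous on each of $\overline{\mathcal O_a},\overline{\mathcal O_{a,b}},\overline{\mathcal O_b}$ — agreeing (at the value $\alpha$) across $\{x=a(r)\}$, but possibly jumping across $\{x=b(r)\}$ because the two one-sided equations differ there — and $v_r\in C(\overline{\mathcal O_{a,b}})$ from its closed form. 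For item (iii): on $\mathcal I=\{x>a(r)\}$ the relevant equation above holds with equality while $v_x-\alpha<0$, so the maximum is $0$; on $\mathcal O_a$ one has $v_x-\alpha=0$, $\eta^*=\eta_{\mathrm{max}}$, and, using the smooth-fit identity, the first bracket equals $0$ at $x=a(r)$ and has $x$-derivative $\Pi_x(r,x)-\alpha\delta>0$ on $(0,a(r))$, hence is $\le0$ there, so the maximum is again $0$. The delicate step is the strict inequality $v_x<\alpha$ on $\mathcal O_b$ (i.e.\ ruling out that $\mathcal M$ contains points above $a(r)$): it does not follow from concavity and must be extracted from the sign of $v_{xx}$ via the one-sided ODE together with $\Pi_x(r,\cdot)<\alpha\delta$ past $a(r)$, after first securing $v_x\le\alpha$ by the Markovian deviation argument; the explicit differentiations and the smooth-fit identity of the second paragraph are the other laborious but routine ingredients.
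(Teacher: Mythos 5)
Your overall architecture matches the paper's: explicit region-by-region formulas for $v$, the lump-sum deviation/supermartingale argument for $v_x\le\alpha$ a.e., the one-sided equations on $\mathcal O_{a,b}$ and $\mathcal O_b$ from the (super)martingale property, smooth fit at $a(r)$, and the differentiated ODE plus $\Pi_x(r,x)<\alpha\delta$ for $x>a(r)$ to rule out $v_x=\alpha$ on $\mathcal O_b$ (the paper phrases this as a maximum principle via the representation of $u=v_x-\alpha$ along the flow, which is equivalent to your $v_{xx}=(\Pi_x-\delta v_x)/(|\mu|x)$ argument). Your treatment of the point $x=b(r)$ also agrees in substance with the paper's: either $v_x$ fails to exist there or it coincides with the left limit, which is strictly below $\alpha$.

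There is, however, a genuine gap in your strictness argument on $\mathcal O_{a,b}$. You assume $v_x=\alpha$ on a subinterval, conclude that $v(r_0,\cdot)$ is affine there (fine, granting concavity of $v(r_0,\cdot)$, which you assert but do not prove), and then claim that the equation $(\mathcal L v-\rho v)+\eta_{\max}(v_r-v_x)+\Pi=0$ forces $\Pi(r_0,\cdot)$ to be affine. It does not: solving the equation for $\Pi(r_0,x)$ produces the term $\eta_{\max}v_r(r_0,x)$, and affineness of $x\mapsto v(r_0,x)$ on a single slice $\{r=r_0\}$ gives no control on $x\mapsto v_r(r_0,x)$; the equation merely determines $v_r(r_0,\cdot)$ in terms of $\Pi(r_0,\cdot)$, with no contradiction. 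To close this you either need \emph{strict} concavity of $v(r_0,\cdot)$ on $(a(r_0),b(r_0))$ (which does hold, but requires an argument showing that the optimally controlled trajectories from two distinct starting points differ on a set of positive measure, namely on $[0,\tau_M)$), or you can do what the paper does: a two-sided perturbation of the initial state $x$ with the \emph{same} control $\widehat\nu^{*;r,x}$ identifies $v_x(r,x)=\Gamma(r,x)$ with
\[
\Gamma(r,x)=\alpha\,\e^{-\delta\tau^{r,x}_M}+\int_0^{\tau^{r,x}_M}\e^{-\rho t}Y_t\,\Pi_x\big(r+\eta_{\max}t,Y_t[x-f_M(t)]\big)\,\ud t,
\]
and then $\Gamma_x(r,x)=\int_0^{\tau_M}\e^{-\rho t}Y_t^2\,\Pi_{xx}<0$ by strict concavity of $\Pi(r,\cdot)$, so $\Gamma(r,x)<\Gamma(r,a(r))=\alpha$ for $x>a(r)$. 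This computation also delivers, as a byproduct, the continuity of $v_x$ on $\overline{\mathcal O}_{a,b}$ and the smooth fit $v_x(r,a(r)^+)=\alpha$ that you otherwise obtain from the (correct but more roundabout) identity for $\tfrac{\ud}{\ud r}v(r,a(r))$.
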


\begin{proof}
First of all we obtain an analytical expression for $v$ using that $v(r,x)=\mathcal{J}^I_{r,x}(\eta^*,\nu^*)$. 
From the explicit formulae for $\eta^*$ and $\nu^*$ we obtain the following expressions: for $(r,x)\in\overline{\mathcal{O}}_b$ we have
\begin{equation}\label{eq:v1}
\begin{aligned}
v(r,x)&=\int_0^{\tau^{r,x}_b}\e^{-\rho t}\Pi\big(r,xY_t\big)\ud t+\e^{-\rho\tau^{r,x}_b}\int_0^{\tau^{r,b(r)}_M}\e^{-\rho t}\Pi\big(r+\eta_{\mathrm{max}}t,Y_t[b(r)-f_M(t)]\big)\ud t\\
&\quad +\e^{-\rho\tau^{r,x}_b}\int_{\tau^{r,b(r)}_M}^\infty\e^{-\rho t}\Pi\big(r+\eta_{\mathrm{max}}t,a(r+\eta_{\mathrm{max}}t)\big)\ud t \\
&\quad-\alpha \e^{-\rho\tau^{r,x}_b}\int_{\tau^{r,b(r)}_M}^\infty\e^{-\rho t}\Big(\eta_{\mathrm{max}}\dot a(r+\eta_{\mathrm{max}}t)+\eta_{\mathrm{max}}+|\mu|a(r+\eta_{\mathrm{max}}t)\Big)\ud t;
\end{aligned}
\end{equation}
for $(r,x)\in\overline{\mathcal{O}}_{a,b}$ we have
\begin{equation}\label{eq:v2}
\begin{aligned}
v(r,x)&=\int_0^{\tau^{r,x}_M}\e^{-\rho t}\Pi\big(r+\eta_{\mathrm{max}}t,Y_t[x-f_M(t)]\big)\ud t\\
&\quad +\int_{\tau^{r,x}_M}^\infty\e^{-\rho t}\Pi\big(r+\eta_{\mathrm{max}}t,a(r+\eta_{\mathrm{max}}t)\big)\ud t \\
&\quad-\alpha \int_{\tau^{r,x}_M}^\infty\e^{-\rho t}\Big(\eta_{\mathrm{max}}\dot a(r+\eta_{\mathrm{max}}t)+\eta_{\mathrm{max}}+|\mu|a(r+\eta_{\mathrm{max}}t)\Big)\ud t;
\end{aligned}
\end{equation}
for $(r,x)\in\mathcal{O}_a$ we have
\begin{equation}\label{eq:dstar}
\begin{aligned}
v(r,x)&=-\alpha\big(a(r)-x\big)+
\int_0^\infty\e^{-\rho t}\Pi\big(r+\eta_{\mathrm{max}}t,a(r+\eta_{\mathrm{max}}t)\big)\ud t\\
&\quad-\alpha \int_0^\infty\e^{-\rho t}\Big(\eta_{\mathrm{max}}\dot a(r+\eta_{\mathrm{max}}t)+\eta_{\mathrm{max}}+|\mu|a(r+\eta_{\mathrm{max}}t)\Big)\ud t.
\end{aligned}
\end{equation}
Notice that for any $\tau\ge 0$
\begin{equation*}
\begin{aligned}
\int_\tau^\infty\e^{-\rho t}\dot a(r+\eta_{\mathrm{max}}t)\ud t&=\frac{1}{\eta_{\mathrm{max}}}\int_\tau^\infty\e^{-\rho t}\ud a(r+\eta_{\mathrm{max}} t)\\
&=\frac{1}{\eta_{\mathrm{max}}}\Big(-\e^{-r\tau}a(r+\eta_{\mathrm{max}}\tau)
+\rho\int_\tau^\infty\e^{-\rho t}a(r+\eta_{\mathrm{max}}t)\ud t\Big).
\end{aligned}
\end{equation*}
It is now easy to check that $(r,x)\mapsto v(r,x)$ is locally Lipschitz on $[0,\infty)\times\mathbb{R}_+$ using also the explicit formulae for $\nabla \tau^{r,x}_b$ and $\nabla\tau^{r,x}_M$. 

Since
\[
v(r,x)=\sup_{\nu\in\mathcal{A}_I}\mathcal{J}^I_{r,x}(\eta^*,\nu),
\]
then by dynamic programming arguments for singular control, justified by continuity of $v$ (cf.\ \cite{de2023dynamic}), we know that for any $\nu\in\mathcal{A}_I$
\begin{equation}\label{eq:supermg}
t\mapsto \e^{-\rho t}v\big(R^{\eta^*}_t, X^{\nu,\eta^*}_t\big) +\int_0^t\e^{-\rho s}\Pi(R^{\eta^*}_s,X^{\nu,\eta^*}_s)\ud s-\alpha\int_{[0,t]}\e^{-\rho s}\ud \nu_s
\end{equation}
is a nonincreasing function (a deterministic supermartingale) and 
\begin{equation}\label{eq:mart}
t\mapsto \e^{-\rho t}v\big(R^{\eta^*}_t, X^{\nu^*,\eta^*}_t\big) +\int_0^t\e^{-\rho s}\Pi(R^{\eta^*}_s,X^{\nu^*,\eta^*}_s)\ud s-\alpha\int_{[0,t]}\e^{-\rho s}\ud \nu^*_s
\end{equation}
is constant (a deterministic martingale). In particular, from the supermartingale property we deduce, by choosing $\nu_0=\delta>0$ and $t= 0$, that $v(r,x)\ge v(r,x+\delta)-\alpha\delta$. Hence, 
\begin{equation}\label{eq:ubound}
\partial_x v(r,x)\le \alpha,\quad\text{for a.e.\ $(r,x)\in[0,\infty)\times\mathbb{R}_+$}.
\end{equation}
Then (i) in Definition \ref{def:strong} holds.

Thanks to the Lipschitz regularity of $v$ we can apply a change of variable formula to \eqref{eq:supermg} with $\nu\equiv 0$ because the dynamics $(R^{\eta^*},X^{0,\eta^*})$ is absolutely continuous in time. That yields
\begin{equation*}
\begin{aligned}
v(r,x)&\ge \e^{-\rho t}v\big(R^{\eta^*}_t, X^{0,\eta^*}_t\big) +\int_0^t\e^{-\rho s}\Pi(R^{\eta^*}_s,X^{0,\eta^*}_s)\ud s\\
&=v(r,x)+\int_0^t \e^{-\rho s}\big(\mathcal{L}v+\big(v_r-v_x\big)\eta^*-\rho v+\Pi\big)(R^{\eta^*}_s,X^{0,\eta^*}_s)\ud s.
\end{aligned}
\end{equation*}
Dividing by $t$ and letting $t\downarrow 0$ we deduce
\begin{equation*}
\mathcal{L}v(r,x)+\big(v_r(r,x)-v_x(r,x)\big)\eta^*(r,x)-\rho v(r,x)+\Pi(r,x)\le 0,\quad a.e.\ (r,x)\in[0,\infty)\times\mathbb{R}_+.
\end{equation*}
Since $t\mapsto\nu^*_t$ is also absolutely continuous when $(R_0,X_0)=(r,x)$ is such that $x\ge a(r)$ (cf.\ Remark \ref{rem:nustar}), then an analogous change of variable argument, combined with \eqref{eq:mart} yields
\begin{equation}\label{eq:HJBv}
\mathcal{L}v(r,x)+\big(v_r(r,x)-v_x(r,x)\big)\eta^*(r,x)-\rho v(r,x)+\Pi(r,x)= 0,
\end{equation}
for a.e.\ $(r,x)\in[0,\infty)\times\mathbb{R}_+$ such that $x> a(r)$. Combining the results above we obtain that $v$ satisfies (iii) in Definition \ref{def:strong}. Actually, upon closer inspection of the formulae in \eqref{eq:v1} and \eqref{eq:v2} we notice that $v$ and $v_x$ are continuous separately in the set $\mathcal{O}_{a,b}$ and $\mathcal{O}_b$, with continuous extensions to the boundary of the two domains (we are not claiming continuity of derivatives across the boundaries). Since $\eta^*(r,x)$ is also constant in those two sets, we deduce that $v_r$ is continuous in $\overline{\mathcal{O}}_{a,b}$ and indeed \eqref{eq:HJBv} holds in the classical sense at all points $(r,x)$ with $x>a(r)$ and $x\neq b(r)$.

It remains to show that $\mathcal{M}=\{(r,x):0<x\le a(r)\}$. From the explicit formulae for $v$ it is immediate to deduce $v_x(r,x)=\alpha$ for all $(r,x)$ such that $0<x\le a(r)$. Then we must show that $v_x(r,x)<\alpha$ at all points in $x>a(r)$ where the derivative exists. 

Let us fix $(r,x)\in\mathcal{O}_{a,b}$ and let $\widehat\nu^*=\widehat \nu^{*;r,x}$ be optimal for $v(r,x)$ and given by \eqref{eq:nua2}. Then, from \eqref{eq:barJ} we get
\[
v(r,x)=\int_0^\infty\e^{-\rho t}\Big(\Pi\big(r+\eta_{\mathrm{max}}t,Y_t[x+\widehat\nu^*_t-f_M(t)]\big)-\alpha\delta Y_t\Big)\ud t.
\]
Notice that in this case $\eta^*_t=\eta_{\mathrm{max}}$ for all $t\ge 0$ because the dynamics $(R^{\eta^*},X^{\nu^*,\eta^*})$ is bound to evolve in $\overline{\mathcal{O}}_{a,b}$. Moreover, the firm's optimal control remains the same also when $(R_0,X_0)=(r,x-\varepsilon)$
for any small $\varepsilon>0$, because $X_0\le b(R_0)$ and $\widehat\nu^{*;r,x}$ is admissible but suboptimal for the payoff $\bar{\mathcal{J}}^I_{r,x-\varepsilon}(\nu,\eta^*)$ in \eqref{eq:barJ}. We then have 
\[
v(r,x-\varepsilon)\ge\int_0^\infty\e^{-\rho t}\Big(\Pi\big(r+\eta_{\mathrm{max}}t,Y_t[x-\varepsilon+\widehat\nu^*_t-f_M(t)]\big)-\alpha\delta Y_t\Big)\ud t.
\]
Subtracting the two expressions, yields
\begin{equation*}
\begin{aligned}
&v(r,x)-v(r,x-\varepsilon)\\
&\le \int_0^\infty\e^{-\rho t}\Big(\Pi\big(r+\eta_{\mathrm{max}}t,Y_t[x+\widehat\nu^*_t-f_M(t)]\big)-\Pi\big(r+\eta_{\mathrm{max}}t,Y_t[x-\varepsilon+\widehat\nu^*_t-f_M(t)]\big)\Big)\ud t.
\end{aligned}
\end{equation*}
If $(r,x)$ is a point where $v_x(r,x)$ exists, we can divide by $\varepsilon$ and let $\varepsilon\downarrow 0$ to obtain
\begin{equation*}
\begin{aligned}
v_x(r,x)&\le \int_0^\infty\e^{-\rho t}Y_t\Pi_x\big(r+\eta_{\mathrm{max}}t,Y_t[x+\widehat\nu^*_t-f_M(t)]\big)\ud t\\
&=\int_0^{\tau^{r,x}_M}\e^{-\rho t}Y_t\Pi_x\big(r+\eta_{\mathrm{max}}t,Y_t[x-f_M(t)]\big)\ud t\\
&\quad+\int_{\tau^{r,x}_M}^\infty\e^{-\rho t}Y_t\Pi_x\big(r+\eta_{\mathrm{max}}t,a(r+\eta_{\mathrm{max}}t)\big)\ud t.
\end{aligned}
\end{equation*}
By definition of $a(r)$ we have, for $t\ge \tau^{r,x}_M$ 
\[
Y_t\Pi_x\big(r+\eta_{\mathrm{max}}t,a(r+\eta_{\mathrm{max}}t)\big)=\alpha\delta Y_t
\]
and therefore, using also $\e^{-\rho t}Y_t=\e^{-\delta t}$,
\[
\int_{\tau^{r,x}_M}^\infty\e^{-\rho t}Y_t\Pi_x\big(r+\eta_{\mathrm{max}}t,a(r+\eta_{\mathrm{max}}t)\big)\ud t=\alpha\int_{\tau^{r,x}_M}^\infty\delta\e^{-\delta t}\ud t=\alpha Y_{\tau^{r,x}_M}\e^{-\rho\tau^{r,x}_M}.
\]
Then
\[
v_x(r,x)\le \alpha Y_{\tau^{r,x}_M}\e^{-\rho\tau^{r,x}_M}+\int_0^{\tau^{r,x}_M}\e^{-\rho t}Y_t\Pi_x\big(r+\eta_{\mathrm{max}}t,Y_t[x-f_M(t)]\big)\ud t\triangleq \Gamma(r,x).
\]

Since $a(r)<x< b(r)$, for sufficiently small $\varepsilon>0$ we can repeat analogous arguments to estimate 
\begin{equation*}
\begin{aligned}
&v(r,x+\varepsilon)-v(r,x)\\
&\ge \int_0^\infty\e^{-\rho t}\Big(\Pi\big(r+\eta_{\mathrm{max}}t,Y_t[x+\varepsilon+\widehat\nu^*_t-f_M(t)]\big)-\Pi\big(r+\eta_{\mathrm{max}}t,Y_t[x+\widehat\nu^*_t-f_M(t)]\big)\Big)\ud t,
\end{aligned}
\end{equation*}
where, again, $\widehat\nu^*_t=\widehat\nu^{*;r,x}_t$ is independent of $\varepsilon$. Thus, dividing by $\varepsilon$ and passing to the limit we conclude $v_x(r,x)\ge \Gamma(r,x)$. Combining with the previous bound we get $v_x(r,x)=\Gamma(r,x)$ for $a(r)<x<b(r)$. From this representation we immediately deduce continuity of $v_x$ in the set $\mathcal{O}_{a,b}$. Moreover, $v_x$ is extended continuously to $\overline{\mathcal{O}}_{a,b}$ thus lifting the regularity of $v_x$ from $L^\infty(\mathcal{O}_{a,b})$ to $C(\overline{\mathcal{O}}_{a,b})$. It is also clear that $\Gamma(r,a(r))=\alpha$ because $\tau^{r,a(r)}_M=0$ and therefore $v_x$ is continuous across the boundary $r\mapsto a(r)$. Next we are going to show that $\Gamma(r,x)<\alpha$ in $\mathcal{O}_{a,b}$. 

Taking a derivative in $x$ of $\Gamma$ we obtain
\begin{equation*}
\begin{aligned}
\Gamma_x(r,x)&=\Big(\Pi_x\big(r+\eta_{\mathrm{max}}\tau^{r,x}_M,a(r+\eta_{\mathrm{max}}\tau^{r,x}_M)\big)-\delta\alpha\Big)\e^{-\rho\tau^{r,x}_M}Y_{\tau^{r,x}_M}\frac{\partial\tau^{r,x}_M}{\partial x}\\
&\quad +\int_0^{\tau^{r,x}_M}\e^{-\rho t}(Y_t)^2\Pi_{xx}\big(r+\eta_{\mathrm{max}}t,Y_t[x-f_M(t)]\big)\ud t\\
&=\int_0^{\tau^{r,x}_M}\e^{-\rho t}(Y_t)^2\Pi_{xx}\big(r+\eta_{\mathrm{max}}t,Y_t[x-f_M(t)]\big)\ud t<0,
\end{aligned}
\end{equation*}
where the second equality holds by definition of the boundary $a(r)$ and the strict inequality is by strict concavity of $\Pi(r,\cdot)$. Then, $\Gamma(r,x)<\alpha$ for all $x>a(r)$, which implies $v_x(r,x)<\alpha$ for $a(r)< x\le b(r)$. 

Now we look at $x>b(r)$. It is clear from the form of $v$ in \eqref{eq:v1} that for each $r\in[0,\infty)$, $v(r,\cdot)$ is twice continuously differentiable for $x>b(r)$. Then, for fixed $r\in[0,\infty)$ the HJB equation reads 
\[
\mu x v_x (r,x)-\rho v(r,x)+\Pi(r,x)=0,\quad\text{for all $x>b(r)$.}
\]
However, using that $\Pi(r,\cdot)$ is twice continuously differentiable (cf.\ Assumption \ref{ass:profit}-(i)) we deduce from the equation above that actually $v(r,\cdot)$ is three times continuously differentiable.

Setting $u\triangleq v_x-\alpha$, we differentiate the equation above with respect to $x$ and obtain
\[
\mu x u_x (r,x)-\delta u(r,x)+\big(\Pi_x(r,x)-\alpha\delta)=0,\quad\text{for all $x>b(r)$.}
\]
Let us start by noticing that $\Pi_x(r,x)-\alpha\delta<0$ for $x\ge b(r)>a(r)$ by strict concavity of $\Pi(r,\cdot)$ and the fact that $\Pi_x(r,a(r))-\alpha\delta=0$. We also know from \eqref{eq:ubound} that $u(r,x)\le 0$ for $x>b(r)$. Then, by the maximum principle we deduce $u(r,x)<0$ for $x>b(r)$. This is directly seen by the representation
\[
u(r,x)=\e^{-\delta (t\wedge\tau^{r,x}_b)}u(r,xY_{t\wedge\tau^{r,x}_b})+\int_0^{t\wedge\tau^{r,x}_b}\e^{-\delta s}\big(\Pi_x(r,xY_s)-\alpha\delta\big)\ud s,
\]
where we recall $\tau^{r,x}_b=\inf\{s\ge 0:xY_s\le b(r)\}$. 

Although $v_x(r,\cdot)$ can be extended continuously to $b(r)$ from above and from below, we are unable to establish the relationship between $v_x(r,b(r)-)$ and $v_x(r,b(r)+)$. In particular, it may occur that $v_x(r,\cdot)$ does not exists at $b(r)$. However, if $v_x(r,b(r))$ exists, then it must be strictly smaller than $\alpha$ because $v_x(r,b(r)-)<\alpha$. Hence, $(r,b(r))\notin\mathcal M$. Otherwise $v_x(r,b(r))$ does not exists and  $(r,b(r))\notin\mathcal M$. So in all cases $(r,b(r))\notin\mathcal M$.
Then we have proven that $\mathcal{M}=\{(r,x):0<x\le a(r)\}$ as claimed, which also implies $\partial\mathcal{M}=\{(r,x):x=a(r)\}$. 

Finally, the set $\{(r,x):x=b(r)\}\subset\mathcal{I}$ is of zero measure and it can be neglected in the variational inequality for $v$. 
\end{proof}
Combining the above proposition with the fact that $w$ satisfies (ii) in Definition \ref{def:strong} we deduce the following corollary (cf.\ the proof of Proposition \ref{prop:optfirmmu} and notice that the definition of $\mathcal M$ in that proof is given by \eqref{eq:MM}, which turns out to agree with the result in Proposition \ref{prop:strong}).
\begin{corollary}\label{cor:strong}
The pair of equilibrium payoffs $(v,w)$ is a strong solution of the HJB system.
\end{corollary}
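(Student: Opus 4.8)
The plan is to certify the three items of Definition~\ref{def:strong} by assembling what has already been proven; essentially no new computation is required. Items (i) and (iii) --- local Lipschitz continuity of $v$ on $(0,\infty)^2$, the a.e.\ bound $v_x\le\alpha$, the identification $\mathcal M=\{(r,x):0<x\le a(r)\}$ with $\partial\mathcal M=\{(r,x):x=a(r)\}$, and the (a.e.) variational inequality for $v$ --- are precisely the content of Proposition~\ref{prop:strong}. It therefore remains to verify item (ii): that $w\in C^1(\overline{\mathcal I})$ and that $w$ solves $(\mathcal L w-\bar\rho w)(r,x)+\mathcal H(r,x;w)+\pi(x)=0$ on $\mathcal I$ together with $w_x=0$ on $\partial\mathcal M$, with $\mathcal M$ as just identified.

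The link between the two halves is the observation that the set $\mathcal I=\mathcal M^c=\{(r,x):x>a(r)\}$ obtained from Proposition~\ref{prop:strong} coincides with the set $\mathcal I$ used throughout Section~\ref{sec:deterministic} (namely \eqref{eq:MM}), which originated in the investor's optimal response of Theorem~\ref{thm:nustar}. Consequently, the HJB analysis already carried out for the firm transfers verbatim: for $\mu<0$ the proof of Proposition~\ref{prop:optfirmmu} establishes $w\in C^1(\overline{\mathcal I})$ and that $w$ satisfies $\mu x\, w_x(r,x)-\bar\rho w(r,x)+\mathcal H(r,x;w)+\pi(x)=0$ on $\{x>a(r)\}$ with $w_x(r,a(r))=0$ for $r>0$ (the case $\mu=0$ being handled through Proposition~\ref{prop:optfirmmu0}). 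One further has to check that the feedback $\eta^*$ entering (iii) is the one compatible with this $w$: the proof of Proposition~\ref{prop:optfirmmu} shows $(w_r-w_x)(r,x)=-\eta_{\mathrm{max}}^{-1}h(r,x)>0$ exactly on $\{a(r)<x<b(r)\}$ and $(w_r-w_x)(r,x)\le 0$ on $\{x\ge b(r)\}$ (with equality along $x=b(r)$), so $\eta^*(r,x)=\eta_{\mathrm{max}}1_{\{w_r>w_x\}}(r,x)$ agrees, outside the Lebesgue-null set $\{x=b(r)\}$, with the equilibrium abatement control used to construct $v$. This is what makes (ii) and (iii) mutually consistent and completes the proof of the corollary.

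The single point that requires genuine care --- and it has already been dealt with inside the proof of Proposition~\ref{prop:strong} --- is reconciling the two a priori distinct descriptions of $\mathcal M$: the geometric one $\{0<x\le a(r)\}$ used in Section~\ref{sec:deterministic} and the analytic one $\{v_x=\alpha\}$ of Definition~\ref{def:strong}. Their equality amounts to showing $v_x<\alpha$ at every point of $\{x>a(r)\}$ where $v_x$ exists, which on $a(r)<x\le b(r)$ follows from the strict concavity of $\Pi(r,\cdot)$ (through the sign of $\Gamma_x$) and on $\{x>b(r)\}$ from a maximum-principle argument applied to the first-order linear ODE solved by $v_x-\alpha$. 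Since this identification is already in hand, the corollary is immediate.
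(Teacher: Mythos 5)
Your proposal is correct and follows essentially the same route as the paper: the paper also obtains the corollary by combining Proposition \ref{prop:strong} (items (i) and (iii) plus the identification $\mathcal M=\{0<x\le a(r)\}$) with the HJB analysis for $w$ in the proof of Proposition \ref{prop:optfirmmu}, noting precisely that the set $\mathcal M$ used there, given by \eqref{eq:MM}, agrees with the analytic set $\{v_x=\alpha\}$. Your additional remark on the consistency of the feedback $\eta^*$ across the null set $\{x=b(r)\}$ is a sensible elaboration but does not change the argument.
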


\section{An algorithm for the construction of an equilibrium in the general case }\label{sec:equilibrium}

In view of Theorem \ref{thm:verif}, finding an equilibrium in our model boils down to finding a solution of \eqref{eqw1}--\eqref{eqvariationalv1}. In full generality we are not able to obtain an analytical solution to the problem.
Therefore, we proceed by developing a numerical method that combines finite differences for both \eqref{eqw1} and \eqref{eqvariationalv1} with a penalization method that reduces the nonlinear problem in \eqref{eqvariationalv1} to an easier semilinear one. 

The penalization method follows a well-trodden path in PDE theory (cf., e.g., \cite[Chapter 9]{Quarteroni2008}) which approximates \eqref{eqvariationalv1} by relaxing the hard constraint $v_x\le \alpha$ into a soft constraint. More precisely, given (small) $\epsilon > 0$, we want to find $v^{\epsilon}$ that satisfies 
\begin{equation}\label{eqvepsilon}
\mathcal G^\rho[v^{\epsilon},\eta^{*}](r,x) = - \Pi(r,x) - \chi^\epsilon[v^\epsilon](r,x),
\end{equation}
with $\mathcal G^\rho[\cdot,\cdot](r,x)$ defined in \eqref{eqG}, and
\begin{equation}\label{eqChi}
    \chi^{\epsilon}[v^\epsilon](r,x) \triangleq \frac1\epsilon(v_{x}^{\epsilon}(r,x) - \alpha)^+.
\end{equation}
Under suitable assumptions, it is often possible to show that as $\epsilon\to 0$ the solution $v^\epsilon$ of the penalized problem converges to a solution of the original problem \eqref{eqvariationalv1}. In our case, the proof appears very complicated due to the (expected) low regularity of the function $\eta^*$ and, more in general, due to the coupling between \eqref{eqvepsilon} and \eqref{eqw1}. However, we observe such convergence numerically.

Since the simultaneous solution of \eqref{eqw1} and \eqref{eqvariationalv1} (or \eqref{eqvepsilon}) requires knowledge of the function $\eta^*$ and of the sets $\mathcal M$ and $\mathcal I$, we need to argue in a sort of iterative way (with the number of iterations denoted by $\ell$).
We initialize our algorithm by taking $\eta^*\equiv 0$ in \eqref{eqvariationalv1}, and $\ell = 0$.
It is shown below that the resulting variational inequality admits an explicit solution, which we denote by $\hat v(r,x)$. The sets $\hat{\mathcal{M}}\triangleq\{\hat v_x=\alpha\}$ and $\hat {\mathcal I}\triangleq(\hat{\mathcal{M}})^c$ can be calculated explicitly (cf.\ \eqref{eqv0}) with $\hat {\mathcal I}=\{(r,x):x>\hat a(r)\}$ and the function $r\mapsto \hat a(r)$ is found in \eqref{eqva0}. In this iteration, the boundary of $\hat{\mathcal M}$ is given by $\partial\hat{\mathcal M}=\{(r,\hat a(r)),r\ge 0\}$,
and it can be used to solve the zero-order iteration of the problem for the firm.

The next step is to calculate the solution $w^{(\ell)}=w^{(0)}$ of \eqref{eqw1} with $\hat {\mathcal M}$ and $\hat {\mathcal I}$ instead of ${\mathcal M}$ and $\mathcal I$. This is done by finite-difference scheme as detailed in \eqref{eqdiscretizations} of Section \ref{sec:mainresults}. Once we have obtained the function $w^{(0)}$ we can define the function $\eta^{*(\ell)}=\eta^{*(0)}$ as a {\em proxy} for the firm's optimal control: 
\begin{equation}\label{eqetavarphi}
    \eta^{*(0)}(r,x)= \eta_{\mathrm{max}} 1_{\{\partial_ r w^{(0)}>\partial_x w^{(0)}\}}(r,x).
\end{equation}
That concludes the initialization of the algorithm. 

Next, for the first iteration of our scheme, we set $\ell=1$ and we approximate \eqref{eqvariationalv1} by \eqref{eqvepsilon}. 
Then, we want to find $v^{\epsilon (\ell)}=v^{\epsilon (1)}$ that satisfies \eqref{eqvepsilon} in the form
\begin{equation}\label{eqvepsilonphi}
\mathcal G^\rho[v^{\epsilon(\ell)},\eta^{*(\ell-1)}](r,x) = - \Pi(r,x) - \chi^\epsilon[v^{\epsilon(\ell)}](r,x),
\end{equation}
for $(r,x)\in[0,\infty)\times(0,\infty)$ with boundary conditions $v^{\epsilon (\ell)}(0,x) = 0$ and $v^{\epsilon(\ell)}(r,0) = 0$. The boundary conditions are motivated by the form of the investor's equilibrium payoff in Section \ref{sec:deterministic}.
The solution of \eqref{eqvepsilonphi} is obtained again by finite differences as described in Section \ref{sec:mainresults}.
Once we have obtained a solution $v^{\epsilon(\ell)}$ of \eqref{eqvepsilonphi} we can determine numerically the sets 
$\mathcal{M}^{(\ell)}_\epsilon\triangleq \{v^{\epsilon(\ell)}_x=\alpha\}$ and $\mathcal{I}_\epsilon^{(\ell)}\triangleq ([0,\infty)\times\mathbb{R}_+)\setminus\mathcal{M}^{(\ell)}_\epsilon$. It turns out that 
\begin{equation*}
\mathcal{I}_\epsilon^{(\ell)}=\big\{(r,x): x> a^{\epsilon(\ell)}(r)\big\},
\end{equation*}
where $r\mapsto a^{\epsilon(\ell)}(r)$ is a continuous function on $[0,\infty)$. 
In order to conclude the first iteration we calculate a solution $w^{(\ell)}=w^{(1)}$ of \eqref{eqw1} with $\mathcal{M}_\epsilon^{(\ell)}$, $\mathcal{I}_\epsilon^{(\ell)}$ instead of $\mathcal M$, $\mathcal I$. That also yields a new proxy for the firm's optimal control:
\begin{equation}\label{eq:etaphi}
    \eta^{*(\ell)}(r,x)= \eta_{\mathrm{max}} 1_{\{\partial_ r w^{(\ell)}>\partial_x w^{(\ell)}\}}(r,x).
\end{equation}
We must notice that also $w^{(\ell)}$ and $\eta^{*(\ell)}$ depend on $\epsilon$ via the sets $\mathcal M^{(\ell)}_\epsilon$ and $\mathcal I^{(\ell)}_\epsilon$. However, we suppress such dependence in our notation for ease of exposition.

The procedure continues as follows: Given $w^{(\ell)}$, $\eta^{*(\ell)}$, $v^{\epsilon(\ell)}$, $\mathcal M^{(\ell)}_\epsilon$, $\mathcal I^{(\ell)}_\epsilon$ we find $v^{\epsilon(\ell+1)}$ by solving \eqref{eqvepsilonphi} and then we determine the sets $\mathcal M^{(\ell+1)}_\epsilon$, $\mathcal I^{(\ell+1)}_\epsilon$ with boundary $r\mapsto a^{\epsilon(\ell+1)}(r)$; subsequently we find $w^{(\ell+1)}$ by solving \eqref{eqw1} with $\mathcal M^{(\ell+1)}_\epsilon$, $\mathcal I^{(\ell+1)}_\epsilon$ instead of $\mathcal M$, $\mathcal I$ and we obtain $\eta^{*(\ell+1)}$ as in \eqref{eq:etaphi}.
This iteration continues until a stopping criteria prescribed in Algorithm \ref{algo} (step 9, Section \ref{sec:mainresults}) is reached.

\begin{remark}
The regularization parameter $\epsilon$ plays a crucial role in finding an approximation for the solution of \eqref{eqvariationalv1}, enabling us to make the problem more amenable to numerical techniques.
As $\ell\to \infty$ we observe numerically that $w^{(\ell)}$, $\eta^{*(\ell)}$, $v^{\epsilon(\ell)}$, $a^{\epsilon(\ell)}$ converge to limits that we denote $w^{\epsilon}$, $\eta^{*\epsilon}$, $v^{\epsilon}$, $a^{\epsilon}$. Then, letting $\epsilon$ go to zero, we also observe numerically that the functions $w^{\epsilon}$, $\eta^{*\epsilon}$, $v^{\epsilon}$, $a^{\epsilon}$ have a well-defined limit, which we denote $w$, $\eta^{*}$, $v$, $a$. In practice, in our numerical implementation, we fix a small $\epsilon$ and take the resulting solutions of the iterative procedure described above as our proxy for the true solution of the system \eqref{eqw1}--\eqref{eqvariationalv1}.
\end{remark}

\subsection{The investor's problem in isolation\label{sec:invest_isolation}}

From now on we work under the assumption: 
\begin{assumption}\label{ass:profit2}
The profit functions $\Pi$ and $\pi$ are given by \eqref{fucfu}.
\end{assumption}

In order to initialize our algorithm we need to start by considering an investor who acts in isolation, i.e., with no emission reduction ever performed by the firm. The investor's expected payoff then reads
\[
\mathcal{J}^I_{r,x}(\nu)\triangleq \mathbb{E}_{r,x} \left[ \int_0^\infty \e^{- \rho t} \Pi(r,X_t^{\nu,0})\ud t- \alpha \int_0^\infty \e^{- \rho t} \ud \nu_t \right],
\]
and the corresponding value function reads
\[
\hat v(r,x)=\sup_{\nu\in\mathcal A^I}\mathcal{J}^I_{r,x}(\nu).
\]
Setting $\hat{\mathcal{M}}=\{\hat{v}_x=\alpha\}$, the analytical expression of $\hat v$ can be determined by the direct solution of 
\begin{equation}\label{eqv1}
\left\{
\begin{array}{ll}
(\mathcal{L} \hat v - \rho \hat v) (r,x) = - \Pi(r,x) , & (r,x)\in \hat{\mathcal{I}}=(\hat{\mathcal M})^c,\\
\hat v_x(r,x) = \alpha, & (r,x)\in\partial \hat{\mathcal{M}}, \\
\hat v_{xx}(r,x) = 0, & (r,x)\in\partial \hat{\mathcal{M}},
\end{array}
\right.
\end{equation}
with $|\hat v(r,x)|\le c(r)(1+x)$ for some $c(r)>0$ and using the ansatz $\partial \hat{\mathcal M}=\{(r,\hat a(r)),r\ge 0\}$, for some $\hat a(r)$ to be determined by imposing the third condition in the system above.  

Under Assumption \ref{ass:profit2},
lengthy by straightforward calculations yield:
\begin{equation}\label{eqv0}
	\hat v(r,x) = B(r)x^{-m} + \lambda x^\beta r^\gamma,
\end{equation}
with constants
\begin{equation}\label{eq:parameters}
\begin{aligned}
\lambda &= \frac{1}{\sigma^2/2(m+\beta)(n-\beta)}, \\
	m &= \frac{\mu-\sigma^2/2 + \sqrt{(\mu-\sigma^2/2)^2 + 2\sigma^2 \rho} }{\sigma^2}, \\
	n &= \frac{-(\mu-\sigma^2/2) + \sqrt{(\mu-\sigma^2/2)^2 + 2\sigma^2 \rho} }{\sigma^2},
\end{aligned}
\end{equation}
and where 
\[
B(r) = \frac{\kappa \lambda(1-\beta)\beta }{m(m+1)} r^{\frac{\gamma(m+1)}{1-\beta}}
\]
with 
\[
\kappa \triangleq \left(\frac{\lambda \beta}{\alpha} \left( \frac{m-\beta+2}{m+1} \right) \right)^{\frac{\beta+m}{1-\beta}}.
\]
Additionally,
\begin{equation}\label{eqva0}
	\hat a(r) = \kappa^{\frac{1}{\beta+m}} r^{\frac{\gamma}{1-\beta}}.
\end{equation}
The optimal investment in this setting is given by
\begin{equation}\label{eq:hatnu}
\hat\nu_t=\int_0^t X^0_s\ud \hat \lambda_s,
\end{equation}
where $\hat \lambda_t=\sup_{0\le s\le t}\Big(\hat a(r)/X^0_s-x\Big)^+$, $t\ge 0$.

\begin{remark}\label{rem:euil2}
It is worth noticing that if $r=0$, then the investor never invests in this setup. That provides us with a boundary condition for the firm's value function. Indeed, for $\nu\equiv 0$ we have $w(0,x) =x/(\bar\rho - \mu)$,
whenever $\bar\rho - \mu > 0$.
\end{remark}

\begin{remark}\label{rem:equilinfty}
When $x\uparrow \infty$ the firm is not going to mitigate its emissions because $\lim_{x\to\infty}\mathcal J^F_{r,x}(\eta,\nu)=\infty$ for any pair $(\eta,\nu)\in\mathcal A_F\times\mathcal A_I$ and all $r\in [0,\infty)$. 
For $\eta\equiv 0$ the investor is again faced with a problem with value $\hat v$ and optimal boundary $\hat a$. Based on this heuristics we postulate that for large values of $x$ the firm's payoff should be given by $\mathcal J^F_{r,x}(0,\hat \nu)$, where $\hat \nu$ is given in \eqref{eq:hatnu}, whereas the investor's payoff is again $\hat v$. Analogous calculations to the ones above yield
\begin{equation*}
\mathcal J^F_{r,x}(0,\hat \nu)=C(r)x^{-m}+\lambda x
\end{equation*}
with $m$ and $\lambda$ as in \eqref{eq:parameters} with $\beta=1$, and
\[
C(r)\triangleq \frac{\lambda}{m}\hat a(r)^{m+1},
\]
where $\hat a$ is given in \eqref{eqva0}.
\end{remark}

\subsection{A numerical scheme \label{sec:mainresults}}

Our approach to solve the problem described in Sections \ref{sec:model}--\ref{sec:equilibrium} will rely on the algorithm explained below.
We will employ a finite-difference scheme to solve both \eqref{eqw1} and \eqref{eqvepsilon}.
More precisely, we adopt the first-order backward difference for first-order derivatives with respect to $r$, followed by fourth-order central discretizations for the first and second-order derivatives with respect to $x$ (cf., e.g., \cite[Chapter 2]{LeVeque20007}).

Given a sufficiently smooth function $\varphi$, let $\varphi_{i,j}=\varphi (r_i,x_j)$ at points on uniform grid partitions $\{ r_0,\ldots,r_M \}$ and $\{ x_0,\ldots,x_N \}$ of $[0,\infty)$ with $x_0=r_0=0$ and large but fixed $r_N$ and $x_N$. We approximate first and second order derivatives as
\begin{equation}\label{eqdiscretizations}
    \begin{aligned}
&\varphi_r(r_i,x_j)  \approx \frac{{\varphi_{i,j} - \varphi_{i-1,j}}}{\Delta_r}, \\
&\varphi_x(r_i,x_j)  \approx \frac{\varphi_{i,j-2} - 8 \varphi_{i,j-1} + 8 \varphi_{i,j+1}- \varphi_{i,j+2}}{12 \Delta_x},\\
& \varphi_{xx}(r_i,x_j) \approx \frac{-\varphi_{i,j-2} + 16\varphi_{i,j-1} -30 \varphi_{i,j}+16\varphi_{i,j+1}-\varphi_{i,j+2}}{12 \Delta_x^2},
    \end{aligned}
\end{equation}
for $i \in \{1,\dots, M\}$ and $j \in \{2,\dots, N-2\}$, with $\Delta_r = r_i - r_{i-1}$ and $\Delta_x = x_j - x_{j-1}$, for any pair $(i,j)$. 
We assume the following initial conditions: 
\begin{equation*}
\left\{
\begin{array}{l}
v^\epsilon(r_0,x_j) = 0, \\
v^\epsilon(r_i,x_0) = 0,\\
w(r_0,x_j)=(\bar\rho-\mu)^{-1} x_j, \\
w(r_i,x_0) = 0,\\
\end{array}
\right.
\end{equation*}
for every $i \in \{ 1,\dots, M\}$ and $j \in \{ 1,\dots, N \}$. The conditions at $r_0=0$ are in keeping with Remark \ref{rem:euil2}. For the condition at $x_0=0$ we should notice that the geometric Brownian motion cannot start from zero and we intuitively assign zero value to a firm with zero profitability. However, this condition is somewhat superfluous because, already starting from the first iteration of our algorithm, the controlled dynamics for $X^{\nu,\eta}$ is not allowed to visit $x=0$.

Although in principle our problem is set on $[0,\infty)^2$, in practice we must select (large) maximum elements $r_N$ and $x_N$ of our state space in order to compute the solution. However, this requires us to specify at least one more boundary condition for the PDEs at either points $(r_N,x_j)$ or $(r_i,x_N)$ for $(i,j)$. We choose to specify the values of $w(r_i,x_N)$ and $v^\epsilon(r_i,x_N)$, for which we have natural candidates, thanks to Remark \ref{rem:equilinfty}. Indeed, we assume
\begin{equation}\label{eqv2}
\left\{
\begin{array}{l}
v^\epsilon(r_i,x_N) = \hat v(r_i,x_N), \\
w(r_i,x_N)= C(r_i)x_N^{-m}+\lambda x_N, 
\end{array}
\right.
\end{equation}
for every $i \in \{ 1,\dots, M\}$.

Next, Algorithm \ref{algo} describes our strategy to derive optimal numerical solutions for $w(r,x)$, $v^\epsilon(r,x)$, $\eta^{\ast}(r,x)$, and $a^\epsilon(r)$. All PDEs in the algorithm are solved using finite-difference scheme with the approximation of derivatives as described above.

\begin{algorithm}\label{algo}
Given $\mu \in \mathbb R$, $\sigma \in \mathbb R_+$, $\rho,\bar\rho,\alpha,\epsilon > 0$, $\beta, \gamma \in (0,1)$, a threshold $\eta_{\mathrm{max}}>0$ and desired precision levels $\varpi$ and $\varpi'$, consider the following steps:
\begin{enumerate}
\item Compute $\hat v(r,x)$ and $\hat a(r)$ via \eqref{eqv0} and \eqref{eqva0}, respectively. Store $a^{\epsilon(0)}(r) \leftarrow \hat a(r)$, $v^{\epsilon(0)}(r,x) \leftarrow \hat v(r,x)$ and compute $\chi^\epsilon[v^{\epsilon(0)}](r,x)$ as in \eqref{eqChi}, for all $(r,x)$.
 
\item Solve numerically \eqref{eqw1} with $\hat{\mathcal{M}}$ and $\hat{\mathcal{I}}$ instead of $\mathcal{M}$ and $\mathcal I$ and with $\eta^\ast \equiv 0$. Denote the solution  $w^{(0)}(r,x)$ and then compute $\eta^{{\ast}(0)}(r,x)$ via \eqref{eqetavarphi}, for all $(r,x)$. 
	
\item Increase $\ell \leftarrow \ell +1$. \label{volta}
 
 \item From $\eta^{\ast(\ell -1)}$ solve \eqref{eqvepsilon} and find $v^{\epsilon(\ell)}(r,x)$, for all $(r,x)$ as follows: for $k\ge 1$, solve 
 \[
\mathcal G^\rho[v^{\epsilon(\ell)}_k, \eta^{*(\ell-1)}](r,x)=-\Pi(r,x)-\chi^\epsilon[v^{\epsilon(\ell)}_{k-1}](r,x),
\]
with $v^{\epsilon(\ell)}_{0}=v^{\epsilon(\ell-1)}$. Iterate until $\|v^{\epsilon(\ell)}_{k}-v^{\epsilon(\ell)}_{k-1}\|\le \varpi$.

\item Store $v^{\epsilon(\ell)}(r,x)\leftarrow v^{\epsilon(\ell)}_k(r,x)$.

 \item Compute $a^{\epsilon(\ell)}(r)$ as 
    \begin{equation}\label{eqaepsilon}
    a^{\epsilon(\ell)} (r) =\partial \left\lbrace  (r,x): v^{\epsilon(\ell)}_x (r,x) = \alpha \right\rbrace.
    \end{equation}

    \item From $a^{\epsilon(\ell)}(r)$, solve \eqref{eqw1} and find $w^{(\ell)}(r,x)$, for all $(r,x)$, by considering the following new iteration: for  fixed $\ell$ and each $k\ge 1$, solve
    \begin{equation*}
\left\{
\begin{array}{ll}
(\mathcal{L} w^{(\ell)}_k - \bar\rho w^{(\ell)}_k) (r,x)+ \mathcal{P}(w^{(\ell)}_k,\eta^{(\ell)}_{k-1})(r,x) = - \pi(x), &(r,x):x> a^{\epsilon(\ell)}(r),\\
\partial_x w^{(\ell)}_k (r,x)= 0, &(r,x): x=a^{\epsilon(\ell)}(r), \\
| w^{(\ell)}_k(r,x) | \leq c(1+x), & (r,x) \in [0,\infty)\times\mathbb{R}_+,
\end{array}
\right.
\end{equation*}
where $\eta^{(\ell)}_0=0$, $\mathcal{P}(\varphi,\eta) \triangleq   \big(\varphi_r - \varphi_x\big)\eta$, and $\eta^{(\ell)}_k(r,x)= \eta_{\mathrm{max}} 1_{\{\partial_ r w^{(\ell)}_k>\partial_x w^{(\ell)}_k\}}(r,x)$, until $\lVert w^{(\ell)}_k(r,x) - w^{(\ell)}_{k-1}(r,x) \lVert \leq \varpi$.
    
\item Store $w^{(\ell)}(r,x) \leftarrow w^{(\ell)}_k(r,x)$ and compute $\eta^{{\ast}(\ell)}_k(r,x)$ via \eqref{eq:etaphi}. \label{volta2}

\item Check: If
 \begin{equation*}
   \max(\lVert w^{(\ell)}(r,x) - w^{(\ell-1)}(r,x) \lVert, \\ \lVert v^{\epsilon(\ell)}(r,x) - v^{\epsilon(\ell-1)}(r,x) \lVert) > \varpi', 
 \end{equation*}
 go back to step \ref{volta}.
 
	Otherwise, proceed to the next step.
 
	\item Return $w(r,x) \equiv w^{(\ell)}(r,x)$, $v^\epsilon(r,x) \equiv v^{\epsilon(\ell)}(r,x)$, $\eta^\ast(r,x) \equiv \eta^{\ast (\ell)}(r,x)$ and $a(r) \equiv a^{\epsilon(\ell)}(r)$, for all $(r,x)$.
\end{enumerate}
\end{algorithm}

In summary, Algorithm \ref{algo} has been designed to approximate optimized solutions for both equilibrium payoffs $w(r,x)$ and $v^\epsilon(r,x)$, along with the firm's optimal strategy $\eta^{\ast}$ and the boundary function $a^\epsilon(r)$ that triggers investor's actions.
The algorithm initiates with the assumption that the firm takes no initial action to reduce pollution ($\eta^\ast = 0$).
It then proceeds by calculating the investor's response given by $\hat a(r)$ and $\hat v(r,x)$ (step 1), together with $w^{(0)}(r,x)$ and $\eta^{\ast(0)}$ via step 2.
Then, for each $\ell \geq 1$, step 4 computes $v^{\epsilon(\ell)}(r,x)$ via a sub loop with iterations $k \geq 1$; step 6 constructs $a^{\epsilon(\ell)}(r)$ and step 7 obtains $w^{(\ell)}(r,x)$ by implementing a sub loop with iterations $k \geq 0$.
It's important to note that the $k$-dependent sub-loops contributing to the construction of both $w(r,x)$ and $v^\epsilon(r,x)$ are run independently. 
The iterative $\ell$-dependent loop continues until the predefined stopping criterion at step 9 is achieved, refining our variables of interest, and ultimately converging towards the optimal equilibrium solutions of our problem.

\begin{remark}
Algorithm \ref{algo} draws inspiration from Howard's algorithm (or policy iteration), widely used in dynamic programming and optimization.
Seminal works on this methodology are attributed to Bellman and can be found in \cite{Bellman1955,Bellman1957}.
Howard extended Bellman’s approach to stationary infinite-horizon Markovian dynamic programming problems in \cite{Howard1960}.
Howard's algorithm is celebrated for its effectiveness in solving sequential decision-making problems and has been widely applied in diverse fields such as economics, engineering, and finance.
Our approach incorporates the core principles of Howard's algorithm while tailoring them to the specific requirements and features of our problem.
\end{remark}

\section{Numerical Results}\label{sec:numerics}

In this section, we perform a detailed numerical analysis of the equilibria discussed in the previous sections {\bf under Assumption \ref{ass:profit2}}. We first look at the form of the optimal strategies and of the equilibrium payoffs in the deterministic setting from Section \ref{sec:deterministic}, i.e., $\sigma=0$, with decreasing revenues $\mu\le 0$. 
Then we will implement the algorithm described in Section \ref{sec:equilibrium} in order 
to derive equilibrium payoffs and optimal strategies in the full stochastic problem.
Numerical results are obtained with MATLAB (R2022b). Section \ref{simul:det} addresses the deterministic problem and Section \ref{simul:sthoc} the stochastic one.

Our choice of parameters in the numerical examples below aims primarily to illustrate interesting behaviors arising in our model. Full empirical calibration of our model is out of scope of this theoretical paper, however, in Appendix \ref{appendix.sec} we provide practical
guidance on how each parameter could be mapped to observable quantities. In all the numerical examples, the cost of investment in \eqref{eqvaluev} is set to $\alpha=1$ and the firm's maximum investment rate is set to $\eta_{\mathrm{max}}=1$, unless otherwise stated. For the solution of  \eqref{eqvepsilonphi} we set the regularization parameter to $\epsilon=10^{-4}$.

Unless otherwise specified, in the fully stochastic case the values of $\mu$ and $\sigma$ are borrowed from \cite{Reddy2016} and they are equal to $0.0741$ and $0.3703$, respectively. In \cite{Reddy2016} the authors study the profit dynamics of an Australian company in the Metals and Mining sector. 
Finally, we set the precision levels required for the numerical algorithm from Section \ref{sec:equilibrium} to $\varpi=\varpi'=10^{-3}$.

\subsection{Deterministic setting\label{simul:det}}

In Section \ref{sec:detmunegative}, we have presented the solution for the deterministic setting with $\mu<0$. This solution includes an explicit formula for the boundary $a(r)$ given by \eqref{eq:nua2}, and $b(r)$ specified in Proposition \ref{Proposition_funcb}.
Note that the construction of $b(r)$ depends on the solution of two coupled non-linear equations: $h(r,b(r))=0$ and \eqref{eq:tauM2}, where $h(\cdot,\cdot)$ is described by \eqref{eq:h}.

In order to construct numerical solutions of $b(r)$, we have implemented the implicit Euler method with Newton-Raphson method as follows.
Consider a finite partition $\{r_0, \dots, r_M \}$ of $r \in [0,\infty)$. From \eqref{eq:bdot} we can see that
\begin{equation}\label{eq:b_numerical}
b_{i+1}= b_{i} + \Delta_r\cdot g(r_{i+1},b_{i+1}),   
\end{equation}
for $i \in \{0,1, \dots, M -1\}$, where $\Delta_r \triangleq r_{i+1}-r_{i}$ and the function $g(\cdot)$ is the right-hand side of \eqref{eq:bdot} with explicit expression obtained using the formulae for $\partial_r h$ and $\partial_x h$ from the proof of Lemma \ref{lem:h}. Notice that derivatives of $\tau_M$ appearing in $\partial_r h$ and $\partial_x h$ are explicit thanks to \eqref{eq:dtMdx} and \eqref{eq:tauMdr}, whereas $\tau_M$ is calculated from \eqref{eq:tauM}.

By fixing an $i \in \{0,1, \dots, M-1 \}$, let $\tilde{b}=b_{i+1}$ in \eqref{eq:b_numerical}. We want to find $\tilde{b}$ that solves $\tilde{b} - b_{i} - \Delta_r\cdot g (r_{i+1}, \tilde{b}) = 0$, which is equivalent to finding the zero of a function
\begin{equation}\label{eq.sbtilde}
s(\tilde{b}) \triangleq \tilde{b} - b_{i} - \Delta_r \cdot g (r_{i+1}, \tilde{b}).  \end{equation}
To solve \eqref{eq.sbtilde}, it turns out that the Newton's iteration is given by (see \cite[Chapter 8]{GriffithsandHigham})
\begin{equation}\label{eq:iteNewton}
    \tilde{b}_{k+1} = \tilde{b}_{k} - \frac{s(\tilde{b}_k)}{\dot{s}(\tilde{b}_k)}, \quad \text{with} \quad \dot{s}(\tilde{b}_k) = 1 - \Delta_r\cdot \frac{\partial}{\partial b} g (r_{i+1}, \tilde{b}_k).
\end{equation}
We iterate \eqref{eq:iteNewton} until $\lVert \tilde{b}_{k+1} - \tilde{b}_{k} \lVert < \tilde{\varpi}$, where $\tilde{\varpi}=10^{-3}$ is a prescribed precision level. Hence, $b_{i+1}$ is set to be equal to the resulting  $\tilde{b}$.
This procedure is repeated for all $i \in \{0,1, \dots, M-1 \}$, obtaining an approximation of $b(r)$ on $\{r_0, \dots, r_M \}$.
In each $k$-subloop described in \eqref{eq:iteNewton}, the values of $\tau_M(r_{i+1}, \tilde{b}_k)$ for every fixed $i\in\{0,1, \dots, M-1 \}$ are obtained by solving \eqref{eq:tauM2} with the MATLAB function FZERO.
\begin{figure}[htbp]
\centering  
\begin{subfigure}{.48\textwidth}
  \centering
  \includegraphics[width=\linewidth]{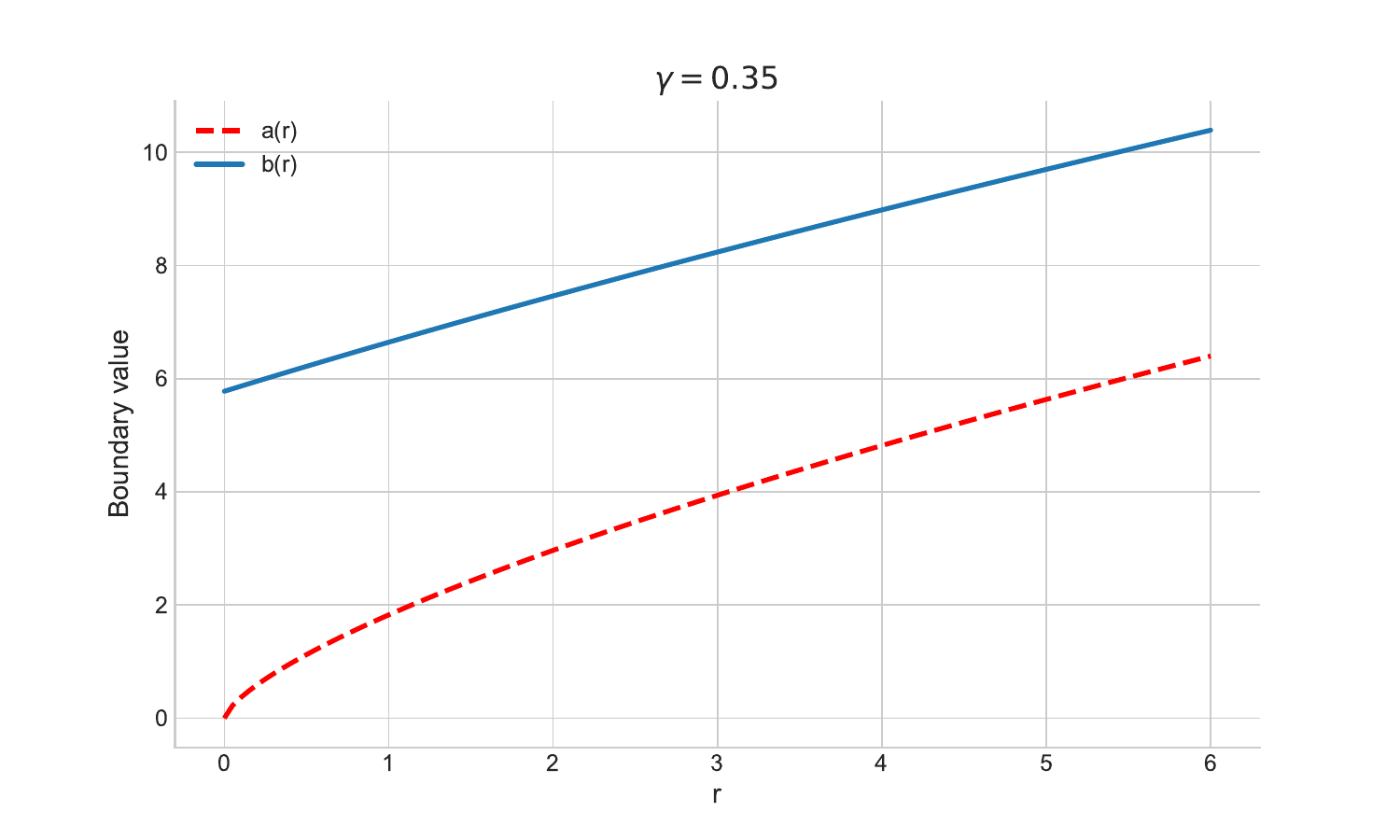}
  \caption{}
  \label{figd001}
\end{subfigure}\hfill
\begin{subfigure}{.48\textwidth}
  \centering
  \includegraphics[width=\linewidth]{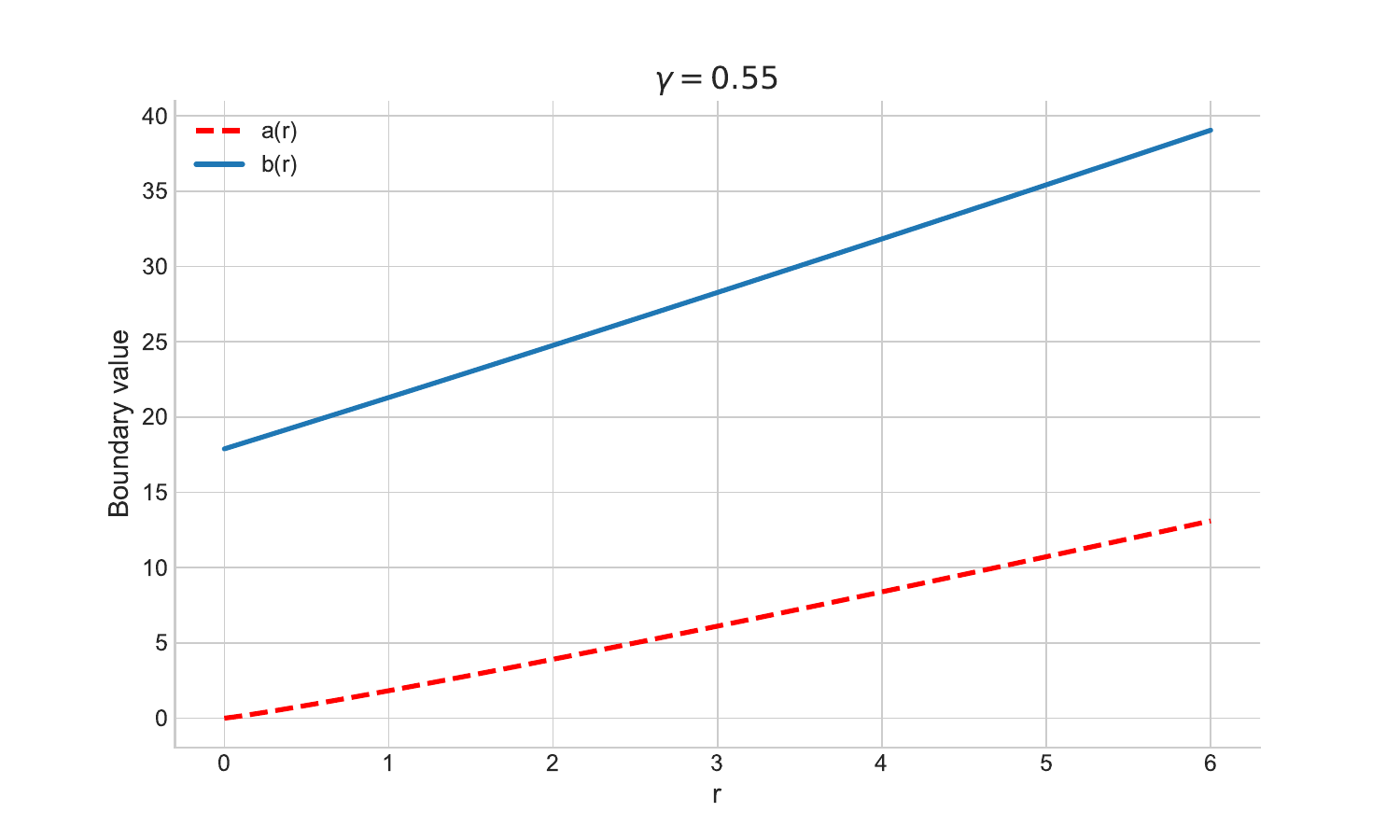}
  \caption{}
  \label{figd002}
\end{subfigure}
\caption{Comparison between functions $a(r)$ and $b(r)$ for two different values of the parameter $\gamma$ (which determines the sensitivity of investor to the environmental impact of the company).  The parameters are: $\rho = \bar \rho = 0.3$, $\mu=-0.0741$, $\beta=0.5$, $\gamma=0.35$ (Figure \ref{figd001}) and $\gamma=0.55$ (Figure \ref{figd002}). The function $a(r)$ is given by \eqref{eq:nua2}, while the boundary $b(r)$ is described in Proposition \ref{Proposition_funcb}. For higher value of $\gamma$, the investors are more concerned about the environmental impact of the company, and both abatement and investment occur earlier (for higher values of the environmental performance.)}\label{figd0}
\end{figure}

\begin{figure}[htbp]
\centering
\begin{subfigure}{0.48\textwidth}
  \centering
  \includegraphics[width=\linewidth]{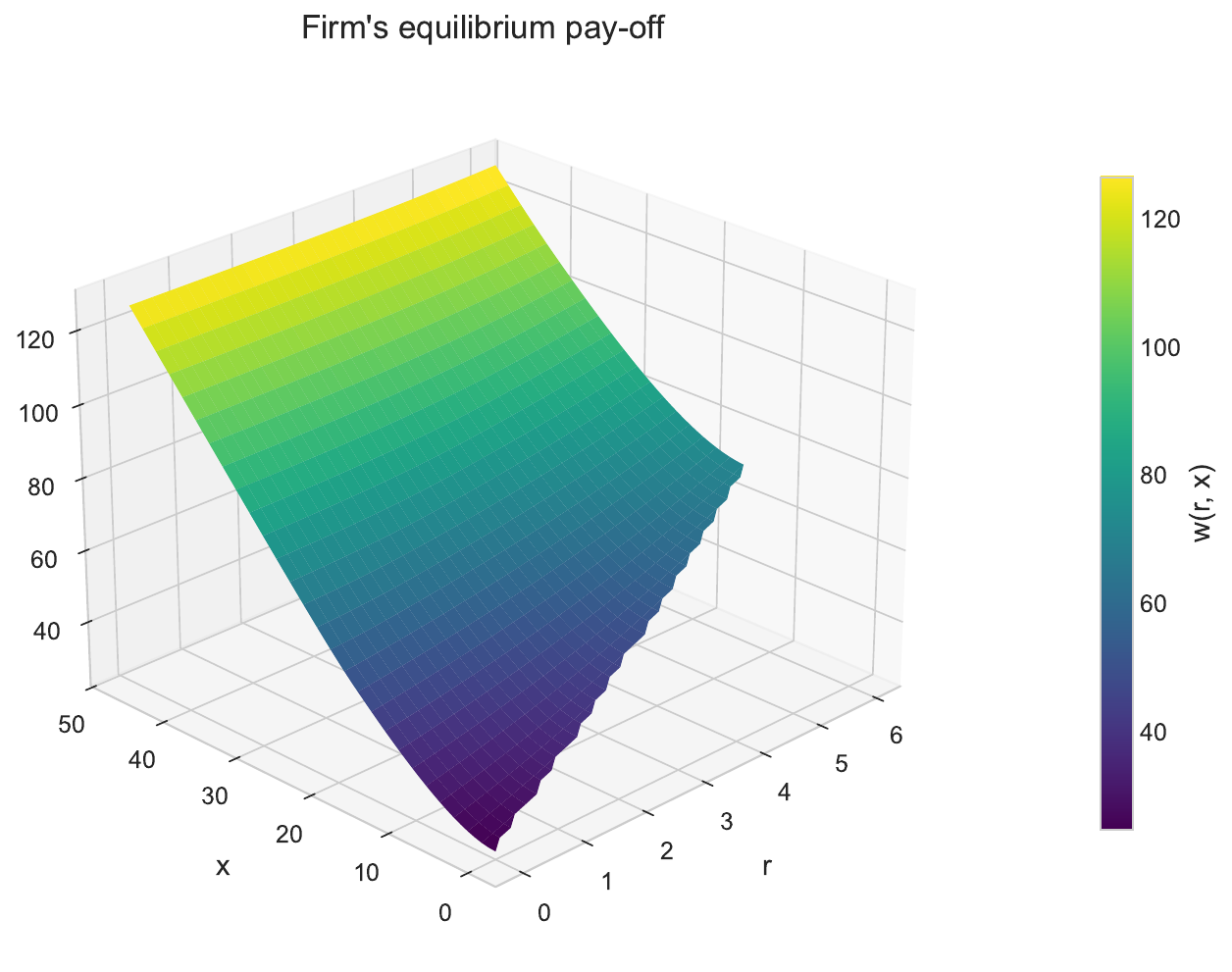}
  \caption{}
  \label{figd01}
\end{subfigure}\hfill
\begin{subfigure}{0.48\textwidth}
  \centering
  \includegraphics[width=\linewidth]{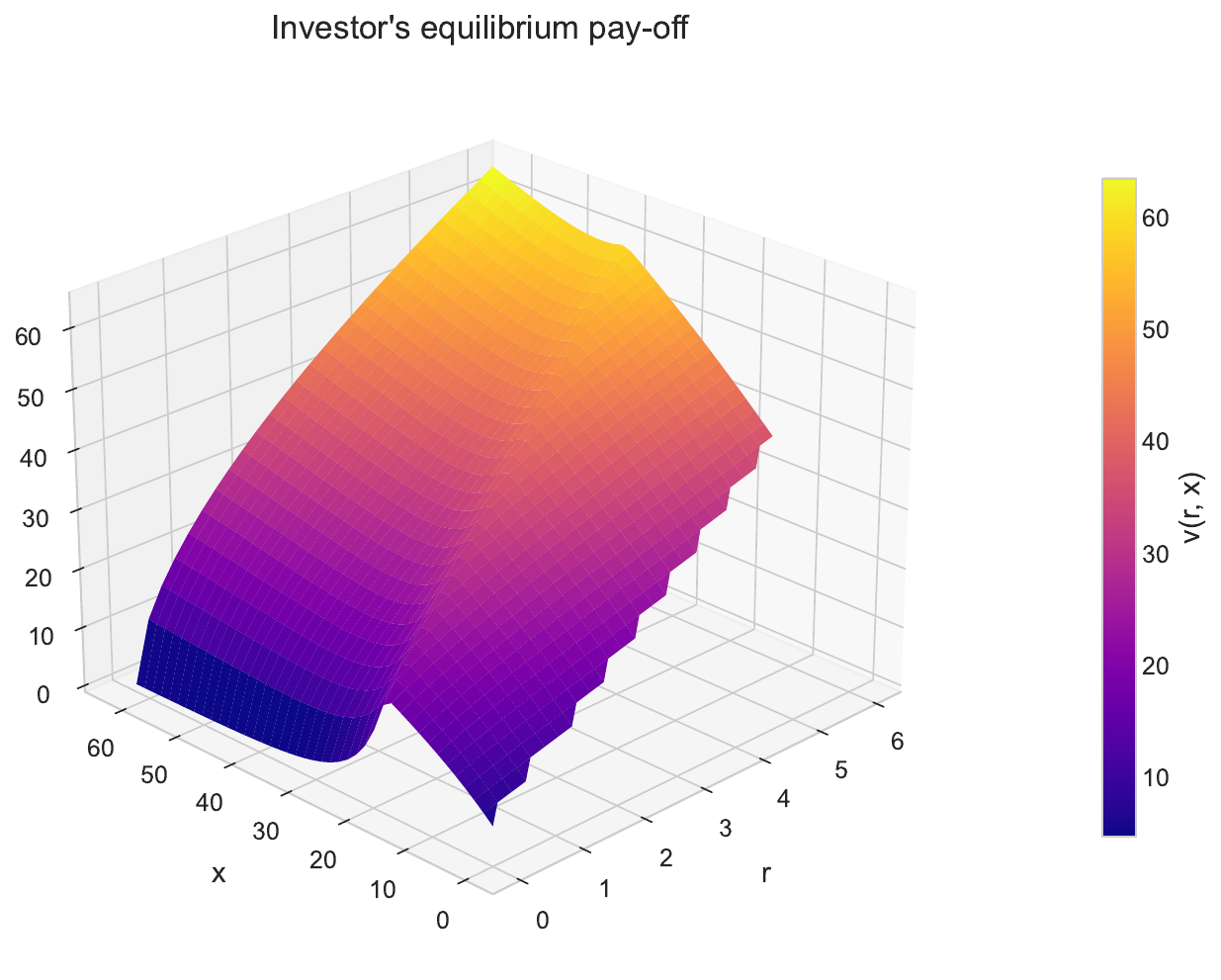}
  \caption{}
  \label{figd02}
\end{subfigure}
\caption{Firm and investor equilibrium expected payoff functions for the deterministic setting discussed in Section \ref{sec:detmunegative}, with $\mu=-0.0741$, $\rho=\bar\rho = 0.3$, $\beta=0.5$ and $\gamma=0.55$. Figure \ref{figd01} shows $w(r,x)$ and Figure \ref{figd02} shows $v(r,x)$, for $x > a(r)$.}
\label{figd1}
\end{figure}

Figure \ref{figd0} presents numerical simulations for the functions $a(r)$ and $b(r)$. 
The equilibrium payoffs for both firm and investor are illustrated in Figure \ref{figd1}. Figure \ref{ab.fig} shows sensitivity of the boundary $b(r)$ to $\eta_{\max}\to \infty$. The investor's boundary $a(r)$ does not change as it does not depend on $\eta_{\max}$, but the firm's boundary $b(r)$ tends to $+\infty$.

Notably, solving the HJB equations is not necessary in this deterministic setting because we have derived explicit solutions to the proposed optimal control problem.
Consequently, after obtaining the functions $a(r)$ and $b(r)$ using the parameters mentioned above, we can directly compute the value functions for both players, which are given by: $w(r,x)$ as in \eqref{eq:weq1} and $v(r,x)$ as in \eqref{eq:v1}, \eqref{eq:v2} and \eqref{eq:dstar}, 
with equilibrium pair $(\eta^*,\nu^*)$ as in \eqref{eq.optimalpair_det}. Notice that the investor's value function is not monotonic in $x$ because of the 'game' features of our problem. 

\begin{figure}
\centerline{\includegraphics[width=0.9\textwidth]{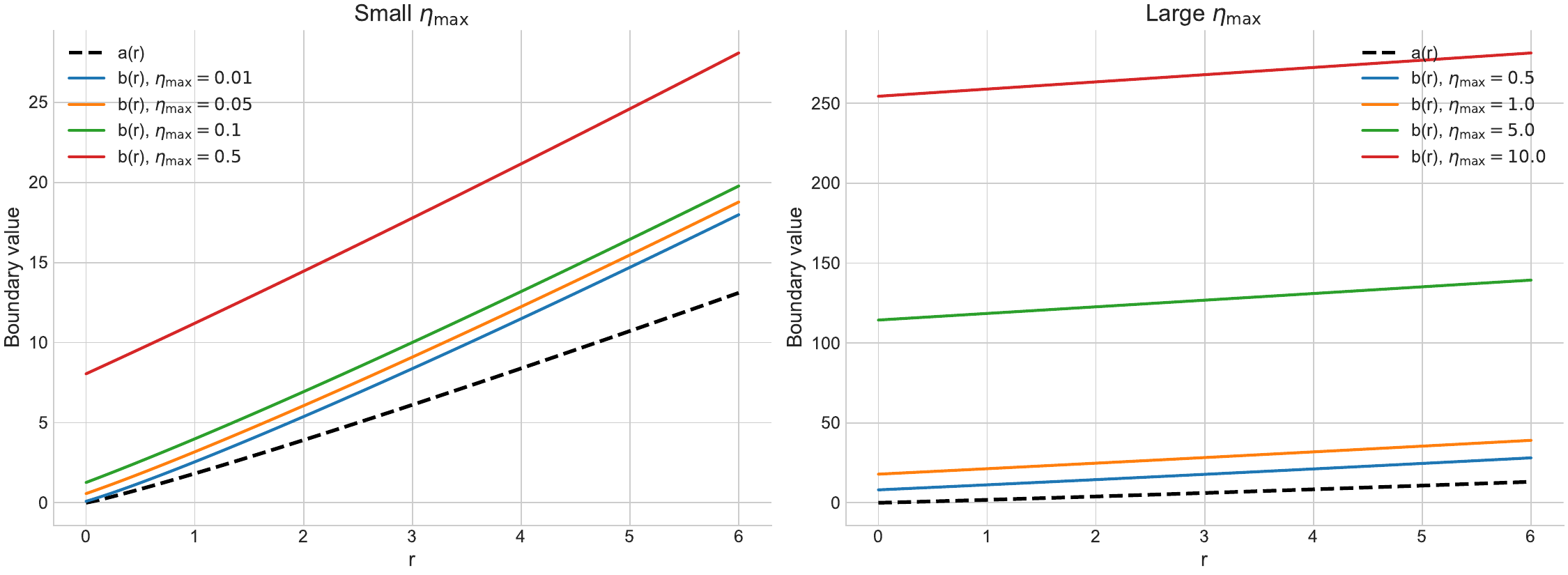}}
\caption{Investor boundary $a(r)$ and firm boundary $b(r)$ for different values of $\eta_{\max}$. The other parameters are $\bar\rho = 0.3, \mu = -0.07, \beta = 0.5, \gamma = 0.55, \alpha = 1$.}
\label{ab.fig}
\end{figure}

\subsection{Stochastic setting\label{simul:sthoc}}

An important step to obtain numerically $w$ and $v^\epsilon$ is the construction of the boundary $a^\epsilon$. It turns out that the shape of $a^\epsilon$ is qualitatively similar to the one of the initial condition $a^{\epsilon(0)} \equiv \hat a$ (i.e., the solution to the investor's problem in isolation discussed in Section \ref{sec:invest_isolation}). It is clear by its explicit expression \eqref{eqva0} that $\hat a$ is convex if $\gamma>1-\beta$ and concave if $\gamma<1-\beta$. Plots of $\hat a$ are provided in Figure \ref{figa0} for parameter values:
$\rho=\bar\rho = 0.3$, $\mu=0.0741$, $\sigma=0.3703$, $\beta=0.55$ and $\gamma = \{ 0.1, 0.25,  0.3, 0.45, 0.5, 0.6, 0.75, 0.9 \}$.

\begin{figure}[htbp]
\centering
\includegraphics[scale=.5]{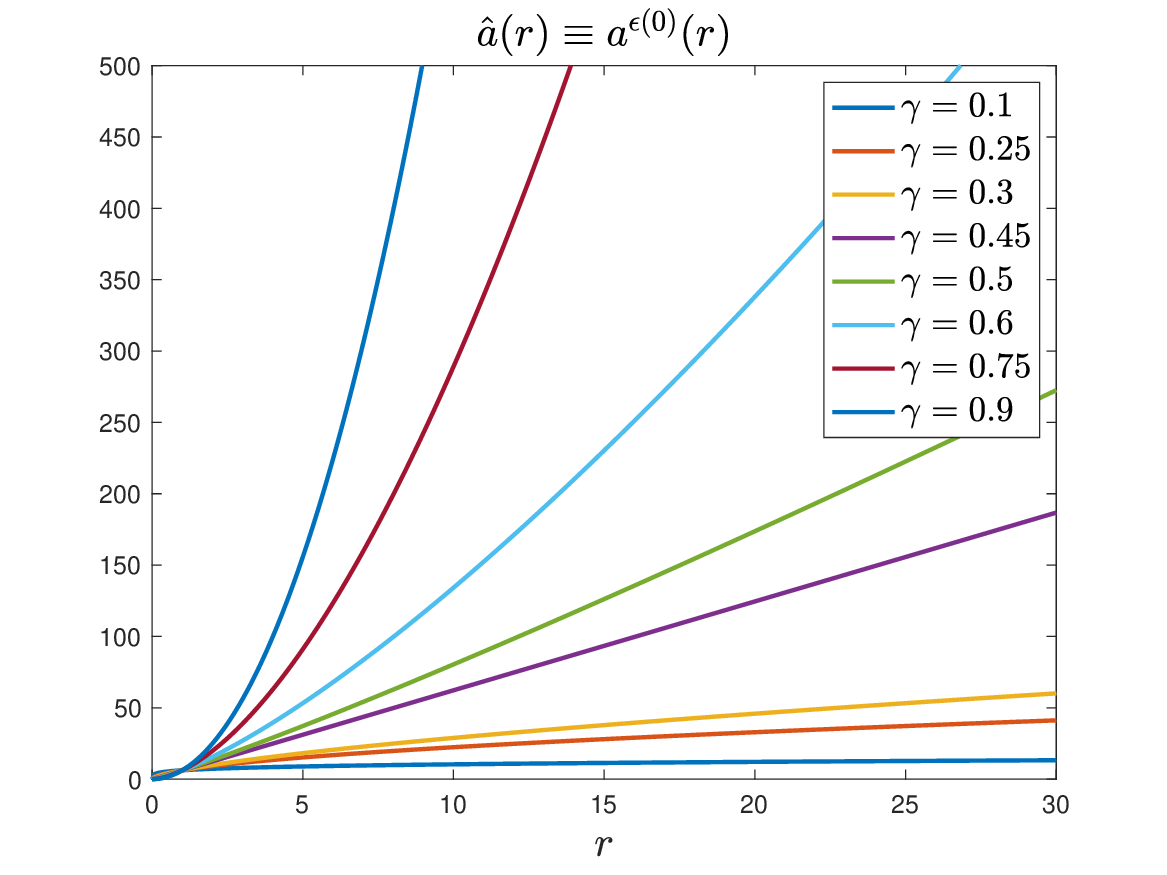}
\caption{Function $\hat a(r)$, as in \eqref{eqva0}, evaluated for $\rho=\bar\rho = 0.3$, $\mu=0.0741$, $\sigma=0.3703$, $\beta=0.55$, and $\gamma = \{ 0.1, 0.25,  0.3, 0.45, 0.6, 0.75, 0.9 \}$.}\label{figa0}
\end{figure}

Figure \ref{fig1} displays the equilibrium payoffs $w$ (Figure \ref{fig1a}) and $v^\epsilon$ (Figure \ref{fig1b}) obtained by solving \eqref{eqw1} and \eqref{eqvepsilon}.
The performance of Algorithm \ref{algo} is illustrated by Figures \ref{fig1c} and \ref{fig1d}.
We observe that the overall algorithm converges after just $4$ iterations.
In the top panel of Figure \ref{fig1c}, we show the error at the end of each sub-loop in the construction of $w(r,x)$ and $v(r,x)$. As expected we are always within the precision bound $\varpi$. In the bottom panel of Figure \ref{fig1c} we show the number of iterations required to construct the equilibrium payoff for both the firm and the investor (within the desired precision level $\varpi$).
Finally, Figure \ref{fig1d} shows the algorithm's overall convergence, with the final error achieved at $\max(\lVert w^{(\ell)}(r,x) - w^{(\ell-1)}(r,x) \rVert, \lVert v^{\epsilon(\ell)}(r,x) - v^{\epsilon(\ell-1)}(r,x) \rVert) = 6.2523 \times 10^{-4}$.

\begin{figure}[htbp]
\centering

\begin{subfigure}{0.48\textwidth}
\centering
\includegraphics[width=\linewidth]{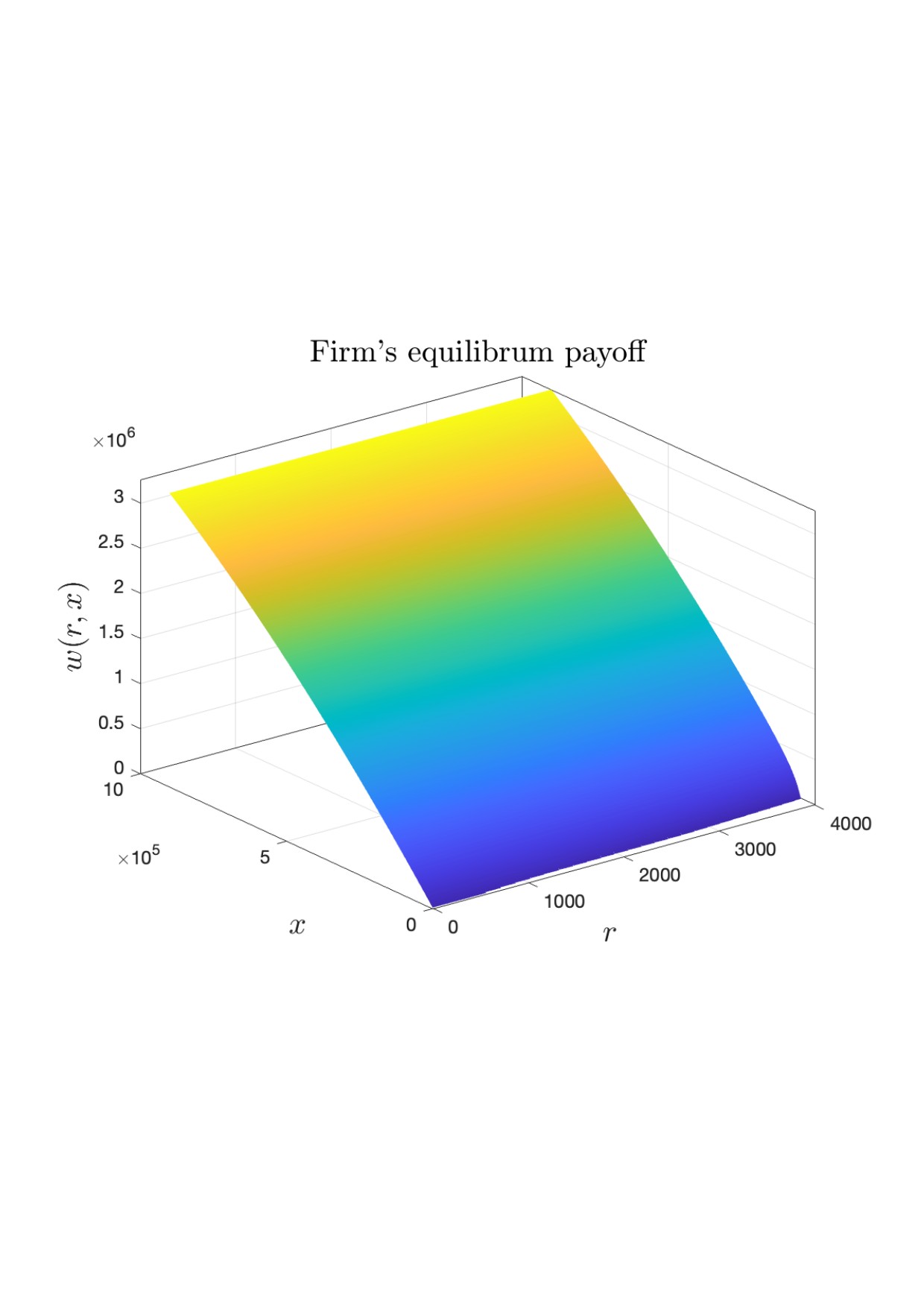}
\caption{Solution of \eqref{eqw1} with $\pi(x)=x$.}
\label{fig1a}
\end{subfigure}\hfill
\begin{subfigure}{0.48\textwidth}
\centering
\includegraphics[width=\linewidth]{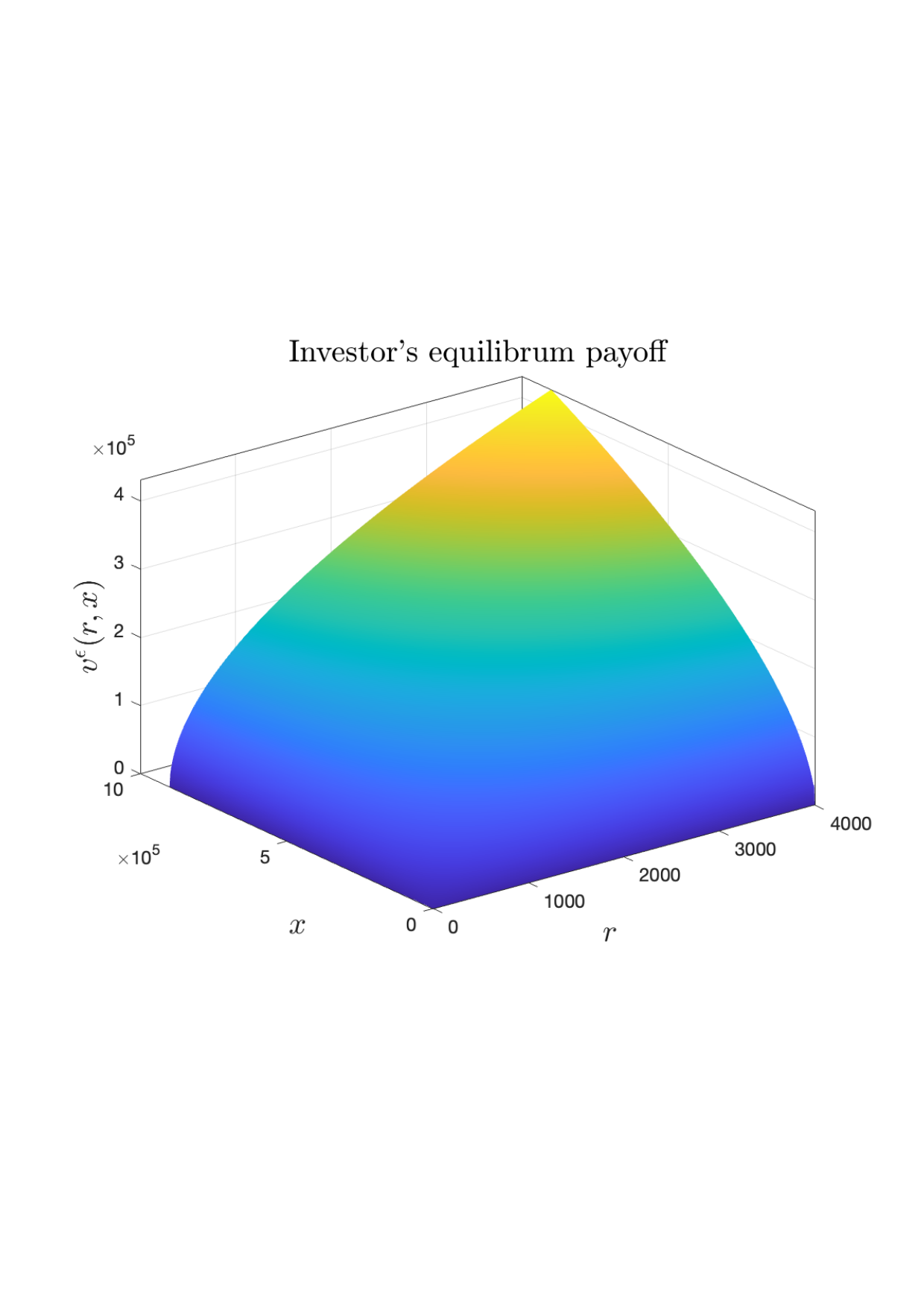}
\caption{Solution of \eqref{eqvepsilon} with $\Pi(r,x)=x^{0.55}r^{0.5}$.}
\label{fig1b}
\end{subfigure}

\vspace{0.3cm}

\begin{subfigure}{0.48\textwidth}
\centering
\includegraphics[width=\linewidth]{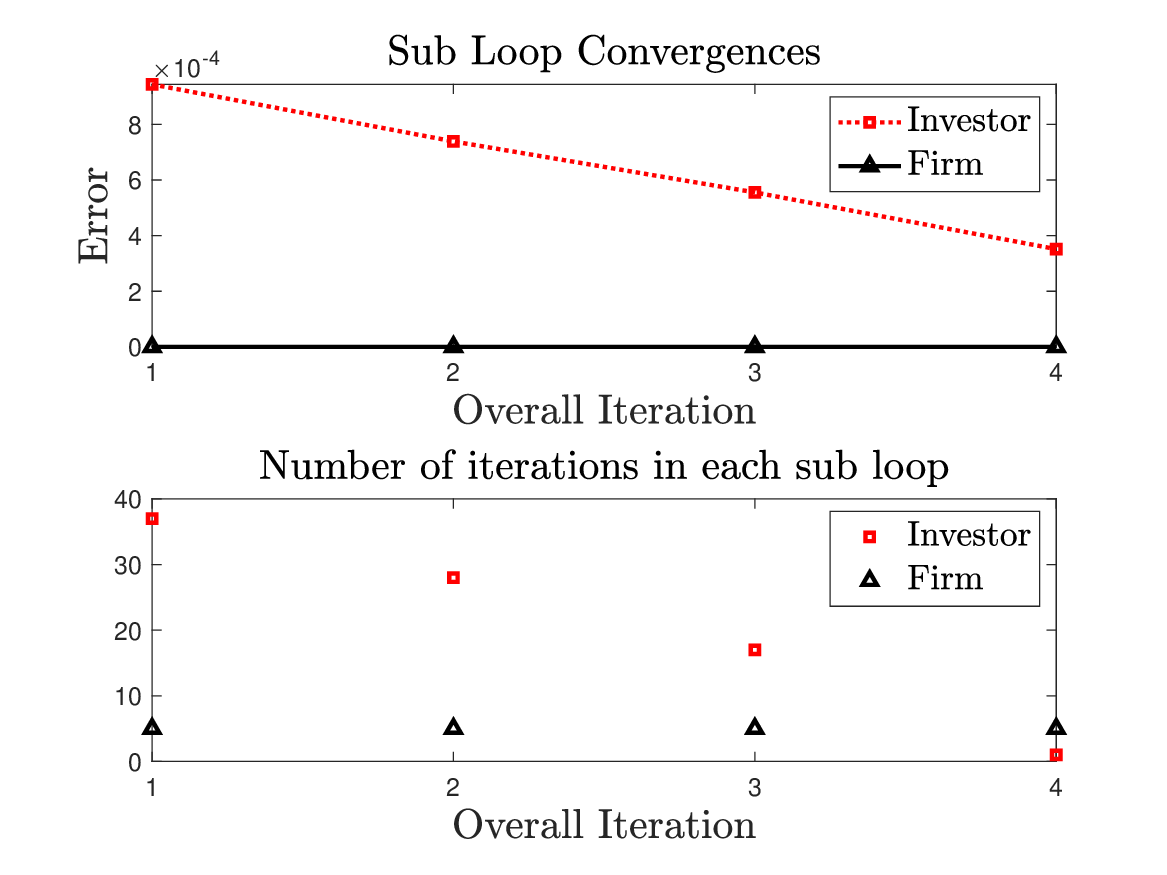}
\caption{}
\label{fig1c}
\end{subfigure}\hfill
\begin{subfigure}{0.48\textwidth}
\centering
\includegraphics[width=\linewidth]{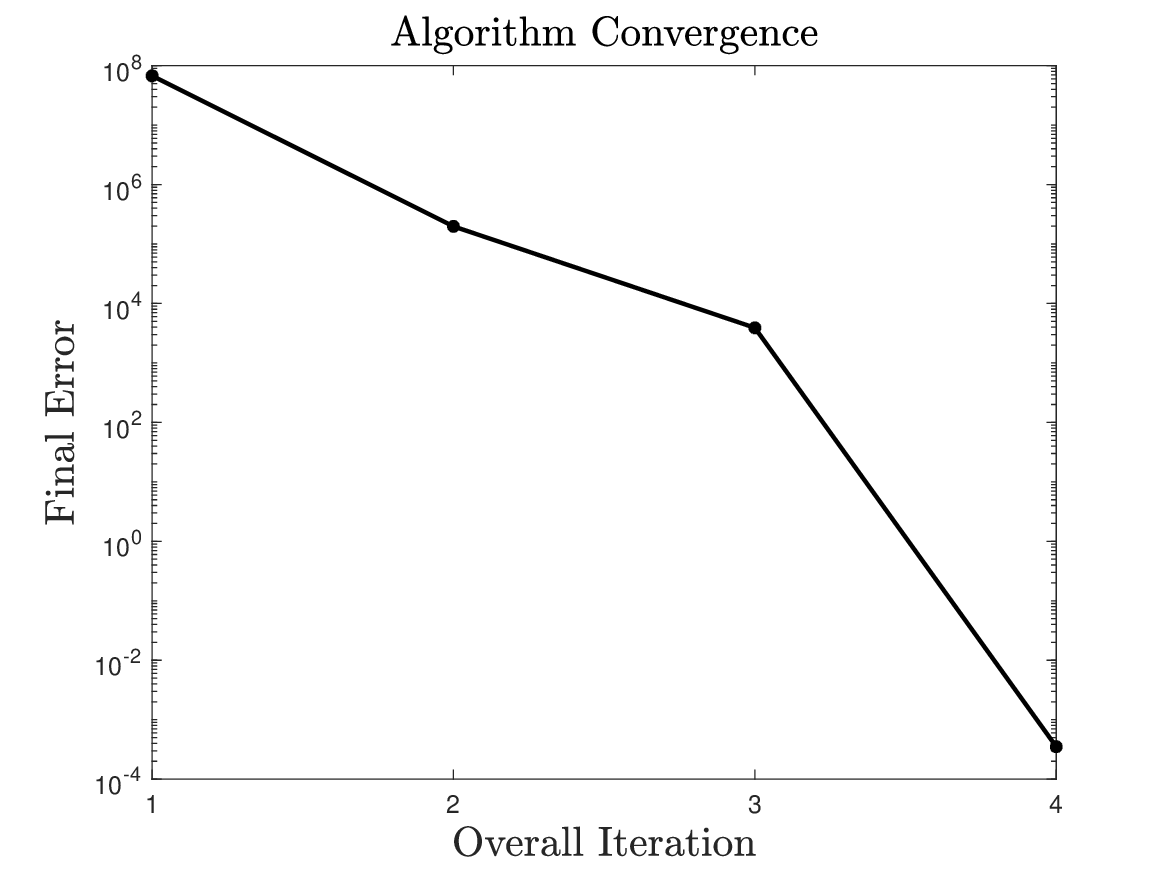}
\caption{}
\label{fig1d}
\end{subfigure}

\caption{Firm and investor equilibrium expected payoff functions obtained via Algorithm \ref{algo}, where $\mu=0.0741$, $\sigma=0.3703$, $\rho=\bar\rho=0.3$ and $\eta_{\mathrm{max}}=1$. In Figure \ref{fig1a}, $w(r,x)$ has been determined by solving \eqref{eqw1} with $\pi(x)=x$. Figure \ref{fig1b} displays the final $v^\epsilon(r,x)$ obtained by solving \eqref{eqvepsilon} with $\Pi(r,x)=x^{0.55}r^{0.5}$, $\alpha=1$ and $\epsilon=10^{-4}$. Figure \ref{fig1c} shows the performance of Algorithm \ref{algo} in each global iteration. The plots show the convergence and the number of iterations in each sub-loop for both the firm and the investor. The overall convergence of the algorithm can be seen in Figure \ref{fig1d}.}
\label{fig1}

\end{figure}

In Figure \ref{fig3} we illustrate the geometry of the regions in the state space where the firm and the investor act. More precisely, the white region corresponds to $\eta^\ast=0$, while the green region represents $\eta^\ast=\eta_{\mathrm{max}}$ (cf.\ \eqref{eq:etaphi}). A boundary $r\mapsto b^\epsilon(r)$ of the set $\{(r,x): \partial_r w(r,x) > \partial_x w(r,x)\}$, separates the firm's action region from the inaction one. The figure also displays the investor's optimal boundary  $r\mapsto a^\epsilon(r)$ derived from \eqref{eqaepsilon}. Taken together, functions $a^\epsilon$ and $b^\epsilon$ summarize the optimal strategies for both the firm and the investor. The firm mitigates emissions when $X_t^{\nu^\ast,\eta^\ast} \leq b^\epsilon(R_t^{\eta^\ast})$, while the investor provides capital when $X_t^{\nu^\ast,\eta^\ast} \leq a^\epsilon(R_t^{\eta^\ast})$. It is worth noticing that the shapes of both $r\mapsto b^\epsilon(r)$ and $r\mapsto a^\epsilon(r)$ are qualitatively similar to the shapes of the optimal boundaries $r\mapsto b(r)$ and $r\mapsto a(r)$ that we obtained in the deterministic setup of Section \ref{sec:deterministic}. Therefore, the form of the equilibrium we constructed theoretically in the deterministic framework coveys the same economic message as the one obtained numerically in the fully stochastic framework.

\begin{figure}[htbp]
\centering

\begin{subfigure}{0.48\textwidth}
\centering
\includegraphics[width=\linewidth]{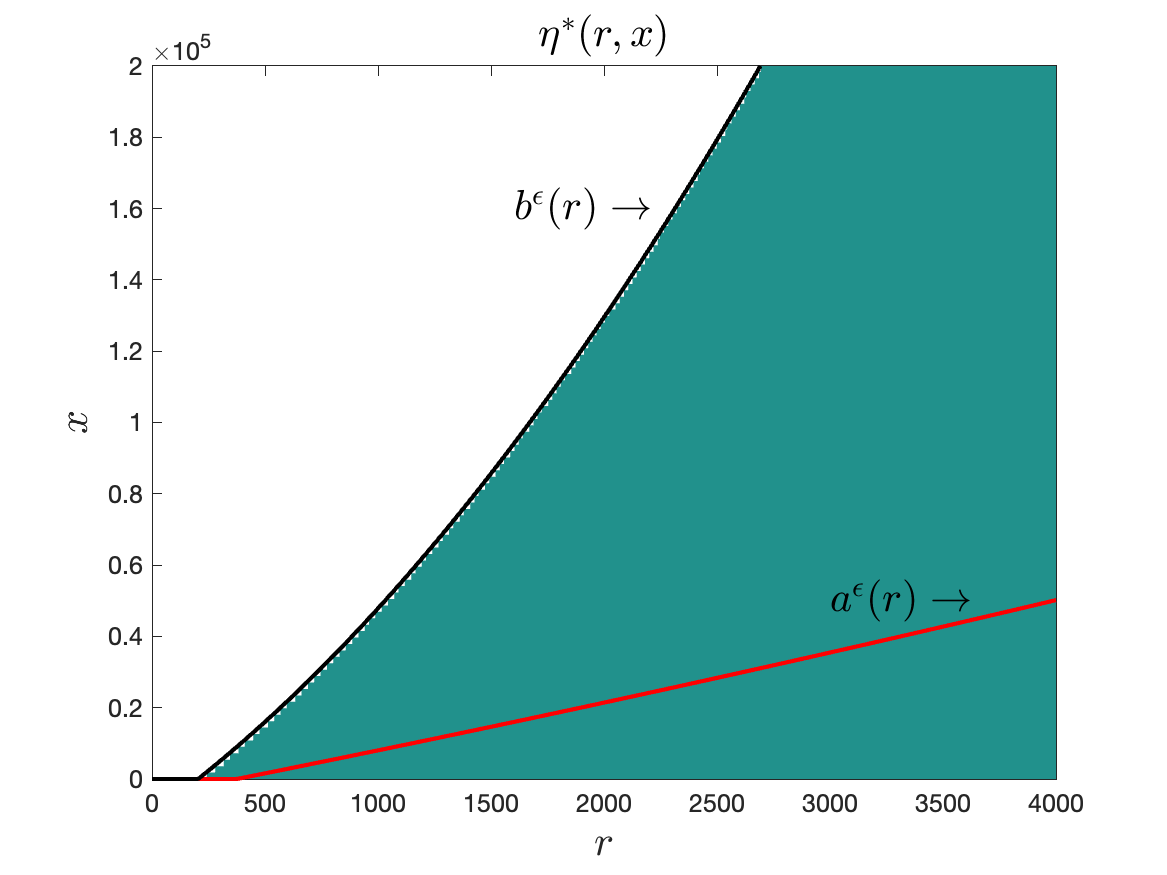}
\caption{}
\label{fig3a}
\end{subfigure}\hfill
\begin{subfigure}{0.48\textwidth}
\centering
\includegraphics[width=\linewidth]{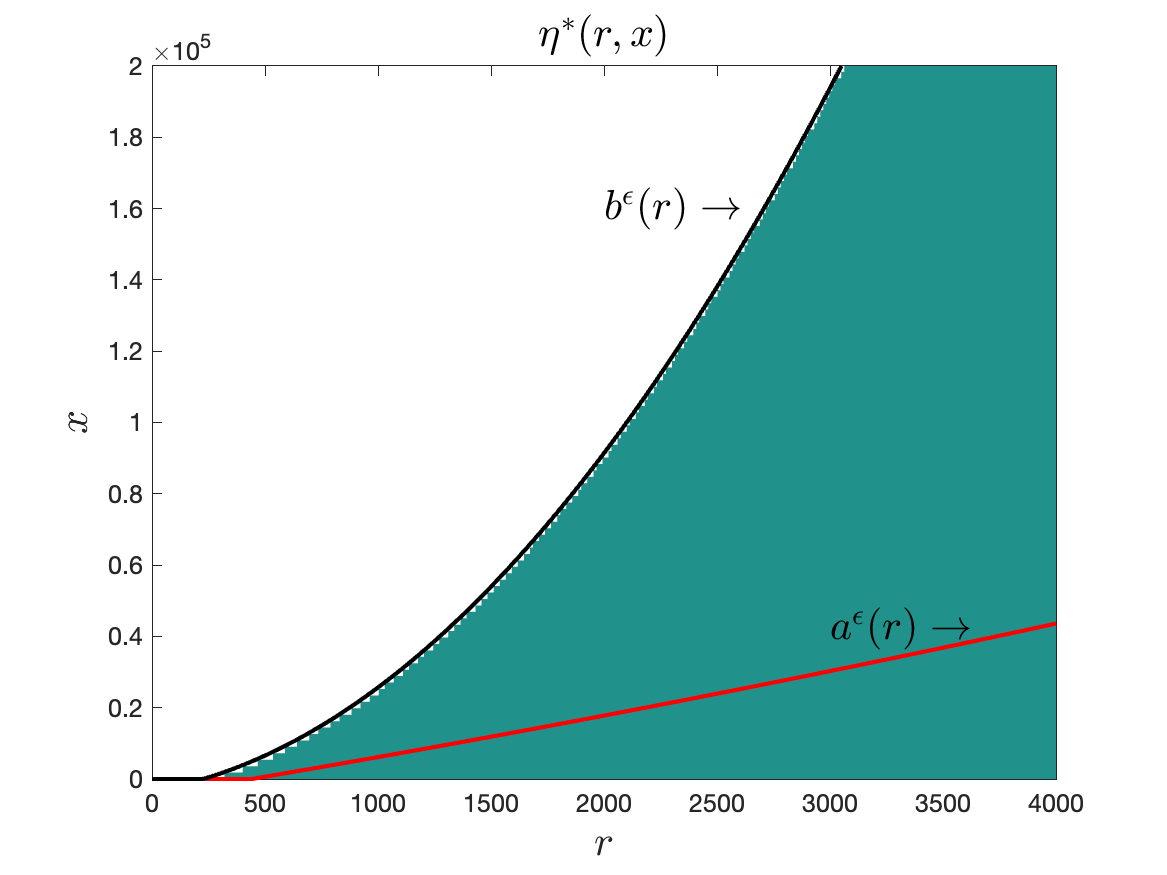}
\caption{}
\label{fig3b}
\end{subfigure}

\vspace{0.2cm}

\begin{subfigure}{0.48\textwidth}
\centering
\includegraphics[width=\linewidth]{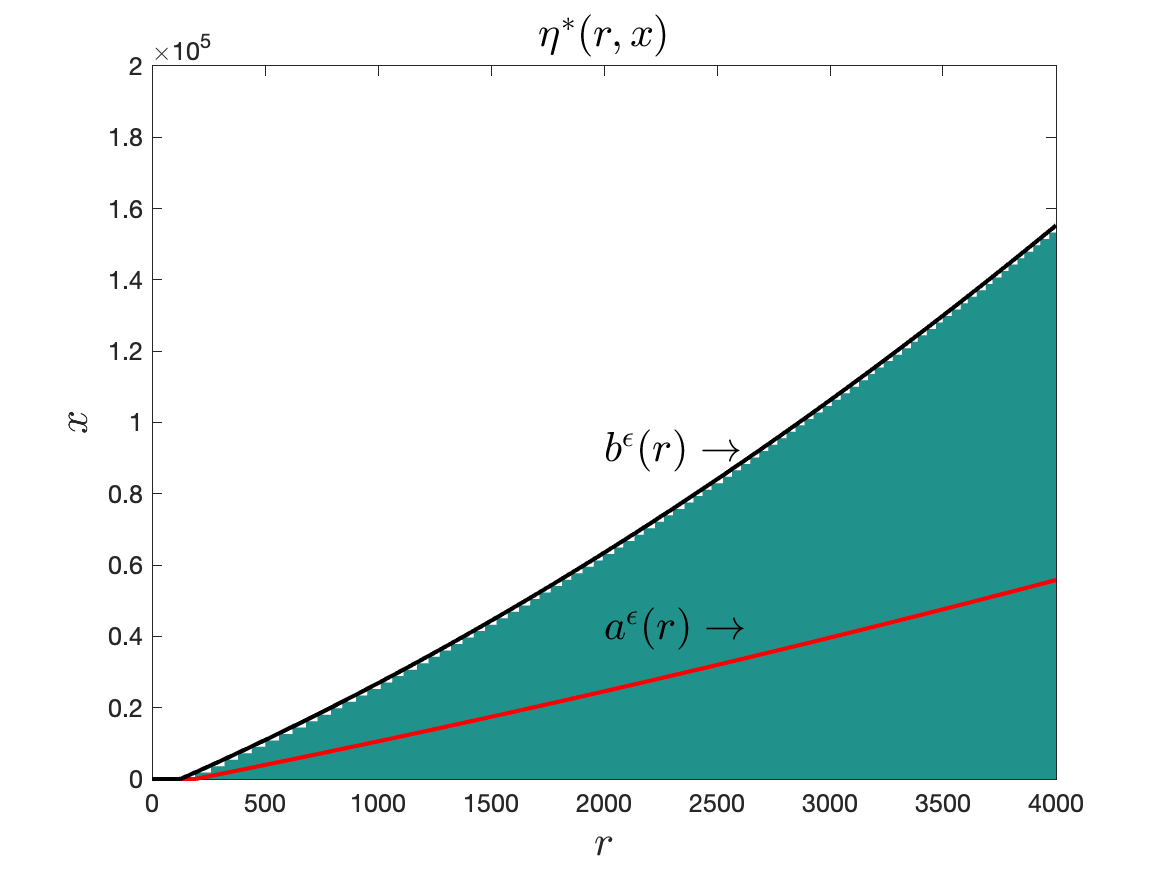}
\caption{}
\label{fig3c}
\end{subfigure}\hfill
\begin{subfigure}{0.48\textwidth}
\centering
\includegraphics[width=\linewidth]{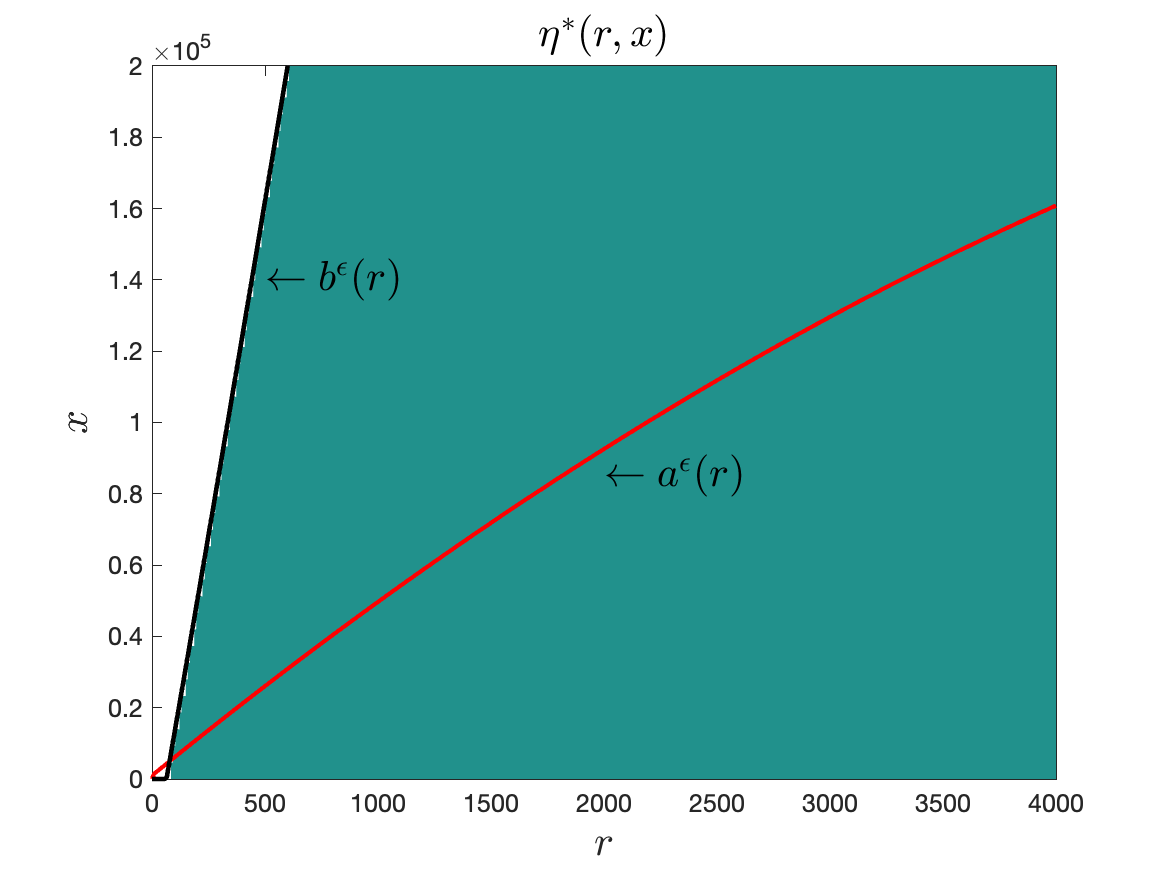}
\caption{}
\label{fig3d}
\end{subfigure}

\caption{The curve $r\mapsto b^\epsilon(r)$ is the boundary of the firm's action set. The function $r\mapsto a^\epsilon(r)$ is the investor's optimal investment threshold. The white region is the inaction region for both players. In the green region the firm implements emission abatement at the maximum rate. Optimal investment keeps the firm's production capacity above $a^\epsilon$.\\
\textbf{Parameter values}. In Figure \ref{fig3a}: $\mu = 0.0741$, $\sigma = 0.3703$, $\rho=\bar\rho = 0.2850$, $\beta=0.55$ and $\gamma=0.5$. In Figure \ref{fig3b}: $\mu = 0.0445$, $\sigma = 0.3703$, $\rho=\bar\rho = 0.2850$, $\beta=0.55$ and $\gamma=0.5$. In Figure \ref{fig3c}: $\mu = 0.0741$, $\sigma = 0.2222$, $\rho=\bar\rho = 0.2850$, $\beta=0.55$ and $\gamma=0.5$. In Figure \ref{fig3d}: $\mu = 0.16$, $\sigma = 0.4$, $\rho=\bar\rho = 0.3$, $\beta=0.75$ and $\gamma=0.2$.}
\label{fig3}

\end{figure}

Figures \ref{fig3a}--\ref{fig3c} also illustrate the sensitivity of $a^\epsilon(r)$ and $b^\epsilon(r)$ when we fix $\rho=\bar\rho = 0.2850$, $\beta=0.55$, $\gamma=0.5$, and perturb the values of $\mu$ and $\sigma$ as shown in Table \ref{tab1} below.
We observe that the investment boundary is higher for higher values of $\mu$
 and for lower values of $\sigma$: since the investor is profit-seeking and risk-averse, they are more inclined to invest into a more profitable and less risky company. The company is risk-neutral, so when the volatility $\sigma$ is lower but the drift $\mu$ is the same (Figure \ref{fig3c}), the company's mitigation actions are mostly determined by those of the investors: since the investment boundary is higher, the company is able to attract the same level of investment with less mitigation. When the volatility is the same but the drift is lower (Figure \ref{fig3b}), there are two competing effects: on the one hand, lower investment boundary motivates the company to mitigate more, on the other hand, lower profitability motivates it to mitigate less because the returns from mitigation are lower. Overall, we see that the mitigation boundary is lower but not by so much as in Figure \ref{fig3c}.
 Additionally,  Figure \ref{fig3d} shows the values of $\eta^\ast(r,x)$ for concave functions $a^\epsilon(r)$ and $b^\epsilon(r)$, where $\mu = 0.16$, $\sigma = 0.4$, $\rho=\bar\rho = 0.3$, $\beta=0.75$ and $\gamma=0.2$.
Similarly, numerical simulations show that the two boundaries become closer when $\rho$ increases.

\begin{table}[ht]
    \centering
    \begin{tabular}{lcc} \toprule
     & {$\mu$} & {$\sigma$}\\ \midrule
    {Figure \ref{fig3a}}   & 0.0741  & 0.3703  \\
    {Figure \ref{fig3b}}  & 0.0445  & 0.3703 \\
    {Figure \ref{fig3c}}  & 0.0741  & 0.2222 \\ \bottomrule
    \end{tabular}
    \caption{Comparison between the parameters that show the sensitivity of $\eta^\ast(r,x)$.}
    \label{tab1}
\end{table}

Figure \ref{figsigma} shows the sensitivity of the investment and abatement thresholds to changes in volatility. We observe that when the volatility coefficient increases, the investor delays investment ($a^\epsilon$ decreases) and the firm expedites abatement ($b^\epsilon$ increases). The decrease in the investment boundary is in line with results from the irreversible investment literature: higher volatility delays the exercise of the investment option. The firm's behaviour instead suggests that a riskier environment incentivizes the firm to seek external investment and therefore to abate emissions earlier.
\begin{figure}[htbp]
\centering
\includegraphics[width=0.9\linewidth]{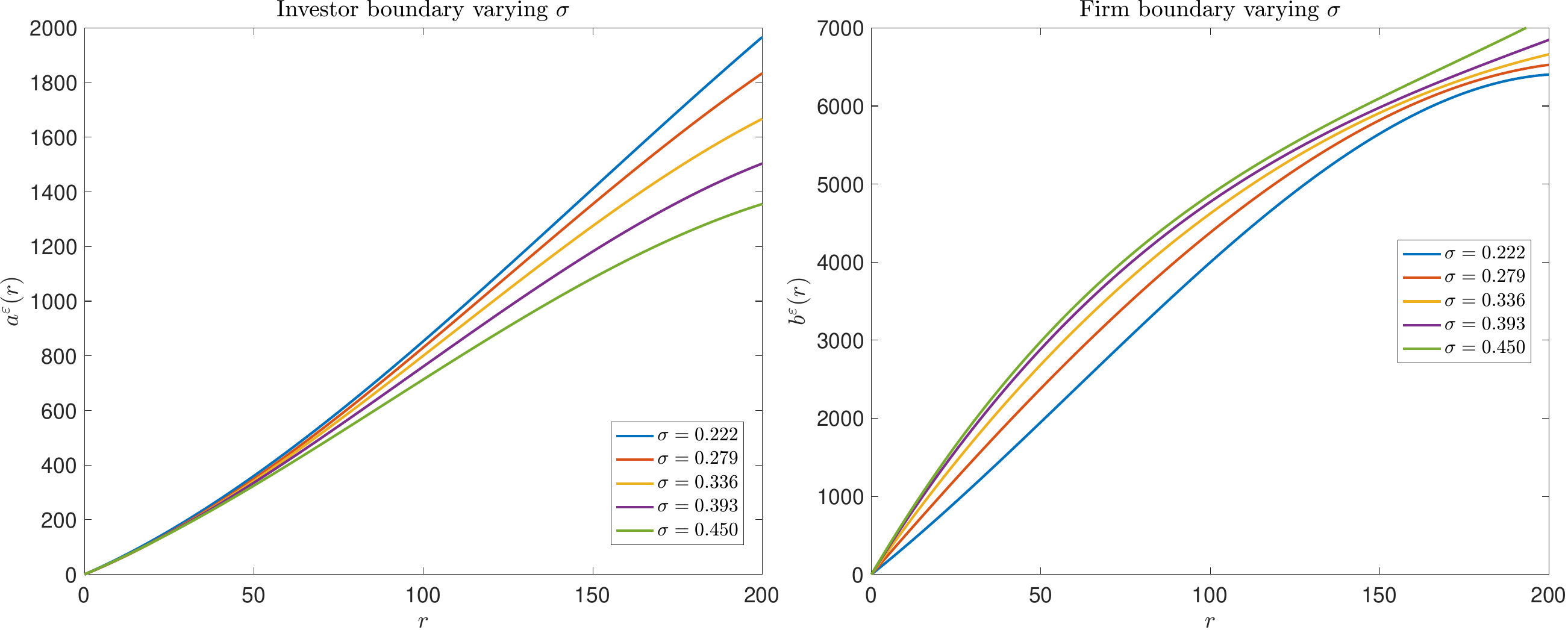}
\caption{Comparison of the investment ({\em left panel}) and abatement ({\em right panel}) thresholds for different values of the volatility coefficient.}
\label{figsigma}
\end{figure}

Figure \ref{figgamma} shows that increasing the risk aversion for the investor expedites both the firm's abatement decision and the investor's intervention (both $b^\epsilon$ and $a^\epsilon$ increase with $\gamma$). This is also in line with irreversible investment theory, where a more risk averse investor tends to invest earlier to support production capacity. As a result, the firm has a bigger incentive to abate emissions. The result is also in line with our observations in the deterministic case (cf.\ Figure \ref{ab.fig}).
\begin{figure}[htbp]
\centering
\includegraphics[width=0.9\linewidth]{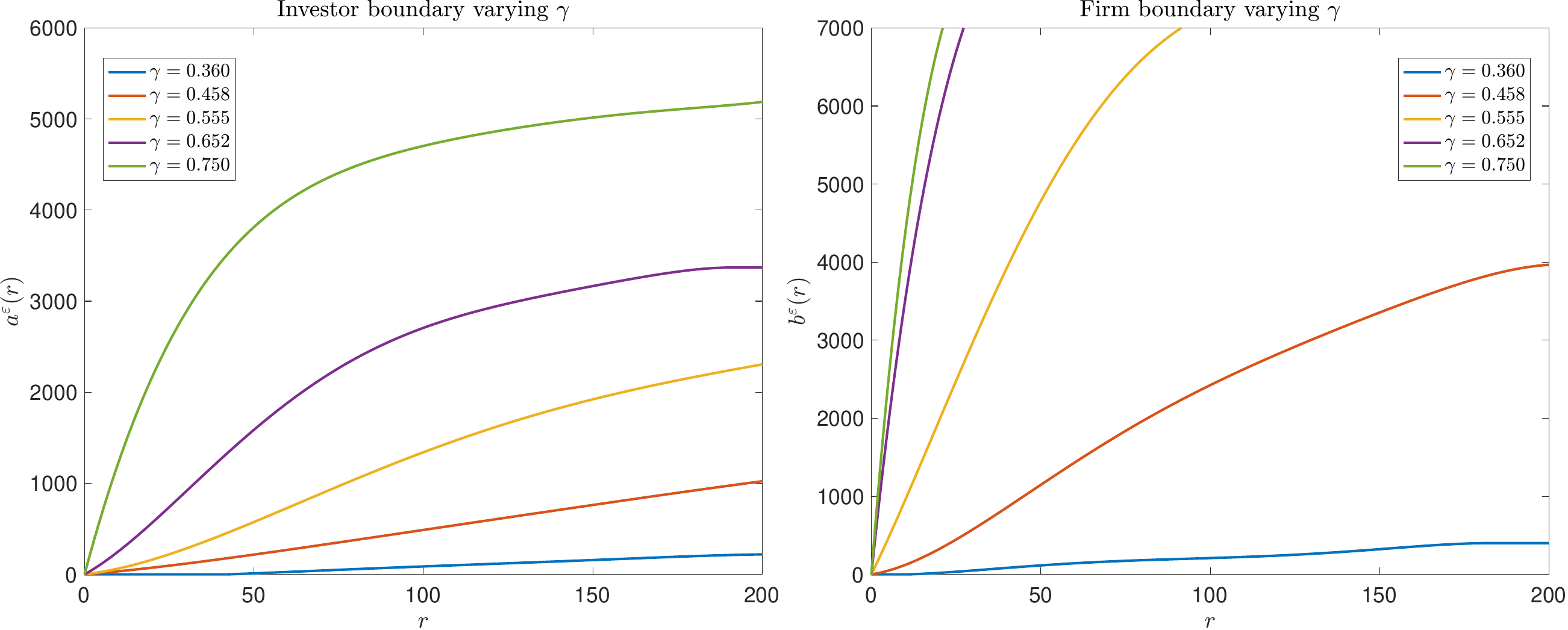}
\caption{Comparison of the investment ({\em left panel}) and abatement ({\em right panel}) thresholds for different values of the risk aversion coefficient.}
\label{figgamma}
\end{figure}

\begin{figure}[htbp]
\centering

\begin{subfigure}{0.48\textwidth}
\centering
\includegraphics[width=\linewidth]{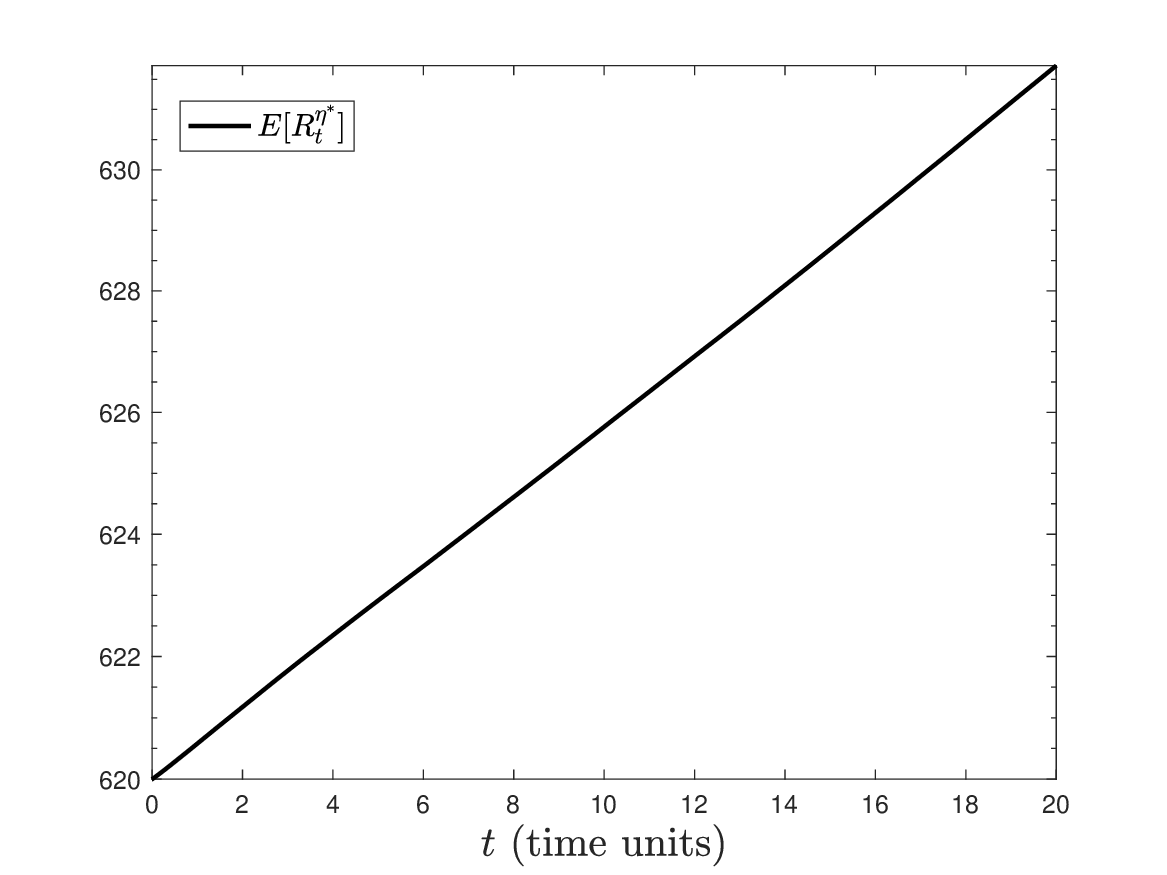}
\caption{}
\label{fig4c}
\end{subfigure}\hfill
\begin{subfigure}{0.48\textwidth}
\centering
\includegraphics[width=\linewidth]{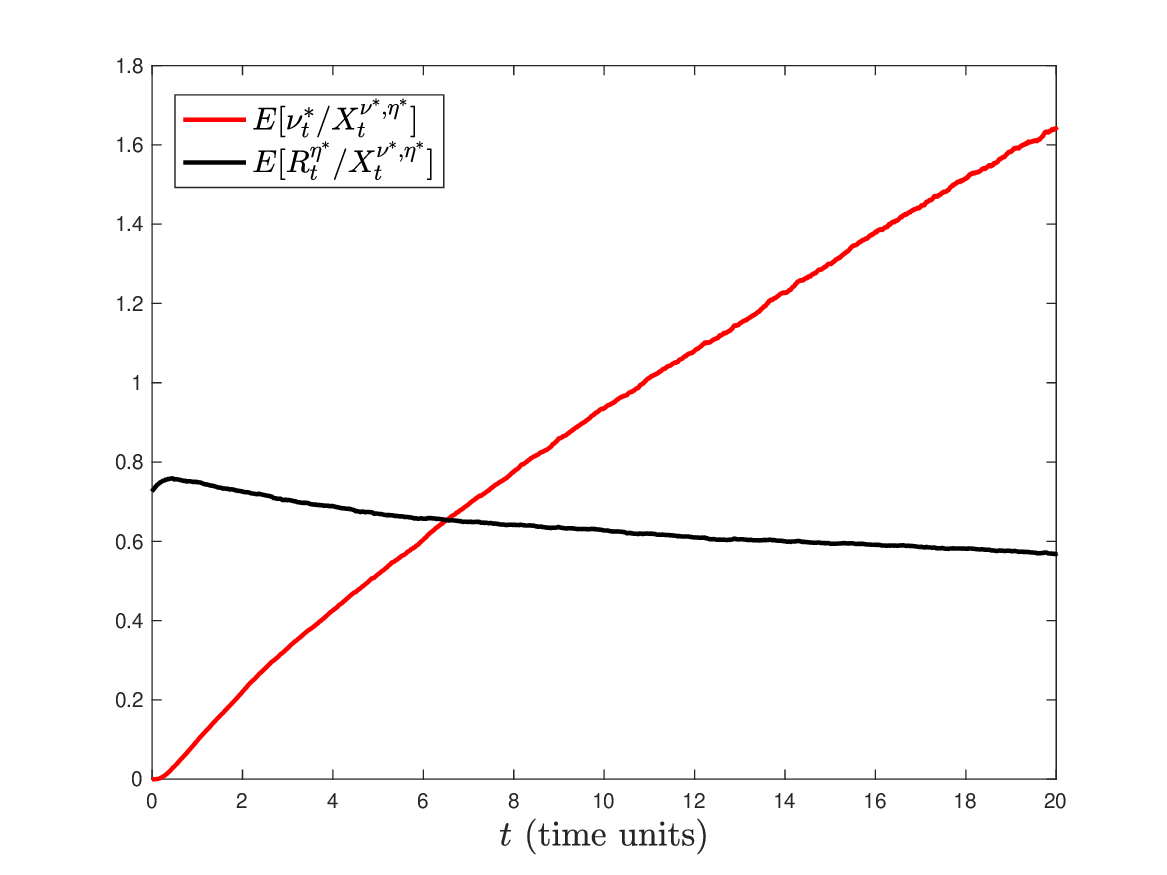}
\caption{}
\label{fig4d}
\end{subfigure}

\caption{Figures \ref{fig4c} and \ref{fig4d} illustrate: (i) the dynamics of the average amount allocated by the firm towards emission reduction over time, $\mathbb{E}[R_t^{\eta^\ast}]$, (ii) the expected ratio of the firm's abatement spending to total profits, $\mathbb{E}[R_t^{\eta^\ast} / X_t^{\nu^\ast,\eta^\ast}]$, and (iii) the expected ratio of the investor's cumulative financial contribution to total profits, $\mathbb{E}[\nu_t^\ast / X_t^{\nu^\ast,\eta^\ast}]$. These quantities were approximated via Monte Carlo simulations with $10^5$ sample paths.}
\label{fig4}

\end{figure}

Figures \ref{illustration2.fig} and \ref{fig4} present the numerical simulations derived from Algorithm \ref{algo} for a negative value of $\mu$ ($\mu = -0.0445$), with other parameters  $\sigma = 0.3703$, $\rho = \bar\rho = 0.2850$, $\beta = 0.65$ and $\gamma=0.36$.   
Figure \ref{illustration2.fig} shows optimal control actions through the dynamics of $(R_t^{\eta^*},X_t^{\nu^*,\eta^*})$, $a^\epsilon(R_t^{\eta^*})$ and $b^\epsilon(R_t^{\eta^*})$, for initial conditions given by $R_0 = 620$ and $X_0 = 854$.
Here, the dynamic of the profit stream  \eqref{eqX} was obtained through the classical Euler--Maruyama method. This figure presents one sample path of the processes $X_t^{\nu^*,\eta^*}$, $a(R_t^{\eta^*})$ and $b(R_t^{\eta^*})$.
Firm's profits $X_t^{\nu^\ast,\eta^\ast}$ are maintained by the investor above the moving investment boundary $a(R_t^{\eta^\ast})$ (via Skorokhod reflection). 
We also observe that when the firm's profits are sufficiently high (i.e., $X_t^{\nu^*,\eta^*} > b(R_t^{\eta^*})$) there is no emission abatement and $R_t^{\eta^\ast}$ remains constant. 

In order to analyze average quantities, we conducted Monte Carlo simulations with $10^5$ sample paths.
Figures \ref{fig4c} and \ref{fig4d} illustrates the resulting average investment strategies for both the firm and the investor.
These figures illustrate three key dynamics: the evolution of average cumulative spending in emission abatement, denoted as $t\mapsto \mathbb E[R^{\eta^\ast}_t]$ (Figure \ref{fig4c}); the evolution of the ratio between total abatement spending and current profits, represented by $t\mapsto \mathbb E[R_t^{\eta^\ast}/ X_t^{\nu^\ast,\eta^\ast}]$; and the evolution of the ratio between the investor's cumulative financial investment and the firm's current profits, given by $t\mapsto \mathbb E[\nu^\ast_t/ X_t^{\nu^\ast,\eta^\ast}]$ (Figure \ref{fig4d}).
We see that in the long-run the average investment-to-profit ratio increases faster than the average abatement-to-profit ratio, whereas in the short-run the abatement increases faster than the investment, relatively to the profit levels. This seems to indicate that emission reduction brings a long-term financial benefit to the firm by allowing it to attract larger investment levels (relatively to the production capacity) compared to a situation where abatement actions are not taken.

\begin{figure}[htbp]
\centering

\begin{subfigure}{0.48\textwidth}
\centering
\includegraphics[width=\linewidth]{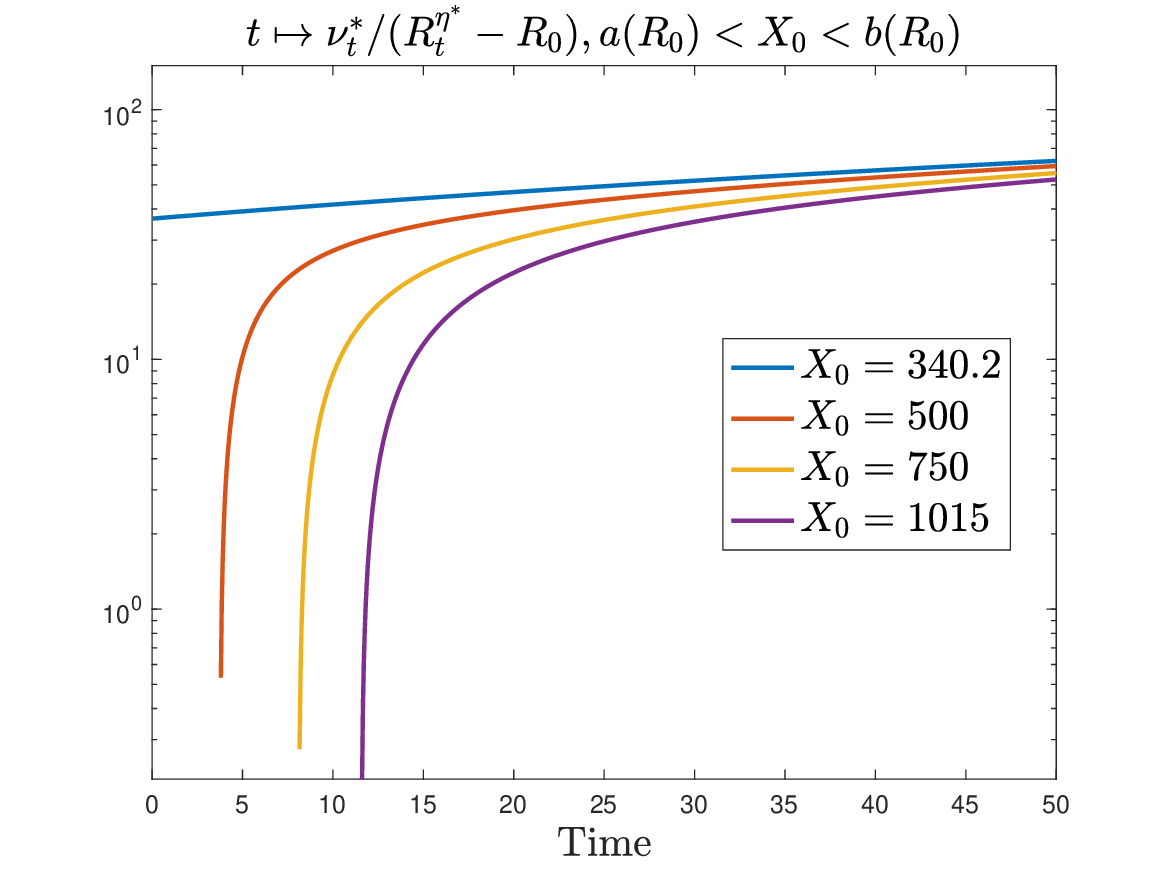}
\caption{}
\label{fig6a}
\end{subfigure}\hfill
\begin{subfigure}{0.48\textwidth}
\centering
\includegraphics[width=\linewidth]{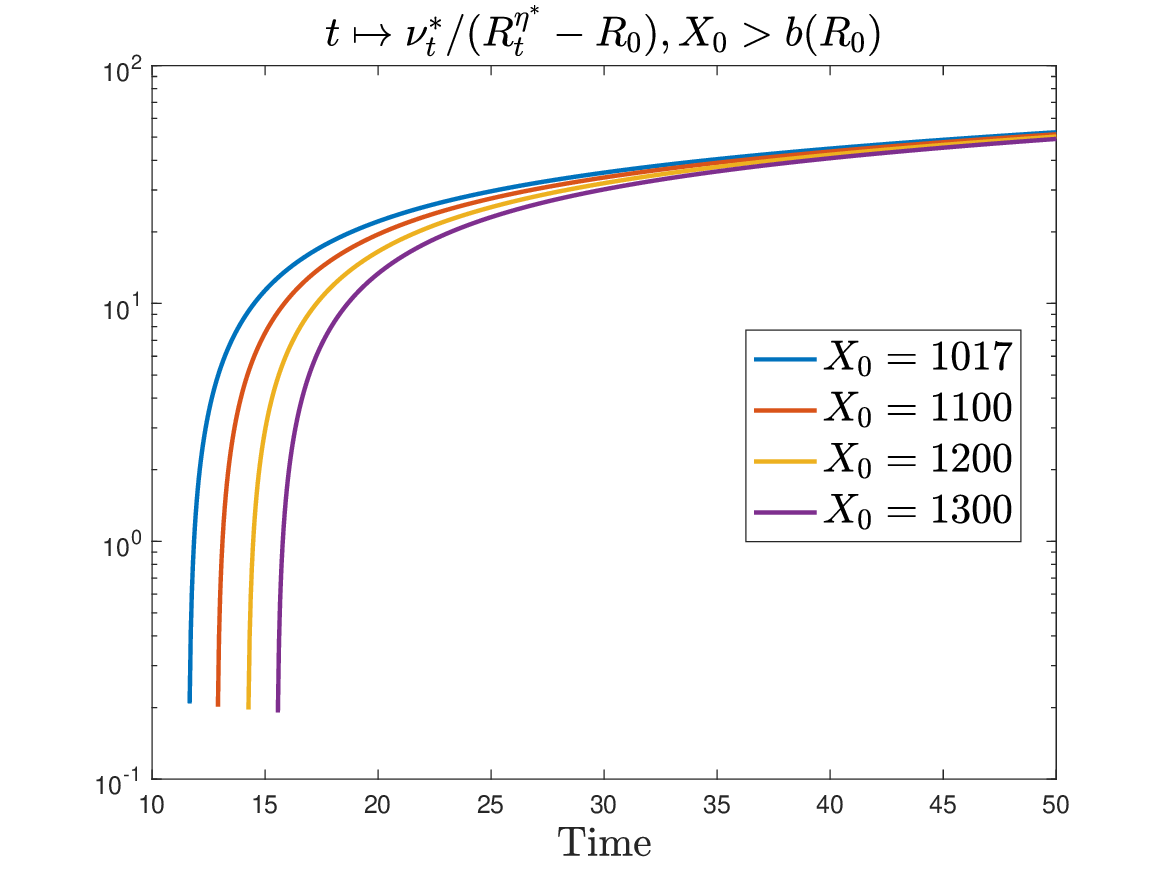}
\caption{}
\label{fig6b}
\end{subfigure}

\vspace{0.2cm}

\begin{subfigure}{0.48\textwidth}
\centering
\includegraphics[width=\linewidth]{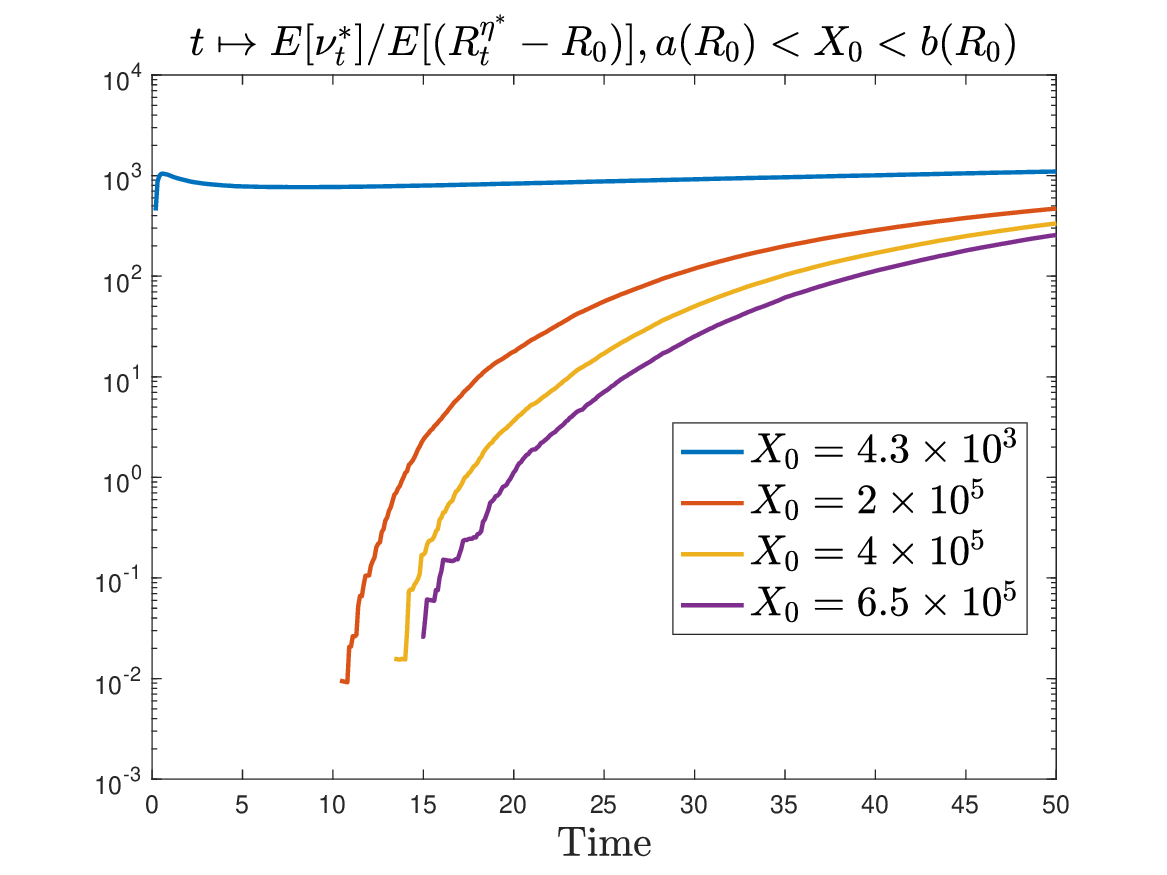}
\caption{}
\label{fig6c}
\end{subfigure}\hfill
\begin{subfigure}{0.48\textwidth}
\centering
\includegraphics[width=\linewidth]{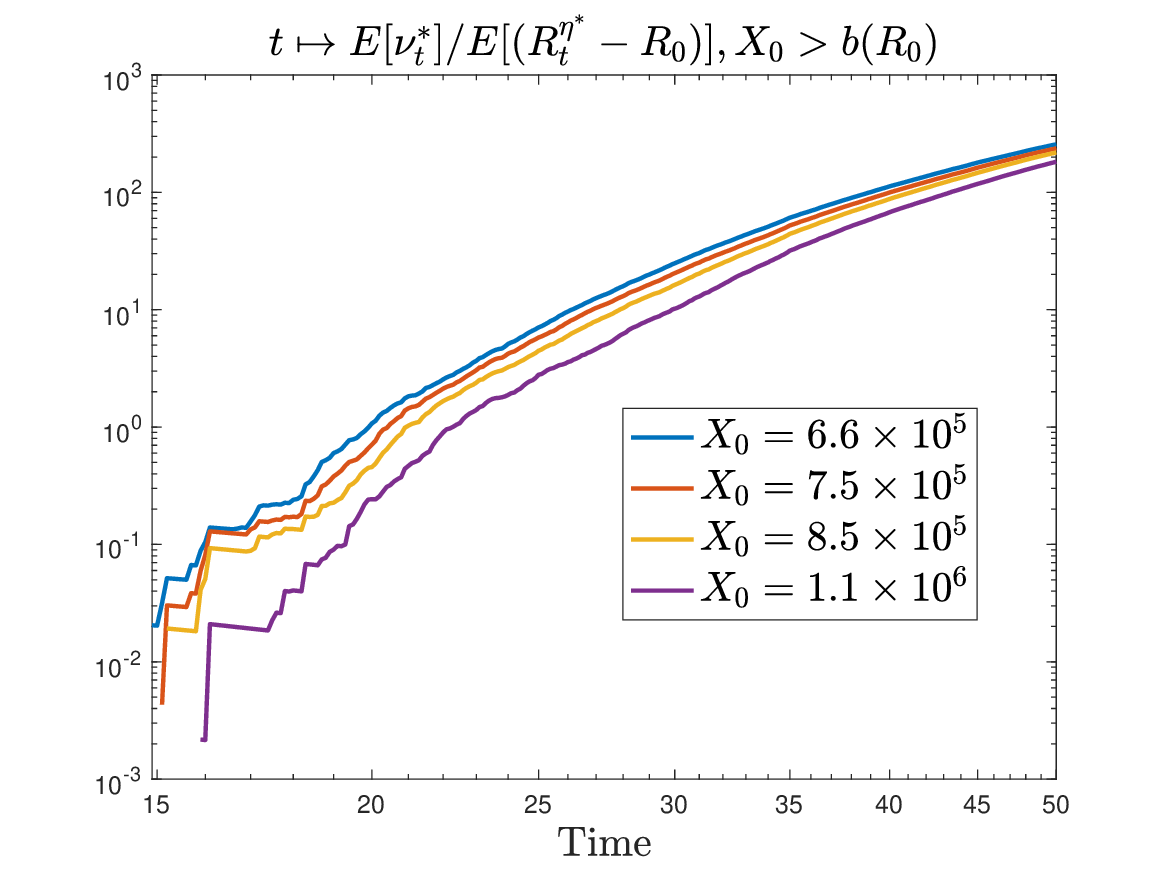}
\caption{}
\label{fig6d}
\end{subfigure}

\caption{Figures \ref{fig6a}--\ref{fig6b} show the dynamics of the ratio between total investment and total emission abatement as a function of time, in the zero-noise limit, for different values of the initial firm size, with $\rho = 0.06$ and $\mu = -0.05$. In the left panel the initial firm size takes values in the range $a(R_0) < X_0 < b(R_0)$, while in the right panel it takes values in the range $X_0 > b(R_0)$. Figures \ref{fig6c}--\ref{fig6d} show the analogous dynamics for the stochastic case, with $\rho = 0.06$, $\mu = -0.05$ and $\sigma = 0.2234$. Here we consider the ratio between the average total investment and the average total emission abatement as functions of time.}
\label{fig6}

\end{figure}

{Finally, Figure \ref{fig6} illustrates the dynamics over time of the ratio between the total investment and the total emission abatement. This ratio can be seen as a measure of the impact of the emission abatement on the financial performance of the company: the higher the ratio, the higher the incentive for the firm to abate. We remark that this ratio is decreasing in the initial production capacity $X_0$:  the lower $X_0$, the less abatement is necessary before the investor starts adding money to the firm. Similarly, for short time horizons, it is highly unlikely that the firm will reach the investment region, so that the ratio is zero for small $t$ and then increases as the probability of attracting investment becomes higher. }

\section{Conclusions}\label{sec:concl}

In this paper, we presented a novel tractable continuous-time model for strategic sustainable investment, framed as as a nonzero-sum stochastic differential game between a privately owned company and a sustainable impact-minded investor. Our model has several implications for impact investing. First, although the objective function of the company only depends on its financial performance, in equilibrium, the company may strategically decide to abate emissions to attract capital from impact investors. Second, this happens when the financial performance of the company drops below the ``abatement boundary": the company whose financial performance is strong enough does not require external capital and is therefore not incentivised to abate emissions. Third, the investors start injecting capital when the financial performance of the company reaches the ``investment boundary", which is below the abatement boundary: this means that the company must perform abatement for some time before the capital from impact investors starts to flow. Fourth, in the long term, when the company's financial performance is at or around the investment boundary, the company may improve its financial performance by increasing its environmental impact, as this will lead to additional capital inflows from impact investors.

The added complexity of the continuous-time formulation is justified by the novel features of our solution, which could not have been discovered in a one-period model: the optimal strategies of the company and the investor are characterized through abatement and investment boundaries, and have a temporal ordering: in equilibrium, the company acts first by starting abatement, which subsequently allows it to attract investor capital. 

Multiple extensions of our basic model could be considered in subsequent research. First, one could include environmental performance metrics into the company's objective function, to account for executive compensation plans indexed on ESG performance. This would realign the interests of the company with those of the investors, and increase the company's environmental impact. Second, one could allow the company to attract the capital of both environmentally-minded and commercial investors. This would reduce the incentive for the company to abate emissions and, consequently, reduce its environmental impact. Finally, one could allow for blended financing where the government provides additional funding or capital guarantees to facilitate the flow of private capital to impactful projects. 



\section*{Acknowledgement}

Caio César Graciani Rodrigues gratefully acknowledges funding from the Scuola Superiore Meridionale, MERC department.
Tiziano De Angelis was partly funded by Next Generation EU -- PRIN2022 (2022BEMMLZ) and PRIN-PNRR2022 (P20224TM7Z). Peter Tankov gratefully acknowledges financial support from the ANR  under the program France 2030 (grant ANR-23-EXMA-0011), from
the FIME Research Initiative of the Europlace Institute of Finance and from the sponsoring program ``Impact Investing Chair'' led by Groupe ENSAE-ENSAI and the Fondation ENSAE-ENSAI, sponsored by
Mirova.
Part of this work was carried out when Peter Tankov was a senior visiting fellow of LTI@UniTo program of Collegio Carlo Alberto, Turin, Italy. The support of this program is gratefully acknowledged. 

The authors would like to thank Prof.\ R.\ Frey, Prof.\ E.\ Barucci, Dr.\ A.\ Calvia and participants to the Green Finance Workshop 2025 (Rome TorVergata University) for useful comments.

\section*{Data Availability Statement}

Data sharing not applicable to this article as no datasets were generated or analysed during the current study.

\bibliographystyle{siam}
\bibliography{ref}

\appendix

\section{Calibration roadmap}\label{appendix.sec}

\begin{table}[htbp]
\centering
\caption{Mapping of model parameters to observable quantities.}
\label{tab:param_mapping}

{\small
\renewcommand{\arraystretch}{1.5}
\begin{tabular}{p{0.12\textwidth} p{0.25\textwidth} p{0.34\textwidth} p{0.22\textwidth}}
\hline
\textbf{Parameter} & \textbf{Model role (interpretation)} & \textbf{Practical proxy / calibration target} & \textbf{Typical data source} \\
\hline
$\mu,\sigma$
& Drift and volatility of the firm’s baseline profit/capacity process $X_t$  in the absence of controls.
& Estimate from time series of operating profit / EBITDA / cash flow. In the numerical illustration, $\mu,\sigma$ are taken from an empirical estimate in \cite{Reddy2016}.
& Firm financial statements; Compustat/ORBIS; sector studies. \\[0.4em]

$\rho,\bar{\rho}$
& Investor and firm discount rates.
& Map to (i) investor required return / hurdle rate and (ii) firm WACC (or project discount rate). Differences can reflect heterogeneous horizons.
& Fund documentation; market data for discount rates; firm WACC estimates. \\

$\eta_{\max}$
& Upper bound on the abatement spending \emph{rate} $\eta_t$ (constraint $0\le \eta_t\le \eta_{\max}$); determines maximal feasible speed of mitigation spending.
& Calibrate as an implementation/capacity constraint: maximum annual decarbonization capex+opex the firm can deploy (engineering, procurement, regulatory, organizational constraints).
& Firm climate capex plans; sustainability reports; sector benchmarks. \\[0.4em]

$\alpha$
& Investor’s marginal cost of increasing production capacity by one unit; can capture investment frictions (fees/intermediation costs).
& Calibrate to effective cost of capital injection (fees, monitoring costs, intermediation wedges), or normalize ($\alpha=1$) under a unit choice.
& Fund fee schedules; observed financing spreads / issuance costs; calibration normalization. \\[0.4em]

$\beta,\gamma$
& Elasticities in the investor’s Cobb--Douglas payoff $\Pi(r,x)=x^\beta r^\gamma$: relative weight on financial performance vs.\ environmental performance.
& Choose to match ``impact--return trade-off'': e.g., how investor contributions vary with profitability vs.\ environmental spending; can be pinned down by matching investment/abatement ratios across firms or time.
& Impact fund stated mandates; empirical sensitivities of flows/valuations to ESG/impact metrics. \\[0.4em]

$R_0$
& Initial cumulative abatement expenditure.
& Cumulative ``green'' capex/opex already spent/committed by the calibration date.
& Sustainability reporting / transition plans; internal accounting of climate-related spending. \\[0.4em]

$X_0$
& Initial profit / production capacity level.
& Set to current operating profit (e.g., EBITDA or operating cash flow) in chosen units; initial conditions used in simulations can be taken from firm data at the calibration date.
& Firm financial statements; accounting values. \\
\hline
\end{tabular}}
\label{calibration.tab}
\end{table}

In our model, the state variable $(X_t)$ is interpreted as  production capacity, assumed to be proportional to instantaneous profit with a one-to-one normalization and measured in monetary units.  For simplicity, investment ($\nu$) and abatement ($\eta$) are measured in the same ``production-capacity" units.   As a result, a convenient empirical normalization is to measure $(X_t)$ in, say, million EUR/year of EBITDA (or operating profit), $(\eta_t)$ in million EUR/year of decarbonization expenditure (opex+capex devoted to abatement), $(R_t)$ in cumulative million EUR of such spending, and $(\nu_t)$ in cumulative million EUR of investor capital injections. The variable $(R_t)$ is explicitly chosen as abatement expenditure (rather than direct emissions reductions) because it is comparatively easy to define and measure. Table \ref{calibration.tab} below summarizes the main parameters of the model and provides guidance on the calibratiobn of these parameters from observable quantities. 

\end{document}